\PassOptionsToPackage{dvipsnames,table}{xcolor}
\documentclass[a4paper,UKenglish,cleveref,autoref,thm-restate,numberwithinsect]{lipics-v2021}
\hideLIPIcs
\nolinenumbers
\csappto{ps@headings}{\csdef{@oddfoot}{}}
\usepackage{tikz}
\usetikzlibrary{calc,fit,decorations.pathmorphing,shapes,positioning,arrows.meta,shapes.geometric,patterns}

\usepackage[linesnumbered]{algorithm2e}

\RestyleAlgo{ruled}

\usepackage{enumitem}
\usepackage{booktabs}
\usepackage{capt-of}
\usepackage{xparse}
\usepackage{float}
\usepackage[braket]{mathtools}
\usepackage{mathrsfs}
\usepackage{mdframed}
\usepackage{stmaryrd}
\usepackage{csquotes}
\usepackage{etoolbox}

\sethlcolor{yellow}

\tikzset{
  >=Latex,
  rednode/.style={circle, draw=red, text=red, fill=white, minimum size=6mm, inner sep=0},
  bluenode/.style={circle, draw=blue, text=blue, fill=white, minimum size=6mm, inner sep=0},
  upnode/.style={circle, draw=black, text=black, fill=white, minimum size=8mm, inner sep=0},}

\AtEndEnvironment{example}{\hfill$\qed$}

\theoremstyle{plain}

\newcommand*{\sectionwiseHanchor}[1]{\csxdef{theH#1}{\noexpand\thesection.\noexpand\arabic{#1}}}
\forcsvlist{\sectionwiseHanchor}{theorem,lemma,corollary,proposition,exercise,definition,conjecture,observation,example,note,remark,claim}

\newcommand{\vars}{\mathsf{vars}}
\newcommand{\join}{\mathsf{join}}
\newcommand{\njoin}{\mathsf{njoin}}
\newcommand{\head}{\mathsf{head}}
\newcommand{\Q}{\mathbb{Q}}

\newcommand{\varout}{\mathsf{incomp}}
\newcommand{\finvals}[1]{\mathsf{finVals}(#1)}

\newcommand{\POS}[2]{\mathsf{pos}({#1,#2})}
\newcommand{\cqclass}{\mathsf{CQ}}
\newcommand{\nullclass}{\mathsf{CQ^{\perp}}}
\newcommand{\vcclass}{\mathsf{CQ^{vc}}}
\newcommand{\cmpclass}{\mathsf{CQ^{cmp}}}
\newcommand{\combclass}{\mathsf{CQ^{vc,\perp}}}
\newcommand{\cmpcombclass}{\mathsf{CQ^{cmp,\perp}}}
\newcommand{\Dcomb}{\mathcal{D}_{\mathrm{comb}}}
\newcommand{\nonnull}{\mathsf{nonnull}}
\newcommand{\valcnt}{\mathsf{val}}

\newcommand{\tox}[1]{\mathsf{match}(#1)}

\newcommand{\fulltox}[3]{\mathsf{match}(#1,#2,#3)}
\newcommand{\witset}{{\mathcal W}}
\newcommand{\feasible}[1]{\mathsf{eDom}(#1)}
\newcommand{\fullfeasible}[2]{\mathsf{eDom}_{#1}(#2)}

\newcommand{\body}{\mathsf{body}}
\newcommand{\compvc}[1]{\mathsf{comp}_{vc}(#1)}
\newcommand{\compvv}[1]{\mathsf{comp}_{vv}(#1)}

\newcommand{\allintervals}[1]{\mathsf{Inter}(#1)}

\newcommand{\Vval}[1]{V_{#1}}
\newcommand{\RG}{\mathcal{RG}}
\newcommand{\cmtox}[1]{\mathsf{cmatch}(#1)}
\newcommand{\cmctox}[1]{\mathsf{cmatch}^{-1}(#1)}

\newcommand{\compCG}{\mathsf{Comp}(\CG)}

\DeclarePairedDelimiterX{\set}[1]{\{}{\}}{#1}

\newcommand{\compCons}[1]{\mathsf{K}(#1)}

\newcommand{\CG}{\mathcal{CG}}
\newcommand{\OG}{\mathcal{OG}}
\newcommand{\actvars}[2]{\mathsf{vars}_{#1,#2}}

\newcommand{\repval}[2]{\mathsf{repvals}_{#1,#2}}
\newcommand{\rankmax}[2]{\mathsf{rmax}_{#2}(#1)}
\newcommand{\repvalx}[3]{\mathsf{repvals}_{#1,#2}(#3)}
\newcommand{\ceqord}{\prec}

\newcommand{\fullmcomp}[2]{\mathsf{matchComp}(#1,#2)}
\newcommand{\mcomp}{\mathsf{ matchComp}}
\newcommand{\fullrelcomp}[2]{\mathsf{relComp}(#1,#2)}

\newcommand{\forbidden}[2]{\mathsf{forbidden}_{#1}(#2)}

\newcommand{\cedge}{\mathsf{E_{cycle}}}
\newcommand{\mfac}{\mathsf{MFES}}

\newcommand{\myparagraph}[1]{\medskip\noindent\textbf{#1}}

\newcommand{\comp}{\mathrel{\lessdot}}
\newcommand{\altcomp}{\mathrel{\lessdot'}}
\renewcommand{\POS}[2]{\mathsf{POS}_{#1}(#2)}
\newcommand{\adom}{\mathrm{adom}}
\newcommand{\compvars}{\mathsf{compVars}}

\title{How Can We Shrink the Family of Test Databases? Query Containment with Nulls and Comparisons}
\titlerunning{Shrinking the Family of Test Databases}

\author{Helen Sternbach}{School of Computer Science and Engineering, The Hebrew University of Jerusalem, Israel}{helen.sternbach@mail.huji.ac.il}{https://orcid.org/0000-0002-6022-7982}{}
\author{Sara Cohen}{School of Computer Science and Engineering, The Hebrew University of Jerusalem, Israel}{scohen@mail.huji.ac.il}{https://orcid.org/0000-0002-8482-9435}{}

\authorrunning{H. Sternbach and S. Cohen}

\Copyright{Helen Sternbach and Sara Cohen}

\ccsdesc[100]{Theory of computation~Database query languages}

\keywords{conjunctive queries, query containment, query equivalence, null values, inequalities, canonical databases}

\funding{The authors were partially funded by the Israel Science Foundation (ISF), grant no.~359/21. H.~Sternbach was partially funded by the Ariane de Rothschild Women Doctoral Program.}

\AtBeginDocument{\hypersetup{pdfsubject={}}}

\EventEditors{}
\EventNoEds{2}
\EventLongTitle{}
\EventShortTitle{}
\EventAcronym{}
\EventYear{}
\EventDate{}
\EventLocation{}
\EventLogo{}
\SeriesVolume{}
\ArticleNo{}

\begin{document}

\maketitle

\begin{abstract}

Query containment and equivalence drive database query optimization and rewriting. For plain conjunctive queries, both are decided by evaluating one query over a single canonical database of the other. This classical test breaks down in two settings that pervade real queries: databases with null values under SQL's three-valued semantics, and queries with order comparisons. In both, deciding containment is $\Pi_2^p$-complete, and the known characterizations replace it by an exponential family of test databases, leaving no practical route to certifying equivalence.

\looseness=-1 We ask how the family of test databases can be shrunk. For conjunctive queries over databases with nulls, we shrink the family to one that is exponential only in a special set of variables, and place containment in NP when that set has constant size. For queries with comparisons, we construct canonical values that decide containment, and then shrink the family by decomposing the test into independent components and by splitting it along the order conflicts. Finally, we combine the two features, and prove that the number of test databases is fixed-parameter tractable in three \emph{local} parameters of the two queries, with the null-only and comparison-only tests as special cases. Each test evaluates the containing query over a family of databases, an operation native to any database system.
\end{abstract}

\section{Introduction}\label{sec:intro}

Query containment and equivalence are among the most fundamental problems in database
theory. Given queries $q,q'$, \emph{containment} asks whether $q(D) \subseteq q'(D)$ for
every database $D$, and \emph{equivalence} is containment in both directions. The problems drive query
optimization~\cite{chandra1977optimal,chaudhuri1993optimization} and query rewriting
over materialized views and integrated data
sources~\cite{levy1993queries,ullman1997information}. They have also found new urgency: in a recent
industrial study roughly a third of over $3{,}100$ LLM-proposed rewrites of production
queries changed their results~\cite{narasayya2026leveraging}, and text-to-SQL benchmarks
rank models by deciding equivalence to a reference query~\cite{kim2025flex, klopfenstein2026spotit,zhong2020semantic}.

\looseness=-1 For the core class of \emph{conjunctive queries} (CQs), Chandra and
Merlin~\cite{chandra1977optimal} characterized containment by the existence of a
containment mapping, or, equivalently: $q \subseteq q'$ if and only if $q'$ returns the
frozen head tuple over the \emph{canonical database} $D_q$, obtained from $q$ by freezing
each variable into a constant. Such a test is attractive in practice: it amounts to
evaluating $q'$ over concrete databases, an operation every database system performs
natively, so equivalence is certified without separate machinery for finding
homomorphisms.

\looseness=-1 This simple picture breaks down in two settings that pervade real queries. Databases
routinely contain \emph{null values}, which SQL interprets under a three-valued logic,
and queries routinely contain \emph{comparisons}. In both
settings, deciding containment is
$\Pi_2^p$-complete~\cite{farre2007containment,klug1988conjunctive,van1992complexity},
and a single canonical database no longer suffices. The known characterizations replace $D_q$ by
an exponential family: all \emph{null versions} of $D_q$ in the first setting, and
canonical databases for all orderings of the variables and constants in the second.
$\Pi_2^p$-completeness, tested through an exponential family, leaves no practical route
to certifying equivalence.

\myparagraph{Contributions.}
The central question of this paper is: \emph{how can we shrink the family of test
databases that decides containment?} For CQs over databases with nulls (Section~\ref{NullCQ}), we replace the
known exponential family by one that is exponential only in a special set of variables, and place containment in NP when that set has constant size. For queries with
comparisons (Section~\ref{IneCQ}), we give a family of canonical databases that decides
containment, and then shrink it twice. The test decomposes into independent parts, and it
splits along the order conflicts between the two queries. Finally, we combine the two features (Section~\ref{sec:combined}). The
constructions compose, because a comparison is never satisfied by a null, and the size of
the combined family is \emph{fixed-parameter tractable} in three local parameters of the
two queries. For each result we give the idea of its proof, deferring the full proofs to
Appendix~\ref{app:proofs}.

\section{Formal Framework}\label{sec:prelim}

\myparagraph{Databases and Queries.}
A \emph{schema} $\sigma$ is a finite set of relation symbols, each with a fixed
\emph{arity}.
A \emph{database} $D$ over $\sigma$ assigns to each  $R \in \sigma$
of arity $k$ a finite relation $R^D \subseteq (\mathbb{Q} \cup \{\perp\})^k$, where
$\mathbb{Q}$ denotes the rationals and $\perp$ is a distinguished \emph{null} symbol
representing an unknown value.
A database is \emph{ordinary} (or \emph{null-free}) if $R^D \subseteq \mathbb{Q}^k$
for every $R \in \sigma$.

A \emph{term} is either a variable or a constant from $\mathbb{Q}$.
A \emph{relational atom} over $\sigma$ has the form $R(t_1,\ldots,t_k)$ where $R \in \sigma$
has arity $k$ and each $t_i$ is a term.
A \emph{comparison atom} has the form $s \comp s'$ where $s, s'$ are terms and
$\comp \in \{<,\leq\}$. This is no restriction, since $>$ and $\geq$ are the mirror images
of $<$ and $\leq$, and an equality $s = s'$ is the pair $s \le s'$, $s' \le s$.

A \emph{query} $q$ has the form
$q(\bar{x}) \leftarrow R_1(\bar{t}_1),\ldots,R_n(\bar{t}_n),\;
  C_1,\ldots,C_k$
where each $R_i(\bar{t}_i)$ is a relational atom, each $C_j$ is a comparison atom,
and $\bar{x}$ is a tuple of variables called the \emph{head}.
The set $\{R_1(\bar{t}_1),\ldots,R_n(\bar{t}_n)\}$ is the \emph{relational body} of
$q$, and together with the comparison atoms it forms $\body(q)$.
We say $q$ is \emph{Boolean} when $\bar{x}$ is empty.
We write $\compvc{q}$ for the comparison atoms of $q$ with a
constant argument, and $\compvv{q}$ for those with two variables.

Throughout, without loss of generality, queries are \emph{normalized}: \textbf{(N1)}~no two distinct variables are
made equal by comparisons, and \textbf{(N2)}~every argument of a relational atom is a
variable, so constants occur only in comparison atoms. 
We use  $\cqclass$ for the class of  \emph{conjunctive queries}
(\emph{CQs}), which have no comparisons and are evaluated over ordinary
databases, $\nullclass$ for \emph{CQs with null values} without comparisons evaluated over databases that may contain $\perp$ and $\cmpclass$, which
allows arbitrary comparison atoms over ordinary databases and its subclass
$\vcclass \subseteq \cmpclass$, in which every comparison is between a variable and a constant.

 \myparagraph{Variable Classification.}
The \emph{variables} of $q$, written $\vars(q)$, are all variables appearing in the
query. Variables appearing in the head are the \emph{output variables}, $\mathsf{head}(q)$.
A variable $v \in \vars(q)$ is a \emph{join variable} if it occurs at least twice in
the relational atoms of $q$ (possibly twice in a single atom), and otherwise it is a
\emph{non-join variable}. We write $\join(q)$ and $\njoin(q)$ for these two sets.
The \emph{position set} $\POS{q}{v}$ of $v$ in $q$ is the set of pairs $(R,j)$ such that $v$ appears in position $j$ of some relational atom of $q$ over the relation $R$.
\begin{example}
\looseness=-1 For $q(x) \leftarrow R(x,y),\ R(z,y),\ S(z,w)$, the position sets are
$\POS{q}{x}=\{(R,1)\}$, $\POS{q}{y}=\{(R,2)\}$, $\POS{q}{z}=\{(R,1),(S,1)\}$, and
$\POS{q}{w}=\{(S,2)\}$. The variables $y$ and $z$ occur twice and are join variables,
with $y$ recurring in a single position and $z$ across two, while $x,w\in\njoin(q)$.
\end{example}

\myparagraph{Query Evaluation.}
A \emph{valuation} for $q$ over database $D$ is a function
$\mu\colon \vars(q) \to \mathbb{Q} \cup \{\perp\}$,
extended to terms by $\mu(c) = c$ for every constant $c \in \mathbb{Q}$.
The valuation $\mu$ \emph{satisfies} a relational atom $R(t_1,\ldots,t_k)$ in $D$ if
$(\mu(t_1),\ldots,\mu(t_k)) \in R^D$, and it \emph{satisfies} a comparison atom
$s \comp s'$ if $\mu(s), \mu(s') \in \mathbb{Q}$ and $\mu(s) \comp \mu(s')$ holds. If either argument is $\perp$ the comparison atom is not satisfied, following SQL's
three-valued logic. We say $\mu$ is a
\emph{satisfying valuation} for $q$ in $D$ if it satisfies every atom in $\body(q)$
and maps every join variable of $q$ to a non-$\perp$ value. The last requirement reflects that two occurrences of a null are never equal under SQL
semantics, and is vacuous over null-free databases.
The \emph{result} of $q(\bar{x})$ on $D$ is
$q(D) := \{\,\mu(\bar{x}) \mid \mu \text{ is a satisfying valuation for } q \text{ in } D\,\}$.
The output tuple $\mu(\bar{x})$ may itself contain nulls if a head variable is mapped
to $\perp$ by a relational atom.

\myparagraph{Canonical Database and Containment.}
\looseness=-1 The \emph{canonical database} $D_q$ of a query $q$ is the ordinary database obtained
by treating each variable as a distinct constant (``freezing'' it): $D_q$ contains
the tuple $\iota(\bar{t}_i)$ in relation $R_i$ for each relational atom
$R_i(\bar{t}_i)$ in $\body(q)$, where $\iota$ is the identity on variables viewed
as constants.

Let $q$ and $q'$ be queries of the same output arity evaluated over a class
$\mathcal{D}$ of databases. We say $q$ is \emph{contained} in $q'$ w.r.t.\
$\mathcal{D}$, written $q \subseteq_{\mathcal{D}} q'$, if $q(D) \subseteq q'(D)$ for
all $D \in \mathcal{D}$, and the queries are \emph{equivalent}, $q \equiv_{\mathcal{D}} q'$,
if containment holds in both directions.
When $\mathcal{D}$ is all ordinary databases we write $q \subseteq q'$. When $\mathcal{D}$ includes databases with nulls we write
$q \subseteq_\perp q'$.

For plain CQs, containment is characterized by \emph{containment mappings}~\cite{chandra1977optimal}:
$q \subseteq q'$ iff there exists a homomorphism from the relational body of $q'$ to
that of $q$ mapping $\mathsf{head}(q')$ to $\mathsf{head}(q)$, or equivalently, iff
$q'$ returns $\bar{x}$ on $D_q$.
Both null-containment and containment for $\cmpclass$ are $\Pi_2^P$-complete
\cite{farre2007containment,klug1988conjunctive,van1992complexity}
and cannot be decided by evaluating $q'$ on $D_q$ alone.

\section{Null Conjunctive Queries}\label{NullCQ}

Null conjunctive queries ($\nullclass$) are conjunctive queries evaluated, under the SQL
three-valued semantics of Section~\ref{sec:prelim}, over databases that may contain null
values.

\begin{example}\label{example:null-containment}
A genealogy stores each person's (possibly \emph{unknown}) parent and birth and death years in
$\mathrm{Person}(\mathit{parent},\mathit{child},\mathit{childBirthYr},\mathit{childDeathYr})$. Consider the queries
\[
  \begin{aligned}
    q (p_1,p_2) &\leftarrow \mathrm{Person}(p_1,p_2,b_2,d_2),\ \mathrm{Person}(p_2,p_3,b_3,d_3),\\
    q'(p_1,p_2) &\leftarrow \mathrm{Person}(p_1,p_2,b_2,d_2),\ \mathrm{Person}(p_1,p_3,b_3,d_3).
  \end{aligned}
\]
So $q$ returns grandparent--parent pairs, while $q'$ pairs a parent with a child when it has
another. Over complete databases every pair returned by $q$ is also returned by $q'$, so
$q \subseteq q'$. With nulls it fails: $q$ returns $(\perp,\mathrm{Alice})$
over $\{$\mbox{$\mathrm{Person}(\perp,\mathrm{Alice},1920,1980)$}, \mbox{$\mathrm{Person}(\mathrm{Alice},\mathrm{Bob},1940,2000)$}$\}$,
where Alice's parent is unknown, while $q'$ cannot return this tuple.
\end{example}

Deciding \emph{null-containment} is $\Pi_2^P$-complete in general and
NP-complete for Boolean queries~\cite{farre2007containment}.
\looseness=-1 We focus on characterizing containment via canonical databases, rather than homomorphisms (see Section~\ref{sec:related} for homomorphism-based conditions). Recall $D_q$ from
Section~\ref{sec:prelim}. A \emph{null version} of $D_q$ replaces some of the frozen
non-join variables of $q$ by $\perp$. For a set $N \subseteq \njoin(q)$, let $\theta_N$ be the
substitution that maps the variables of $N$ to $\perp$ and fixes every other variables. We write
$\theta_N D_q$ for the resulting database. Farr\'e et al.\ showed the following result.

\begin{theorem}[\cite{farre2007containment}]\label{thm:null-farre}
Let $q(\bar{x}), q'(\bar{x})$ be $\nullclass$ queries. Then $q \subseteq_\perp q'$
if and only if, for every $N \subseteq \njoin(q)$, the query $q'$ returns the tuple
$\theta_N \bar{x}$ over $\theta_N D_q$.
\end{theorem}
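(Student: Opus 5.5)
The \emph{only if} direction is immediate. The plan is therefore to prove the \emph{if} direction by a homomorphism-composition argument.

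\textbf{Only if.} Fix $N \subseteq \njoin(q)$. The identity valuation $\iota$ composed with $\theta_N$ is a satisfying valuation for $q$ over $\theta_N D_q$:
\begin{itemize}
\item every relational atom of $q$ is mapped to a tuple of $\theta_N D_q$ by construction;
\item $q$ has no comparisons;
\item no join variable is mapped to $\perp$, because $N$ contains only non-join variables.
\end{itemize}
Hence $\theta_N \bar{x} \in q(\theta_N D_q)$. Containment $q \subseteq_\perp q'$ then gives $\theta_N \bar{x} \in q'(\theta_N D_q)$.

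\textbf{If.} Let $D$ be any database, possibly with nulls, and let $\mu$ be a satisfying valuation of $q$ in $D$ producing the tuple $\mu(\bar{x})$. Set $N = \{v \in \vars(q) \mid \mu(v) = \perp\}$. Since $\mu$ maps every join variable to a non-$\perp$ value, $N \subseteq \njoin(q)$.

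\begin{enumerate}
\item Define the map $h$ on the values of $\theta_N D_q$ by $h(v) = \mu(v)$ for each frozen variable $v \notin N$, and $h(\perp) = \perp$. Then $h$ maps each tuple of $\theta_N D_q$ into $D$, because it agrees with $\mu$ on every atom of $q$. Moreover, $h(\theta_N \bar{x}) = \mu(\bar{x})$.
\item By hypothesis there is a satisfying valuation $\nu$ for $q'$ over $\theta_N D_q$ with $\nu(\bar{x}') = \theta_N \bar{x}$.
\item Consider $\mu' = h \circ \nu$. It satisfies every relational atom of $q'$ in $D$, and $\mu'(\bar{x}') = \mu(\bar{x})$.
\end{enumerate}

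The step needing care is that $\mu'$ must map every join variable of $q'$ to a non-$\perp$ value. Since $\nu$ is satisfying, $\nu(w) \neq \perp$ for every $w \in \join(q')$. So $\nu(w)$ is a frozen variable $v \notin N$, and $h(v) = \mu(v) \neq \perp$ by the definition of $N$. Hence $\mu'(w) \neq \perp$.

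The point to check explicitly is that $h$ is well defined on the constant $\perp$. This is why $\perp$ must be mapped to $\perp$ rather than to anything else. It is also why nulls produced by $\nu$ can appear only at non-join variables of $q'$, and there they are harmless.

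Thus $\mu(\bar{x}) \in q'(D)$, and so $q \subseteq_\perp q'$. The main obstacle is this null-preservation bookkeeping in the composition; once $N$ is chosen as the exact null set of $\mu$, everything else is routine.
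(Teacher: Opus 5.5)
Your proof is correct, but there is no in-paper proof to compare it with: the paper does not prove this statement. It imports it from Farr\'e et al.\ and uses it as a black box in the proof of Theorem~\ref{thm:null-tight}.

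Your \emph{if} direction is the standard composition through the canonical database. It is the same pattern the paper uses for Theorems~\ref{thm:vc-criterion} and~\ref{thm:comb-criterion}, where a valuation of $q'$ over $\theta(D_q)$ is pulled back to $D$ through $\nu\circ\theta^{-1}$.

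The decisive choice is taking $N$ to be exactly the null set of $\mu$. Then $h=\mu\circ\theta_N^{-1}$ is well defined, because the frozen values are pairwise distinct and every preimage of $\perp$ is a variable that $\mu$ also sends to $\perp$. The join-variable and head conditions then transfer with no further work.

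Contrast this with the combined criterion:
\begin{itemize}
\item There the test databases null $V_n$ regardless of $\nu$.
\item They also give private non-null values (rule (b)) to $V_f$-variables that $\nu$ nulls.
\item So $\theta^{-1}(\perp)$ no longer determines a $\nu$-value.
\end{itemize}
That is why the paper needs its positional rule (2) for variables sent to $\perp$, a coverage argument at head positions, and a separate argument that no variable of $\nonnull(q')$ lands on a rule-(b) value.

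Your argument is simple precisely because it ranges over all $2^{|\njoin(q)|}$ null patterns. The extra work in the paper is the price of shrinking that family.

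One minor point you use implicitly: every variable of $q'$ occurs in a relational atom, by safety. Hence $\nu$ takes values in $\adom(\theta_N D_q)$ and $h\circ\nu$ is defined everywhere. Alternatively, you can simply define $h$ on all frozen variables of $q$.
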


Since a null version may set \emph{any} subset of the non-join variables to $\perp$,
Theorem~\ref{thm:null-farre} ranges over $2^{|\njoin(q)|}$ databases. We show
that it suffices to toggle only a small subset of them.

\looseness=-1 Throughout this section $q$ and $q'$ are fixed. Everything turns on how the non-join
variables of $q$ relate to the head of $q'$. Call a non-join variable $v$ of $q$
\emph{covered} when it occupies the same positions as some head variable of $q'$, that
is, $\POS{q}{v} = \POS{q'}{x'}$ for some $x' \in \head(q')$, and let $V_{\mathrm C}$ be
the set of covered variables. Among these, $V_{\mathrm{CJ}}$ contains the variables $v$
whose position set also satisfies $\POS{q}{v} = \POS{q'}{y'}$ for some join variable
$y'$ of $q'$, and $V_{\mathrm{CN}} = V_{\mathrm C} - V_{\mathrm{CJ}}$ contains the rest.
Note that since $v$ is a non-join variable, $|\POS{q}{v}| = 1$, so both equalities are
equivalent to the containments $\POS{q'}{x'} \subseteq \POS{q}{v}$ and
$\POS{q'}{y'} \subseteq \POS{q}{v}$.

When creating a canonical database from $q$, we must decide which
variables to freeze and which to null. Join variables must be frozen, so that $q$
returns a result over the canonical database. For a covered variable there is a
duality: nulling it may prevent $q'$ from returning a result, as join variables of $q'$
cannot be mapped to nulls, but nulling it may also enable $q'$ to return a result with
$\perp$ in the output, by mapping a head variable of $q'$ to it. Given this duality, a
variable of $V_{\mathrm{CJ}}$ must be toggled between frozen and null. A variable of
$V_{\mathrm{CN}}$ can safely be frozen, as it never prevents $q'$ from returning a
result, and the remaining non-join, non-head variables can safely be nulled, as $q'$
cannot use them to produce a result with $\perp$ in the output.

Formally, we freeze the variables of
$V_f = \join(q) \cup V_{\mathrm{CN}} \cup (\head(q)-V_{\mathrm{CJ}})$, toggle the
variables of $V_t = V_{\mathrm{CJ}}$, and null the remaining variables, denoted $V_n$.
For a subset $V\subseteq V_t$, let $\theta_{V\cup V_n}$ be the mapping that nulls all
variables in $V\cup V_n$ and freezes every other variable of $q$.

\begin{theorem}\label{thm:null-tight}
$q \subseteq_\perp q'$ if and only if $q'$ returns $\theta_{V\cup V_n} \bar{x}$ over $\theta_{V\cup V_n} D_q$ for every
$V \subseteq V_t$.
\end{theorem}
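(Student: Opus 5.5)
The plan is to derive Theorem~\ref{thm:null-tight} from Theorem~\ref{thm:null-farre} by a transfer argument between null versions. The \emph{only if} direction is immediate. Every $V\cup V_n$ with $V\subseteq V_t$ is a subset of $\njoin(q)$, because $V_t=V_{\mathrm{CJ}}$ and $V_n$ both avoid $\join(q)$. So each test of Theorem~\ref{thm:null-tight} is one of the tests of Theorem~\ref{thm:null-farre}. For the \emph{if} direction, I fix an arbitrary $N\subseteq\njoin(q)$ and put $V=N\cap V_t$ and $M=V\cup V_n$. By hypothesis there is a satisfying valuation $\mu'$ of $q'$ over $\theta_M D_q$ with $\mu'(\bar{x}')=\theta_M\bar{x}$. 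The goal is to transform $\mu'$ into a satisfying valuation $\nu$ of $q'$ over $\theta_N D_q$ with $\nu(\bar{x}')=\theta_N\bar{x}$; Theorem~\ref{thm:null-farre} then gives $q\subseteq_\perp q'$.

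The two databases differ exactly on the cells of variables in $M\setminus N\subseteq V_n$, which are nulled in $M$ and frozen in $N$, and on the cells of variables in $N\setminus M$, which are frozen in $M$ and nulled in $N$. Since $N\setminus M$ avoids $V_t$ and $V_n$, it is contained in $V_{\mathrm{CN}}\cup(\head(q)\cap\njoin(q)\setminus V_{\mathrm C})$. The basic tool is a positional observation. Each such $v$ is a non-join variable of $q$, so it occupies exactly one cell $(R,j)$ of $D_q$. Suppose a variable $w$ of $q'$ is mapped by $\mu'$ into the cell of $v$ by every occurrence of $w$. This holds automatically if $w$ is non-join, and also if $\mu'(w)$ is the frozen constant $v$, which occurs only in that cell. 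Then $\POS{q'}{w}=\{(R,j)\}=\POS{q}{v}$.

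I define $\nu$ from $\mu'$ by changing only the values of $q'$-variables that $\mu'$ sends into the changed cells. For $v\in M\setminus N$, any $w$ with $\mu'(w)$ read from $v$'s cell satisfies $\mu'(w)=\perp$, so $w$ is non-join. Then $w$ occurs once and can be reassigned the constant $v$ consistently. If such a $w$ were a head variable of $q'$, the observation would give $\POS{q}{v}=\POS{q'}{w}$ with $w\in\head(q')$, so $v$ would be covered, contradicting $v\in V_n$. Hence the output is unaffected. For $v\in N\setminus M$, set $\nu(w)=\perp$ for every $w$ with $\mu'(w)=v$. Such a $w$ must be non-join, or else the observation yields a join variable $w$ with $\POS{q'}{w}=\POS{q}{v}$. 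If $v\in V_{\mathrm{CN}}$, this contradicts the definition of $V_{\mathrm{CN}}$. If $v$ is an uncovered head variable, the output coordinate of $q'$ carrying value $v$ is a head variable $x'$ of $q'$ sitting at $v$'s cell, so $v$ would be covered, a contradiction. With every affected variable non-join, $\nu$ keeps join variables non-null and keeps all relational atoms satisfied. There are no comparisons in $\nullclass$.

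The step I expect to be the main obstacle is checking that the head tuple is exactly $\theta_N\bar{x}$. Head coordinates of $q$ lying in $N\setminus M$ must turn into $\perp$. This works because the head variable of $q'$ in that coordinate took the frozen value $v$ under $\mu'$, and it is now nulled, which the argument above permits. No other output coordinate can change: the value $\perp$ occurring in $\mu'(\bar{x}')$ for variables in $V\subseteq N\cap M$ came from cells that are still null, and the cells of $V_n$ were shown never to feed head variables. I would carry out this bookkeeping coordinate by coordinate, using that distinct frozen variables are distinct constants.
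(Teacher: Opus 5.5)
Your proposal is correct and is essentially the paper's argument: the same reduction to Theorem~\ref{thm:null-farre} with $V=N\cap V_t$ and $M=V\cup V_n$. Your edited valuation $\nu$ is exactly $\theta_N\circ\phi$ for the paper's matching homomorphism $\phi$ read off from $\mu'$, and your case checks mirror its verification of (h1) and (h2). The one thing to make explicit is how ``the cell'' of a $\perp$-valued variable is determined: first fix, for each atom of $q'$, a single preimage atom of $q$, as the paper does. Distinct atoms of $q$ can collapse to the same tuple of $\theta_M D_q$, so choosing cells variable by variable could assemble a tuple that is not in $\theta_N D_q$.
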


The proof recasts the test of Theorem~\ref{thm:null-farre} as the existence of a homomorphism from the relational body of $q'$ to that of $q$ that avoids the nulled variables on join variables and matches the head up to nulling. It then shows that the homomorphism witnessing the toggled database with $V=N\cap V_t$ also witnesses the null version of an arbitrary $N\subseteq\njoin(q)$, so the $2^{|V_t|}$ toggled databases subsume all $2^{|\njoin(q)|}$ null versions.

\begin{corollary}\label{cor:single-db}
If $|V_t|$ has constant size, null containment is in NP. Furthermore, if $|V_t|$ has constant size and 
$q'$ admits polynomial evaluation, then null containment is in P. 
\end{corollary}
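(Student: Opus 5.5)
The plan is to read Theorem~\ref{thm:null-tight} as a reduction of the complement problem (non-containment) to a constant number of ordinary query evaluations. Assume $|V_t| \le k$ for a constant $k$. The sets $V_f$, $V_t$, $V_n$ are determined by simple syntactic checks: join versus non-join status, the position sets $\POS{q}{v}$ and $\POS{q'}{x'}$, and whether a position set of $q$ equals that of a head or join variable of $q'$. All of these take polynomial time in $|q|+|q'|$. For each of the at most $2^k$ subsets $V \subseteq V_t$, we can then build the database $\theta_{V\cup V_n} D_q$ and the tuple $\theta_{V\cup V_n}\bar{x}$ in polynomial time. Each database has at most as many tuples as $q$ has atoms.

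For NP membership, the first step is to note that, by Theorem~\ref{thm:null-tight}, containment holds iff for every $V\subseteq V_t$ there is a satisfying valuation $\mu_V$ of $q'$ over $\theta_{V\cup V_n}D_q$. Such a valuation must map the join variables of $q'$ to non-$\perp$ values and satisfy $\mu_V(\bar{x}')=\theta_{V\cup V_n}\bar{x}$. The certificate is the tuple $(\mu_V)_{V\subseteq V_t}$. Each $\mu_V$ maps $\vars(q')$ into the active domain of a database of size $O(|q|)$, so it has polynomial size. Because there are only $2^k$ of them, the whole certificate is still polynomial. The verifier checks every $\mu_V$ against the semantics of Section~\ref{sec:prelim}, atom by atom:
\begin{itemize}
\item each relational atom maps to a tuple of $\theta_{V\cup V_n}D_q$;
\item no join variable of $q'$ is mapped to $\perp$;
\item the head is mapped to the target tuple.
\end{itemize}
Since there are no comparisons in $\nullclass$, nothing else needs checking. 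This places containment in NP.

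For the P claim, replace the guessing step by direct evaluation. Compute $q'(\theta_{V\cup V_n}D_q)$, or just test membership of $\theta_{V\cup V_n}\bar{x}$, for each of the $2^k$ databases. With polynomial-time evaluation of $q'$ this is polynomial, and containment holds iff all $2^k$ tests succeed. One point needs a sentence of care. ``Polynomial evaluation'' must be read under the null semantics, that is, with join variables forbidden from $\perp$. This is a harmless restriction: it amounts to evaluating $q'$ with $\perp$ treated as a value that no join variable may take. Equivalently, one can remove the $\perp$-tuples from the candidate set for positions of join variables before running the standard evaluation. Tractable classes such as acyclic or bounded-treewidth queries remain tractable under this modification.

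I expect no real obstacle, since the substance lies in Theorem~\ref{thm:null-tight}. The only step needing attention is the size bound: the number of test databases is $2^{|V_t|}$, and it stays polynomial precisely because $|V_t|$ is constant. The same argument would even tolerate $|V_t| = O(\log(|q|+|q'|))$.
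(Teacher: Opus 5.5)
Your proof is correct and takes the route the paper intends. Theorem~\ref{thm:null-tight} leaves only $2^{|V_t|}$ polynomial-size test databases, each membership test is in NP by guessing a valuation of $q'$ (and in P under polynomial evaluation), and a conjunction of constantly many such tests stays in NP (resp.\ P); this is the argument the paper spells out for the analogous Corollary~\ref{cor:fpt-full}, and your two additions — reading ``polynomial evaluation'' under the null semantics by filtering $\perp$ at join positions, and noting that $|V_t|=O(\log(|q|+|q'|))$ already suffices — are both sound.
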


For example, if $q$ and $q'$ are Boolean, then $V_t$ is empty, so
Theorem~\ref{thm:null-tight} extends the previously known NP upper bound for the
Boolean case to a wider class of queries. Moreover, if additionally $q'$ is
acyclic~\cite{yannakakis1981algorithms}, then null containment is in P.

\begin{example}\label{ex:null-savings}
Recall Example~\ref{example:null-containment}. The query $q$ has six non-join
variables, so Theorem~\ref{thm:null-farre} considers $2^6 = 64$ null versions of $D_q$.
The covered variables are $p_1$ and $p_3$, with $V_{\mathrm{CJ}} = \{p_1\}$ and
$V_{\mathrm{CN}} = \{p_3\}$, giving $V_f = \{p_2, p_3\}$, $V_t = \{p_1\}$, and
$V_n = \{b_2, d_2, b_3, d_3\}$. Theorem~\ref{thm:null-tight} tests only two databases:
$D_1$, which nulls $V_n$ and freezes all other variables ($V = \emptyset$), and $D_2$,
which also nulls $p_1$ ($V=\{p_1\}$). Indeed, $D_2$ shows non-containment.
\end{example}

\section{Conjunctive Queries with Inequalities}\label{IneCQ}
\looseness=-1 We now turn to conjunctive queries with comparison atoms, evaluated over null-free databases. Sections~\ref{sec:matching-witness} and~\ref{sec:vc-components} develop the containment test for the variable--constant class $\vcclass$, and Sections~\ref{sec:cmp} and~\ref{sec:addable} extend it to the full class $\cmpclass$.

\subsection{Witness Sets and the Containment Criterion}\label{sec:matching-witness}
Recall the classes $\cmpclass$ (queries
with comparison atoms) and $\vcclass\subseteq\cmpclass$ (whose comparisons are all
variable--constant), and the position set $\POS q v$. The \emph{effective domain}
$\fullfeasible q v$ of a variable $v$ is the set of values in $\Q$ satisfying all
comparison atoms of $q$ involving $v$, written $\feasible v$ when $q$ is clear. For a
variable with no comparisons, $\fullfeasible q v = \Q$. Since the comparison operators
are $<$ and $\le$, every effective domain is an interval, possibly unbounded.
We call such subsets of $\Q$ \emph{subdomains}.

\looseness=-1 Given $q$ and $q'$, we seek the variables of $q'$ that can correspond to
those of $q$. We say that a variable $y\in \vars (q')$ {\em matches\/} $x\in \vars(q)$ if (1)~$\POS {q'} y \subseteq \POS q x$, and (2)~$\feasible y\cap \feasible x \neq \emptyset$. We write
$\fulltox{q'}q x$, or simply $\tox x$, for the set of variables of $q'$ that match
$x$.

Given a subdomain $E$ of $\Q$ and a set of variables $V\subseteq \vars(q)$, we will say that $V$ is {\em incompatible\/}  with $E$ written $V\nvDash E$ if $\feasible v\cap E = \emptyset$, for all $v\in V$. Given a set of variables $V\subseteq \vars(q)$ such that some are incompatible with $E$, while others are not, we use $\varout(E,V)$ to denote the maximal subset $V'$ of $V$ such that $V'\nvDash E$.

\begin{example}[Running example]\label{exm:vc-running}
Throughout this subsection, consider a bank database with the relations
$\mathrm{Trans}(\mathit{acct},\mathit{amount},\mathit{hour},\mathit{fee})$,
$\mathrm{Account}(\mathit{acct},\mathit{balance},\mathit{rate})$, and
$\mathrm{Flagged}(\mathit{acct})$, which lists the accounts flagged for review.
Account numbers up to $50$ are reserved for internal accounts, and first-branch accounts lie strictly between $100$ and $200$. Consider the Boolean $\vcclass$ queries
\[
\begin{aligned}
q() \leftarrow\ & \mathrm{Trans}(a,m,h,e),\ \mathrm{Account}(a,b,r),\ \mathrm{Flagged}(a),\ 100<a<200,\ m\ge 1000\\
q'() \leftarrow\ & \mathrm{Trans}(a',m',h',e'),\ \mathrm{Account}(a',b',r'),\ \mathrm{Flagged}(a'),\ \mathrm{Flagged}(u),\\
& 150\le a'\le 400,\ 500<m'<2000,\ u\le 50
\end{aligned}
\]
Query $q$ asks whether some flagged account of the first branch made a transaction of at least $1000$. Query $q'$ asks whether some flagged account numbered between $150$ and $400$ made a transaction of an amount between $500$ and $2000$, and some internal account is flagged. Neither query constrains the hour, fee, balance, and rate variables, so their effective domains are $\Q$. The remaining effective domains are $\fullfeasible q a=(100,200)$ and $\fullfeasible q m=[1000,\infty)$ in $q$, and $\fullfeasible{q'}{a'}=[150,400]$, $\fullfeasible{q'}{m'}=(500,2000)$, and $\fullfeasible{q'}{u}=(-\infty,50]$ in $q'$.
Both $\POS{q'}{a'}$ and $\POS{q'}{u}$ are subsets of $\POS q a=\set{(\mathrm{Trans},1),(\mathrm{Account},1),(\mathrm{Flagged},1)}$. Since $\fullfeasible{q'}{a'}\cap \fullfeasible q a=[150,200)\neq\emptyset$, the variable $a'$ matches $a$. In contrast, $\fullfeasible{q'}{u}\cap \fullfeasible q a=\emptyset$: an internal account cannot play the role of the branch account of $q$, so $u$ does not match $a$. Hence $\tox a=\set{a'}$, and each other variable of $q$ is matched exactly by its primed counterpart. For $\varout$, take $E=[200,\infty)$. Only $\feasible a$ is disjoint from $E$, so $\varout(E,\vars(q))=\set{a}$.
\end{example}

\looseness=-1 For any set $S\subseteq \tox x$, write
$E_x(S) = \bigcap_{y\in S}\bigl(\fullfeasible q x \setminus \fullfeasible {q'} y \bigr)$
for the subdomain it induces. We say that $S$ is {\em contradictable by\/} $x$ if $E_x(S)\neq\emptyset$, and {\em infinite-contradictable by\/} $x$ if $E_x(S)$ is infinite. If $S$ is maximal among the infinite-contradictable subsets of $\tox x$, then $S$ is {\em maximally infinite-contradictable by\/} $x$. Several sets may be maximally infinite-contradictable by $x$, as Example~\ref{exm:witness-running} shows.
Let $\finvals x$ be the set of values of $\feasible x$ that lie in $E_x(S)$ for some $S\subseteq\tox x$ with $E_x(S)$ finite. A finite $E_x(S)$ may contain more than one value, for example the two endpoints of $\feasible x$.

\looseness=-1 The {\em witness set\/} $\witset_x$ of $x$ collects the subdomains $E_x(S)$ for the sets $S\subseteq\tox x$ that are maximally infinite-contradictable by $x$, together with the singleton $\set c$ for every $c\in\finvals x$.

\begin{example}[Witness sets]\label{exm:witness-running}
Over the bank schema, consider the queries
\[
\begin{aligned}
q() &\leftarrow \mathrm{Flagged}(a),\ 100 \le a \le 200\\
q'() &\leftarrow \mathrm{Flagged}(u_1),\ \mathrm{Flagged}(u_2),\ \mathrm{Flagged}(u_3),\ u_1 < 150,\ u_2 > 150,\ u_3 \ge 120
\end{aligned}
\]
We have $\tox a = \set{u_1,u_2,u_3}$, $\fullfeasible q a = [100, 200]$, and $E_a(\set{u_1}) = [150, 200]$, $E_a(\set{u_2}) = [100, 150]$, and $E_a(\set{u_3}) = [100, 120)$. Exactly two sets are maximally infinite-contradictable by $a$: the set $\set{u_1}$, since adding $u_2$ leaves the finite set $\set{150}$ and adding $u_3$ leaves the empty set, and the set $\set{u_2,u_3}$, with witness $[100,120)$. The only finite $E_a(S)$ is $E_a(\set{u_1,u_2})=\set{150}$, so $\finvals a=\set{150}$ and
$\witset_a = \set{\, [150,200],\ [100, 120),\ \set{150} \,}.$
\end{example}

A larger witness-set computation is given in Example~\ref{exm:witness-large} in Appendix~\ref{app:proofs-matching-witness}.
We now show a bound on the size of the witness set for any given variable.
\begin{proposition}[Upper bound on witness-set size]\label{prop:witness_set_size}
Let $q$ and $q'$ be queries, and let $x$ be a variable in $\vars(q)$.
Then, $|\witset_x| \ \le\ 2\,|\tox{x}|+1\,.$
\end{proposition}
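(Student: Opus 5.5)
The plan is to linearize the problem along the interval $F=\fullfeasible q x$. Write $k=|\tox x|$, and for a point $p\in F$ let $S_p=\set{y\in\tox x : p\notin\fullfeasible{q'}{y}}$. Then $p\in E_x(S)$ iff $S\subseteq S_p$. So $E_x(S)$ is exactly the set of points $p\in F$ with $S\subseteq S_p$, and everything is governed by how $S_p$ varies as $p$ moves through $F$.

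\textbf{Partition of $F$.} Let $c_1<\dots<c_m$ be the distinct endpoints of the intervals $\fullfeasible{q'}{y}$, $y\in\tox x$, that lie in $F$. Since each such interval has at most two endpoints, $m\le 2k$. These cut points split $F$ into the points $c_i$ and the nonempty open \emph{segments} between consecutive cut points (or between a cut point and the boundary of $F$). There are at most $m+1-b$ segments, where $b$ is the number of cut points that are endpoints of $F$ itself. On each segment $I$ the set $S_p$ is constant; call it $S_I$. A segment has positive length, so it is infinite over $\Q$.

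\textbf{Characterizing the infinite and finite cases.} If $E_x(S)$ is infinite, it contains two distinct points. These cannot both be cut points unless a segment lies between them, so $E_x(S)$ meets some segment $I$; then $S\subseteq S_I$ and $I\subseteq E_x(S)$. Conversely, $S\subseteq S_I$ implies $E_x(S)$ is infinite. Hence $S$ is infinite-contradictable iff $S\subseteq S_I$ for some segment $I$. The maximal such sets are therefore among the $S_I$, and their number is at most the number of segments.

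For $\finvals x$, first suppose $c$ lies in a segment $I$ and $c\in E_x(S)$. Then $S\subseteq S_I$, so $E_x(S)\supseteq I$ is infinite, and $c$ cannot witness a finite set. Thus $\finvals x$ consists only of cut points. Moreover, a cut point $c$ lies in $\finvals x$ iff $E_x(S_c)$ is finite, since any $S$ with $c\in E_x(S)$ satisfies $S\subseteq S_c$ and hence $E_x(S)\supseteq E_x(S_c)$. Finiteness of $E_x(S_c)$ means $S_c\not\subseteq S_I$ for each segment $I$ adjacent to $c$.

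\textbf{Charging argument.} Take an interior cut point $c\in\finvals x$ with adjacent segments $I$ on the left and $J$ on the right.
\begin{itemize}
\item Some $y\in S_c\setminus S_I$ exists: $y$ excludes $c$ but contains the points just left of $c$, so $\fullfeasible{q'}{y}$ has an open right endpoint at $c$.
\item Some $y'\in S_c\setminus S_J$ exists: $\fullfeasible{q'}{y'}$ has an open left endpoint at $c$.
\item Necessarily $y\ne y'$, since no interval can have $c$ as both an open left and an open right endpoint while containing points on both sides of $c$.
\end{itemize}
So $c$ absorbs at least two endpoint occurrences. A boundary cut point in $\finvals x$ absorbs at least one occurrence. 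With $f_{\mathrm{int}}$ interior and $f_b\le b$ boundary finite values, counting endpoint occurrences gives $m+f_{\mathrm{int}}\le 2k$. Therefore
\[
|\witset_x|\le (m+1-b)+f_{\mathrm{int}}+f_b\le m+f_{\mathrm{int}}+1\le 2k+1 .
\]
The case where $F$ is a single point or empty is handled trivially: there are no segments, and there is at most one finite value.

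The main obstacle is this charging step, namely showing that an interior finite value forces two \emph{distinct} endpoint occurrences. Getting the open/closed endpoint cases right there, and handling cut points at the boundary of $F$, is where the argument needs care.
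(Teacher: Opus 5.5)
Your argument is correct, but it reaches the bound by a different route than the paper.

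\textbf{The paper's route.} It partitions $\feasible x$ into \emph{cells}, the maximal intervals on which $\varout(p)$ (your $S_p$) is constant. It bounds their number by $2|\tox x|+1$ by counting transitions: by convexity, each matched domain changes $\varout$ once on entry and once on exit. It then injects $\witset_x$ into the cells. Each $c\in\finvals x$ is forced to be a point cell, and each maximally infinite-contradictable $S$ coincides with $\varout$ on an interval cell.

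\textbf{Your route.} You cut $F$ at every endpoint value, whether or not $S_p$ actually changes there. Your pieces can therefore outnumber $2k+1$, and the charging lemma is what pays for the interior finite values.

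\textbf{Comparison.} In the paper this bookkeeping is absorbed by the transition count, since an interior point cell is bounded by two transitions. Its argument is shorter and needs no open/closed or boundary case analysis. Yours makes explicit what the phrase ``each $y$ changes $\varout$ at most twice'' leaves implicit. It also gives a structural by-product: when $F$ is not a single point, every finite value sits at an open endpoint of a matched effective domain. This explains why the all-non-strict instance of Example~\ref{exm:witness-large} has no finite witness.

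\textbf{Two small repairs.} First, an infinite $E_x(S)$ meets a segment simply because it cannot lie inside the finite set of cut points. Your remark about a segment lying between two points does not show this, since that segment need not lie in $E_x(S)$. Second, an end segment need not be open when $F$ is closed at an endpoint that is not a cut point. This is harmless, because $S_p$ is still constant on it.
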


\looseness=-1 The proof splits $\feasible x$ into maximal segments on which the set of incompatible matched variables is constant. There are at most $2|\tox{x}|+1$ such segments, since each variable of $\tox x$ contributes two interval endpoints, and every witness occupies a segment of its own.

To determine containment of $q$ in $q'$, we create a set of {\em canonical databases\/} out of {\em canonical values\/} for the variables of $q$.
For each $x\in\vars(q)$ and each witness $E\in\witset_x$, we define a constant $c(x,E)$ as follows. If $E=\{c\}$ is a singleton, then $c(x,E)=c$. Otherwise $E$ is infinite, and $c(x,E)$ is a constant chosen from $E$, distinct from all constants of $q$ and $q'$, from all values in $\bigcup_{x\in\vars(q)}\finvals x$, and from the constants chosen for other infinite witnesses. The set of {\em canonical values\/} of $x$ is $c(x)=\set{c(x,E) : E\in \witset_x}$.

\looseness=-1 A {\em canonical assignment\/} $\theta$ maps each variable $x$ of $q$ to a canonical value in $c(x)$, and $\Theta_q$ denotes the set of all canonical assignments. The {\em canonical database\/} $\theta(D_q)$ is obtained from $D_q$ (Section~\ref{sec:prelim}) by replacing each frozen variable $v$ with the value $\theta(v)$. By construction, $q$ outputs $\theta(\bar{x})$ over $\theta(D_q)$ for every $\theta \in \Theta_q$. We write $\mathcal D(\Theta_q)$ for the set of all canonical databases for~$q$.

\begin{example}[Running example, cont.]
\looseness=-1 For the account variable $a$ of Example~\ref{exm:vc-running}, the only maximally infinite-contradictable set is $\set{a'}$, with the infinite witness $(100,200)\setminus[150,400]=(100,150)$, so $c(a)$ is a single value, say $120$. Likewise $\witset_m=\set{[2000,\infty)}$, and the witness set of each of $h$, $e$, $b$, and $r$ is $\set{\Q}$, as no nonempty subset of its match set is contradictable. 
Thus, every variable in $q$ has exactly one canonical value, yielding a \emph{single} canonical database in $\mathcal D(\Theta_q)$, even though $q$ has six variables and the two queries mention eight constants.
In contrast, the construction of Klug~\cite{klug1988conjunctive} enumerates one canonical database for every ordering of these variables and constants.
\end{example}

\begin{theorem}\label{thm:vc-criterion}
Let \(q(\bar{x}),q'(\bar{x})\) be $\vcclass$ queries. Then,
\(q\) is contained in \(q'\) if and only if $q(D)\subseteq q'(D)$ for all $D\in \mathcal D(\Theta_q)$.
\end{theorem}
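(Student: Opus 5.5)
The plan is to reduce containment to a condition on all satisfying valuations of $q$, and then show that every such valuation is simulated by a canonical one. The ``only if'' direction is immediate, since each $\theta(D_q)$ with $\theta\in\Theta_q$ is an ordinary database. For the ``if'' direction, take any ordinary $D$ and any satisfying valuation $\mu$ of $q$ in $D$, and set $\theta=\mu$. Then $\theta$ satisfies the comparisons of $q$, $\theta(D_q)\subseteq D$, and $q$ returns $\theta(\bar x)$ on $\theta(D_q)$. By monotonicity of $q'$ it therefore suffices to show the following: $q'$ returns $\theta(\bar x)$ on $\theta(D_q)$ for every assignment $\theta\colon\vars(q)\to\Q$ that satisfies the comparisons of $q$.

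\textbf{Choosing the canonical assignment.} Fix such a $\theta$ and, for each $x$, let $I(x)=\set{y\in\tox x : \theta(x)\notin\feasible y}$. Then $\theta(x)\in E_x(I(x))$, so this set is nonempty. We define $\theta'\in\Theta_q$ variable by variable.
\begin{itemize}
\item If $E_x(I(x))$ is finite, then $\theta(x)\in\finvals x$, and we set $\theta'(x)=\theta(x)$. This is the canonical value of the singleton witness $\set{\theta(x)}$.
\item Otherwise $I(x)$ is infinite-contradictable. We extend it to a maximally infinite-contradictable set $S_x\supseteq I(x)$ and set $\theta'(x)=c(x,E_x(S_x))$.
\end{itemize}
In both cases the following invariant holds: if $y\in\tox x$ and $\theta'(x)\in\feasible y$, then $\theta(x)\in\feasible y$. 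In the finite case this is trivial. In the infinite case $\theta'(x)$ lies in $E_x(S_x)$, so $\theta'(x)\in\feasible y$ forces $y\notin S_x\supseteq I(x)$. In both cases $\theta'(x)\in\feasible x$, so $\theta'$ is a legitimate canonical assignment.

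\textbf{Transferring the answer back.} By hypothesis, $q'$ returns $\theta'(\bar x)$ on $\theta'(D_q)$ via some satisfying valuation $\nu$. I then build a value map $f$ with $f\circ\theta'=\theta$ on $\vars(q)$: $f$ is the identity on values coming from finite witnesses, and $f(c(x,E_x(S_x)))=\theta(x)$. This map is well defined, and this is the step I expect to be the main obstacle. The argument is that infinite-witness constants are pairwise distinct across witnesses and variables, and are distinct from all $\finvals{\cdot}$ values and constants. Hence a collision $\theta'(x)=\theta'(z)$ with $x\neq z$ occurs only at a finite value $c$, and then $\theta(x)=\theta(z)=c$. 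In other words, $\theta'$ is at least as injective as $\theta$. Without setting $\theta'(x)=\theta(x)$ in the finite case, this could fail, which is why that choice is made.

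Now take $\nu'=f\circ\nu$. Each atom image of $\nu$ is a tuple of $\theta'(D_q)$, so its image under $f$ is a tuple of $\theta(D_q)$. The head is sent to $f(\theta'(\bar x))=\theta(\bar x)$. For a comparison of $q'$ on a variable $y$, there are two cases.
\begin{itemize}
\item If $\nu(y)$ is a finite-type value, then $f$ fixes it and the comparison is preserved.
\item Otherwise $\nu(y)=\theta'(x)$ for a unique $x$. In $\theta'(D_q)$ this constant occurs exactly in the positions $\POS q x$, so $\POS{q'}y\subseteq\POS q x$. Moreover $\nu(y)\in\feasible y\cap\feasible x$, so $y\in\tox x$. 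The invariant then gives $\nu'(y)=\theta(x)\in\feasible y$.
\end{itemize}
Hence $\nu'$ is a satisfying valuation for $q'$ on $\theta(D_q)$ producing $\theta(\bar x)$, which completes the proof.
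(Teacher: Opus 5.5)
Your proposal is correct and is essentially the paper's proof: the same mimicking canonical assignment (keeping the value when $E_x(\varout(\cdot))$ is finite, otherwise extending to a maximally infinite-contradictable set), the same well-definedness argument for pulling back through a non-injective canonical assignment, and the same private-constant positional argument placing $y$ in $\tox x$. The only difference is cosmetic. You first reduce, via monotonicity of $q'$, to databases of the form $\theta(D_q)$ and pull the valuation back into them, whereas the paper pulls it back directly into the arbitrary database $D$.
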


The nontrivial direction turns a valuation $\nu$ that witnesses $\bar a \in q(D)$ into a valuation of $q'$ over $D$. A canonical assignment $\theta$ is chosen to \emph{mimic} $\nu$: for every $x\in\vars(q)$, the value $\theta(x)$ avoids the effective domain of a matched variable in $\tox x$ exactly when $\nu(x)$ does. By assumption, some valuation $\theta'$ of $q'$ produces $\theta(\bar x)$ over $\theta(D_q)$, and the composition $\nu\circ\theta^{-1}\circ\theta'$ is the required valuation. The composition is well defined although $\theta$ is not injective, since variables that share a $\theta$-value also share their $\nu$-value, and it satisfies the comparison atoms of $q'$ precisely because $\theta$ mimics $\nu$.

\subsection{Reducing to Independent Components}\label{sec:vc-components}

The number of canonical values of each variable $x$ is at most $2|\tox{x}|+1$, so by
Theorem~\ref{thm:vc-criterion} deciding containment amounts to evaluating $q'$ over all
$\prod_{x\in\vars(q)}|c(x)|$ canonical databases. Often, far fewer suffice. Only variables
that share an atom of $q$ or realize a common atom of $q'$ interact, so recombining the
values of unrelated parts of $q$ tests nothing new. Example~\ref{exm:components-running}
shows the redundancy concretely.

\begin{example}[Running example for Section~\ref{sec:vc-components}]\label{exm:components-running}
We illustrate over four relations recording the transactions on each account: $\mathrm{Deposit}(\mathit{amt},\mathit{acct})$, $\mathrm{Withdraw}(\mathit{amt},\mathit{acct})$, $\mathrm{Fee}(\mathit{amt},\mathit{acct})$, and $\mathrm{Interest}(\mathit{amt},\mathit{acct})$:
\[
\begin{aligned}
q() \leftarrow\ & \mathrm{Deposit}(d,a),\ \mathrm{Withdraw}(w,a),\ \mathrm{Fee}(f,a),\ \mathrm{Interest}(g,a),\\
   & 0\le d,w\le 1000,\ 10\le f,g\le 50\\
q'() \leftarrow\ & \mathrm{Deposit}(t,c_1),\ \mathrm{Withdraw}(t,c_2),\ \mathrm{Deposit}(v_1,c_1),\ \mathrm{Withdraw}(v_2,c_2),\\
   & \mathrm{Fee}(v_3,c_3),\ \mathrm{Interest}(v_4,c_4),\ v_1,v_2>0,\ v_3,v_4>10
\end{aligned}
\]
In $q'$, the same amount $t$ is deposited into one account and withdrawn from another, and each account also has a transaction with a positive amount.

\looseness=-1 Only the variable $a$ of $q$ occupies more than one position, namely the second position of every atom, and $\POS{q'}{t}=\{(\mathrm{Deposit},1),(\mathrm{Withdraw},1)\}$. Here $\tox{d}=\{v_1\}$, and inside $\feasible{d}=[0,1000]$ the variable $v_1$ induces the finite witness $E_{d}(\{v_1\})=[0,1000]\setminus(0,\infty)=\{0\}$, so $\witset_{d}=\{[0,1000],\{0\}\}$ and $|c(d)|=2$. The same computation for $w,f,g$ gives two canonical values each, where the finite-witness value is the \emph{boundary constant} $0$ for $d,w$ and $10$ for $f,g$. These boundary constants are \emph{shared}: $0$ is a canonical value of both $d$ and $w$, and $10$ of both $f$ and $g$. The account variable $a$ is a \emph{hub}: it appears in every atom, so the amount variables meet only through it. Matched only by the unconstrained $c_1,\ldots,c_4$, it has a single canonical value. Once a value for $a$ is fixed, whether the Fee atom of $q'$ is realized depends on the value of $f$ alone, and likewise for the other atoms. Joint variation of $d,w,f,g$ is therefore largely redundant, and it should suffice to vary each amount variable around the hub separately.
\end{example}

\providecommand{\RG}{\mathcal{RG}}
\providecommand{\cmtox}[1]{\mathsf{cmatch}(#1)}
\providecommand{\cmctox}[1]{\mathsf{cmatch}^{-1}(#1)}
\providecommand{\Vval}[1]{V_{#1}}

Two subtleties make this plan delicate. First, a single variable of $q'$ may be realized
through \emph{several} variables of $q$ that share a value: in
Example~\ref{exm:components-running}, no variable of $q$ matches $t$, yet $q'$ realizes
$t$ on the databases where $d$ and $w$ both carry the shared boundary constant $0$.
Second, a single relational atom of $q'$ may then depend on two parts at once: realizing
$\mathrm{Deposit}(t,c_1)$ and $\mathrm{Withdraw}(t,c_2)$ requires $d$ and $w$ to take the
value $0$ together, so these two variables cannot be varied separately after all. We first
capture the sharing phenomenon by a relaxation of the matching relation, and then build it
into a graph whose components can be varied safely.

Recall (Section~\ref{sec:matching-witness}) that boundary (finite-witness) values are constants
of the queries and may be \emph{shared} by several variables, while infinite-witness values
are fresh and distinct. For a value $a$, the \emph{cohort}
$\Vval a=\set{x\in\vars(q)\mid a\in c(x)}$ collects the variables of $q$ that carry $a$ as
a canonical value, and sharing is exactly the situation $|\Vval a|\ge 2$. We say that $a$
is a \emph{witness value} for $y\in\vars(q')$ if (1)~$a\in\feasible y$, and (2)~the cohort
of $a$ collectively covers every position of $y$, that is,
$\bigcup_{x\in\Vval a}\POS q x\supseteq\POS{q'}y$. We say that $y$ \emph{covering-matches}
$x$, written $x\in\cmctox y$ (equivalently $y\in\cmtox x$), if some witness value for $y$
is a canonical value of $x$. Collective covering is exactly what a homomorphism guarantees
when it sends $y$ to a shared value.
Covering-match serves only to group the variables of $q$ in the decomposition below. The canonical values are unchanged, so Theorem~\ref{thm:vc-criterion} is unaffected.

\begin{example}[Running example, cont.]
The value $0$ is a witness value for $t$: it lies in $\feasible{t}=\Q$, and
$d$ and $w$ collectively cover $\POS{q'}{t}$, although neither does alone. Hence
$\cmctox{t}\supseteq\Vval 0=\{d,w\}$, even though $t$ has no ordinary match. All other
variables of $q'$ have singleton cohorts, on which the two notions of match coincide.
\end{example}

\looseness=-1 For a relational atom $A=R(y_1,\dots,y_k)$ of $q'$, write
$M_A=\bigcup_{t=1}^{k}\cmctox{y_t}$ for the variables of $q$ covering-matched by some
variable of $A$. The \emph{relational graph} $\RG$ has vertex set $\vars(q)$, an
\emph{atom edge} between distinct $x,x'$ that occur together in a relational atom of $q$,
and a \emph{co-match edge} between distinct $x,x'\in M_A$ for a relational atom $A$ of
$q'$. The co-match edge is what keeps $d$ and $w$ of the running example together. A
{\em separator\/} $S\subseteq\vars(q)$ is \emph{legal} if no variable of $q'$
covering-matches two variables of $S$, that is, $\cmtox x\cap\cmtox{x'}=\emptyset$ for all
distinct $x,x'\in S$. Given a legal separator $S$, let $C_1,\dots,C_m$ be the connected
components of $\RG-S$.

A subset $\Psi_q\subseteq\Theta_q$ of the canonical assignments of
Section~\ref{sec:matching-witness} is \emph{$S$-exhaustive} if, for every $\theta\in\Theta_q$
and every $i\in[m]$, some $\psi\in\Psi_q$ has $\psi|_{S\cup C_i}=\theta|_{S\cup C_i}$.
Intuitively, $\Psi_q$ varies each component, together with the separator, independently of
the other components.

\begin{example}[Running example, cont.]
\looseness=-1 The relational graph $\RG$ (Figure~\ref{fig:components-relational-graph}) has an atom edge
between $a$ and each of $d,w,f,g$, and the single co-match edge $d\!-\!w$, since
$d,w\in M_{\mathrm{Deposit}(t,c_1)}$. Although $f$ and $g$ also share a boundary constant, no
variable of $q'$ is collectively covered by them, so no co-match edge arises. The separator
$S=\{a\}$ is legal, and deleting it leaves $C_1=\{d,w\}$, $C_2=\{f\}$, and $C_3=\{g\}$.
Without the co-match edge, $d$ and $w$ would fall in different components, and the atom
$\mathrm{Deposit}(t,c_1)$ of $q'$ would no longer be realized inside a single one.
\end{example}

\begin{figure}[t]
\centering
\begin{subfigure}[b]{0.35\textwidth}
\centering
\begin{tikzpicture}[
    scale=0.52,
    every node/.style={font=\small},
    v/.style={circle, draw=black, line width=1pt, minimum size=0.72cm, inner sep=0pt},
    sep/.style={circle, draw=black, fill=black!12, line width=1pt, minimum size=0.72cm, inner sep=0pt},
    blackedge/.style={draw=black, line width=1pt},
    rededge/.style={draw=red, line width=1.2pt}
]
\node[sep] (a) at (0,0) {$a$};
\node[v] (d) at (-3.2,1.6) {$d$};
\node[v] (w) at (-3.2,-1.6) {$w$};
\node[v] (f) at (3.2,1.6) {$f$};
\node[v] (g) at (3.2,-1.6) {$g$};

\draw[blackedge] (a)--(d);
\draw[blackedge] (a)--(w);
\draw[blackedge] (a)--(f);
\draw[blackedge] (a)--(g);
\draw[rededge] (d)--(w);

\node[draw=black,dashed,line width=1pt,rectangle,fit=(d)(w),inner sep=0.3cm,label=above:{$C_1$}] {};
\node[draw=black,dashed,line width=1pt,rectangle,fit=(f),inner sep=0.25cm,label=above:{$C_2$}] {};
\node[draw=black,dashed,line width=1pt,rectangle,fit=(g),inner sep=0.25cm,label=below:{$C_3$}] {};
\end{tikzpicture}
\caption{The relational graph $\RG$.}
\label{fig:components-relational-graph}
\end{subfigure}\hfill
\begin{subfigure}[b]{0.29\textwidth}
\centering
\begin{tikzpicture}[
    scale=0.42,
    every node/.style={font=\small},
    v/.style={circle, draw=black, line width=1pt, minimum size=0.72cm, inner sep=0pt},
    blackedge/.style={draw=black, line width=1pt},
    rededge/.style={draw=red, line width=1.2pt},
    blob/.style={
        draw=black,
        dashed,
        line width=1pt,
        decorate,
        decoration={random steps,segment length=7pt,amplitude=1.6pt},
        rounded corners=8pt
    },
    bigblob/.style={
        draw=black,
        dashed,
        line width=1.4pt,
        decorate,
        decoration={random steps,segment length=8pt,amplitude=2pt},
        rounded corners=10pt
    }
]

\node[v] (x1) at (10,2.2) {$x_1$};
\node[v] (x2) at (15,2.2) {$x_2$};
\node[v] (x3) at (10,-1.7) {$x_3$};
\node[v] (x4) at (15,-1.7) {$x_4$};

\draw[blackedge] (x1) -- (x2);
\draw[blackedge] (x1) -- (x3);
\draw[rededge] (x1) -- (x4);
\draw[rededge] (x3) -- (x2);
\draw[blackedge] (x3) -- (x4);

\node[
    draw=black,
    dashed,
    line width=1pt,
    rectangle,
    fit=(x1)(x2)(x3)(x4),
    inner xsep=0.2cm,
    inner ysep=0.3cm
] {};
\end{tikzpicture}\par\vspace{3mm}
\caption{The comparison graph $\CG$.}
\label{fig:queries_graph__running_example}
\end{subfigure}\hfill
\begin{subfigure}[b]{0.32\textwidth}
\centering
\begin{tikzpicture}[
    scale=0.72,
    >=Latex,
    every node/.style={font=\small},
    v/.style={circle, draw=black, line width=0.9pt, minimum size=0.62cm, inner sep=1pt},
    k/.style={rectangle, draw=black, line width=0.9pt, minimum size=0.5cm, inner sep=2pt},
    qedge/.style={draw=black, line width=0.9pt, ->},
    tedge/.style={draw=red, line width=0.9pt, -{Latex[length=2.1mm, width=1.7mm]}}
]
\node[k] (c0) at (0,0) {$0$};
\node[v] (x) at (2.1,1.35) {$x$};
\node[v] (y) at (2.1,0) {$y$};
\node[v] (z) at (2.1,-1.35) {$z$};
\node[k] (c10) at (4.2,0) {$10$};

\draw[qedge] (c0)--(x);
\draw[qedge] (c0)--(y);
\draw[qedge] (c0)--(z);
\draw[qedge] (x)--(c10);
\draw[qedge] (y)--(c10);
\draw[qedge] (z)--(c10);
\draw[qedge] (c0) .. controls (1.1,-2.6) and (3.1,-2.6) .. (c10);
\draw[qedge] (x)--(y) node[midway,right,font=\small] {$\le$};
\draw[tedge] (z) to[bend left=28] (y);
\draw[tedge] (y) to[bend left=28] (z);
\end{tikzpicture}\par\vspace{2mm}
\caption{The opposite graph $\OG$.}
\label{fig:og-example}
\end{subfigure}
\caption{The graphs of the running examples. In (a), the red edge is a co-match edge and the separator is shaded. In (b), red edges come from comparisons induced from $q'$. In (c), unlabeled edges are labeled $<$, and the red edges are the $E_{\tox q}$ edges.}
\label{fig:three-graphs}
\end{figure}

\begin{theorem}\label{thm:s-exhaustive-fixed}
Let $q(\bar x)$ and $q'(\bar x)$ be $\vcclass$ queries, let $S\subsetneq\vars(q)$ be a
\emph{legal} separator, and let $\Psi_q\subseteq\Theta_q$ be $S$-exhaustive with respect to
the components of $\RG$. If $\psi(\bar x)\in q'\bigl(\psi(D_q)\bigr)$ for all
$\psi\in\Psi_q$, then $\theta(\bar x)\in q'\bigl(\theta(D_q)\bigr)$ for all
$\theta\in\Theta_q$.
\end{theorem}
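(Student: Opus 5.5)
Fix $\theta\in\Theta_q$. For each component $C_i$, $S$-exhaustiveness gives some $\psi_i\in\Psi_q$ with $\psi_i|_{S\cup C_i}=\theta|_{S\cup C_i}$. By hypothesis there is a valuation $\mu_i$ of $q'$ over $\psi_i(D_q)$ with $\mu_i(\bar x')=\psi_i(\bar x)$. Since $\psi_i(D_q)$ is the image of the frozen atoms of $q$ under $\psi_i$, each relational atom $A$ of $q'$ is realized by $\mu_i$ on the $\psi_i$-image of some atom $B$ of $q$. I will assemble these $\mu_i$ into one valuation $\mu$ of $q'$ over $\theta(D_q)$ as follows. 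Assign every relational atom $A$ of $q'$ to a component or to $S$ (the \emph{owner} of $A$), set $\mu$ on the variables of $A$ to agree with $\mu_i$ for the owning index $i$, and argue that this is consistent and that it realizes $A$ inside $\theta(D_q)$.

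\textbf{Step 1 (localization).} Show that if $\mu_i$ maps a variable $y$ of $q'$ to a value $a$ and realizes the atom $A\ni y$ on atoms of $q$, then the variables of $q$ at the positions of $y$ lie in $\Vval a$ and cover $\POS{q'}y$. Since $a\in\feasible y$ by the comparisons of $q'$, $a$ is a witness value for $y$, so these variables lie in $\cmctox y$. Here one must note that the values of $\psi_i(D_q)$ are canonical values, so $a\in c(x)$ for each such $x$. Consequently every variable of $q$ used by $\mu_i$ to realize $A$ lies in $M_A$. These variables are pairwise joined by co-match edges, and atom edges connect the variables of the realizing atom $B$. Hence the variables of $q$ touched by the realization of $A$ (including those of $B$) lie in $S\cup C_j$ for a single $j$. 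By the same token, if $B$ has all its variables in $S$, then $A$ is owned by $S$. Then $A$ is realized under $\mu_i$ on atoms whose $\psi_i$-values agree with $\theta$, because $\psi_i=\theta$ on $S\cup C_i$ once $j=i$.

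\textbf{Step 2 (choice of owner and consistency).} For each atom $A$ of $q'$, the witness $\mu_i$ realizes it in some $S\cup C_j$. I would use, for component $C_i$, the atoms that $\mu_i$ realizes inside $S\cup C_i$, and argue that every atom of $q'$ is covered by at least one such $i$. This is the delicate point, since $\mu_i$ might realize all atoms in other components. A cleaner route is to instead fix one canonical-database homomorphism per component and patch them. A $q'$-variable $y$ shared by atoms with different owners must then receive the same value from both. Here the legality of $S$ enters. A shared $y$ touching two components must pass through $S$, since otherwise a co-match edge would merge them. Inside $S$, legality says $y$ covering-matches at most one separator variable, and both $\psi_i$ and $\psi_j$ agree with $\theta$ on $S$, so the value is forced to coincide.

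\textbf{Step 3.} Check that the comparison atoms of $q'$ remain satisfied. They are variable--constant comparisons, so they are local to each variable, and $\mu(y)=\mu_i(y)$ already satisfies them. Check also that the head maps to $\theta(\bar x)$, since head positions are covered by the owning witnesses and $\psi_i$ agrees with $\theta$ there.

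I expect Step 2 to be the main obstacle, in particular guaranteeing that each atom of $q'$ gets an owner $i$ for which $\mu_i$ realizes it inside $S\cup C_i$. The first thing to try is an exchange argument. If $\mu_i$ realizes $A$ inside some $C_j$ with $j\neq i$, then the values used there come from $\psi_i|_{C_j}$, which is itself a canonical restriction. So I would take a single $\mu$ per component by choosing, among all $\theta$-agreeing $\psi$'s, the one matching $\theta$ on the component where each atom lands, possibly by induction on the number of components on which $\psi$ differs from $\theta$.
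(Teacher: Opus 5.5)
Your Step~1 is the right tool; it is essentially the paper's realizer-localization lemma. But Step~2 is left open, and the exchange argument or induction you sketch is not a proof.

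\textbf{The missing idea.} It is already inside your Step~1: the component in which a realization of an atom $A$ of $q'$ lands does not depend on $i$. The set $M_A$ is defined from the covering-match relation alone, and its members are pairwise joined by co-match edges. So $M_A\setminus S$ is a clique of $\RG-S$ and lies in a single component $C_{j(A)}$, or is empty. Step~1 shows that every realizer of $A$, under every pair $(\psi_i,\mu_i)$, lies in $M_A\subseteq S\cup C_{j(A)}$.

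Hence the owner of $A$ should be defined syntactically as $j(A)$, with any index allowed when $M_A\subseteq S$. Then $\mu_{j(A)}$ realizes $A$ on an atom of $q$ inside $S\cup C_{j(A)}$, where $\psi_{j(A)}=\theta$. Your worry that ``$\mu_i$ might realize all atoms in other components'' is harmless: such realizations merely use $\psi_i$-values on a foreign component, which is exactly why you take $A$ from $\mu_{j(A)}$ and not from $\mu_i$. The paper phrases this per variable: $I(y)=\set{i\mid \cmctox{y}\cap C_i\neq\emptyset}$ has at most one element (Lemma~\ref{lem:atom-locality}), and it sets $\mu(y):=\mu_i(y)$ for that $i$.

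\textbf{Drop the realization-dependent ownership.} Your rules (``if $B$ has all its variables in $S$, then $A$ is owned by $S$'', and assigning to $C_i$ the atoms that $\mu_i$ happens to realize inside $S\cup C_i$) break your consistency argument. Take $y$ with $\cmctox{y}=\set{s,z}$, where $s\in S$ and $z\in C_1$. Then $y$ may be realized at $s$ by $\mu_2$ in one atom and at $z$ by $\mu_1$ in another, giving $\theta(s)$ and $\theta(z)$, which need not agree. Legality does not help, because only one of the two realizers lies in $S$.

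With syntactic owners, consistency is immediate. If $y$ occurs in atoms $A,A'$ with different owners, then $\cmctox{y}\subseteq M_A\cap M_{A'}$ meets no component, so it lies in $S$. It is nonempty, since it contains the realizer of $y$. Legality then makes it a single $s$, so $\mu_i(y)=\psi_i(s)=\theta(s)$ for every $i$.

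\textbf{The head step.} It needs an argument you do not give. The $q$-variable $x_j$ lies in the cohort of $a=\psi_i(x_j)=\mu_i(x_j)$, and $a$ is a witness value for the $q'$-variable $x_j$. Hence $x_j\in\cmctox{x_j}\subseteq S\cup C_i$ for the owner $i$, and only then does $\mu(x_j)=\psi_i(x_j)=\theta(x_j)$ follow.
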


The theorem assumes nothing about the canonical values: boundary constants may be shared
and canonical assignments need not be injective.

The proof stitches per-component witnesses into one. For $\theta\in\Theta_q$,
$S$-exhaustiveness supplies for each component $C_i$ a passing assignment $\psi_i$ agreeing
with $\theta$ on $S\cup C_i$, and hence a valuation of $q'$ into $\psi_i(D_q)$. The
co-match edges confine all variables covering-matched by one atom of $q'$ to a single
component, so every atom of $q'$ is realized inside one $\psi_i(D_q)$, and legality makes
the component serving a variable that matches only the separator unambiguous.
Covering-match keeps the stitching sound without injectivity: whatever value a valuation
places at a variable $y$ of $q'$ is a witness value for $y$, so its whole cohort is
confined to the one component serving $y$.

\looseness=-1 Since $\Theta_q$ itself is $S$-exhaustive, the quantity of interest is the size of the
smallest $S$-exhaustive family, i.e., the number of canonical databases tested by the decomposition when $\Psi_q$ is chosen optimally. We record it next, with the complexity of choosing an optimal $S$.

\begin{lemma}\label{lem:s-exhaustive-size}
Let $q,q'$ be $\vcclass$ queries, let $S\subseteq\vars(q)$ be a separator, and let
$C_1,\dots,C_m$ be the connected components of $\RG-S$. The smallest $S$-exhaustive set
$\Psi_q\subseteq\Theta_q$ has size $\Bigl(\prod_{v\in S}|c(v)|\Bigr)\cdot \max_{i\in[m]}\ \prod_{u\in C_i}|c(u)|$.
\end{lemma}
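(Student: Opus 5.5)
We prove the formula by matching lower and upper bounds. Write $P_S=\prod_{v\in S}|c(v)|$ and $P_i=\prod_{u\in C_i}|c(u)|$, and let $M=\max_i P_i$, attained at some index $i^*$. The key structural fact is that $\Theta_q$ is a full Cartesian product, $\prod_{x\in\vars(q)}c(x)$, since a canonical assignment chooses each value independently. Moreover, $S,C_1,\dots,C_m$ partition $\vars(q)$.

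\textbf{Lower bound.} Take any $S$-exhaustive $\Psi_q$ and apply the definition with $i=i^*$. For every $\theta\in\Theta_q$, the restriction $\theta|_{S\cup C_{i^*}}$ must occur as $\psi|_{S\cup C_{i^*}}$ for some $\psi\in\Psi_q$. Because $\Theta_q$ is a product, these restrictions range over all of $\prod_{v\in S\cup C_{i^*}}c(v)$, a set of size $P_S\cdot P_{i^*}$. The map $\psi\mapsto\psi|_{S\cup C_{i^*}}$ therefore has an image of at least that size, so $|\Psi_q|\ge P_S\cdot M$.

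\textbf{Upper bound.} We construct a family of exactly this size. For each component $C_i$, fix a surjection $f_i\colon[M]\to\prod_{u\in C_i}c(u)$; one exists because $P_i\le M$ and each $P_i\ge1$, as witness sets are nonempty. For instance, enumerate the $P_i$ local assignments and repeat the last one. For each $\sigma\in\prod_{v\in S}c(v)$ and each $j\in[M]$, define $\psi_{\sigma,j}$ to be $\sigma$ on $S$ and $f_i(j)$ on each $C_i$. Because the parts are disjoint and the product structure of $\Theta_q$ lets them be combined freely, $\psi_{\sigma,j}$ is a well-defined element of $\Theta_q$. There are at most $P_S\cdot M$ such assignments.

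\textbf{Exhaustiveness of the construction.} Given $\theta$ and $i$, set $\sigma=\theta|_S$ and choose $j$ with $f_i(j)=\theta|_{C_i}$, which exists by surjectivity. Then $\psi_{\sigma,j}|_{S\cup C_i}=\theta|_{S\cup C_i}$, as required.

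Together the two bounds give the stated size. The argument is essentially a synchronized-enumeration (Latin-style) construction; the main care points are the two facts it relies on. First, every $c(x)$ is nonempty, so the surjections exist. Second, the components together with $S$ cover all variables, so each $\psi$ is a total assignment. Legality of $S$ plays no role here; it matters only for Theorem~\ref{thm:s-exhaustive-fixed}.
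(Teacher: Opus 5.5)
Your proof is correct and follows essentially the same route as the paper's: the lower bound comes from surjectivity of restriction to $S\cup C_{i^*}$, using that $\Theta_q$ is a full product. The upper bound pads each component's local assignments to a common length $M$ and pairs them synchronously with every choice on $S$; your explicit remark that each $c(x)$ must be nonempty (true whenever $q$'s comparisons are satisfiable) is a hypothesis the paper's proof also uses tacitly.
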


\begin{example}[Running example, cont.]
For the running example, the full criterion tests $\prod_{x}|c(x)|=2^4=16$ canonical
databases, while the decomposition for $S=\{a\}$ tests only
$|c(a)|\cdot\max\bigl(|c(d)||c(w)|,\ |c(f)|,\ |c(g)|\bigr)=4$. Adding further transfer
pairs off the hub scales this gap geometrically, while the separator and largest component
stay fixed.
\end{example}

\looseness=-1 By Lemma~\ref{lem:s-exhaustive-size}, the size of the test is governed by the separator and the largest component of $\RG-S$, so we seek the legal separator that minimizes $|\Psi_q|$. This optimization is intractable.

\begin{theorem}\label{thm:separator-nphard}
Given $\vcclass$ queries $q,q'$ and an integer $t$, deciding whether $\RG$ has a legal separator whose smallest $S$-exhaustive family has size at most $t$ is \textnormal{NP}-complete, even when every variable of $q$ has exactly two canonical values.
\end{theorem}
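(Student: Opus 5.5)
The plan is to prove membership in NP directly, and to prove hardness by a reduction from \emph{vertex integrity}: given a graph $G$ and an integer $k$, decide whether some $S\subseteq V(G)$ has $|S|+\max_i|C_i|\le k$, where the $C_i$ are the components of $G-S$. Vertex integrity is NP-complete (Clark, Entringer and Shapiro). The bridge between the two problems is Lemma~\ref{lem:s-exhaustive-size}. When every variable has exactly two canonical values, the smallest $S$-exhaustive family has size $2^{|S|}\cdot 2^{\max_i|C_i|}$. The question ``size $\le t$'' is therefore exactly the question ``$|S|+\max_i|C_i|\le \lfloor\log_2 t\rfloor$''.

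\textbf{Membership in NP.} The certificate is $S$ itself. The verification proceeds in four steps.
\begin{enumerate}
\item Compute the match sets, the witness sets and the canonical values. This takes polynomial time: by Proposition~\ref{prop:witness_set_size} each witness set has polynomially many elements, and each element is found from interval endpoints.
\item Compute the cohorts $\Vval a$, the covering-match relation, and hence $\RG$. Collective covering is a union test over position sets, so this is also polynomial.
\item Check legality of $S$ pairwise, and find the components of $\RG-S$.
\item Compare $\bigl(\prod_{v\in S}|c(v)|\bigr)\cdot\max_i\prod_{u\in C_i}|c(u)|$ with $t$. This product has polynomially many bits, so it can be computed exactly.
\end{enumerate}

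\textbf{Hardness construction.} Given $G=(V,E)$ with $V=\{1,\dots,n\}$, I build Boolean $\vcclass$ queries as follows.
\begin{itemize}
\item For each vertex $u$, $q$ has a variable $x_u$ with the comparisons $2u\le x_u\le 2u+1$. The intervals are pairwise disjoint, so no boundary constant is shared between vertices.
\item For each edge $uv$, I use a fresh binary relation $E_{uv}$. The query $q$ contains the atom $E_{uv}(x_u,x_v)$, and $q'$ contains the atom $E_{uv}(y_{uv},z_{uv})$ with $y_{uv}>2u$ and $z_{uv}>2v$.
\item Isolated vertices get a unary relation handled in the same way.
\end{itemize}

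\textbf{Verifying the construction.} Three properties need checking.
\begin{itemize}
\item \emph{Two canonical values per variable.} Every variable of $q'$ matching $x_u$ has effective domain $(2u,\infty)$. Consequently $E_{x_u}(\emptyset)=[2u,2u+1]$ is the unique maximally infinite-contradictable witness, and every nonempty $S$ gives $E_{x_u}(S)=\{2u\}$, which is finite. Hence $c(x_u)$ consists of the boundary value $2u$ and one fresh value.
\item \emph{$\RG=G$.} All cohorts are singletons: the fresh values are distinct by construction, and the boundary values $2u$ are distinct across vertices. So covering-match coincides with ordinary match, and $y_{uv}$ (respectively $z_{uv}$) covering-matches only $x_u$ (respectively $x_v$). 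Atom edges of $\RG$ are then exactly the edges of $G$. Each co-match edge comes from $M_A=\{x_u,x_v\}$ for an atom $A=E_{uv}(y_{uv},z_{uv})$, so it duplicates an existing atom edge. Hence $\RG$ is exactly $G$.
\item \emph{Every separator is legal.} Every variable of $q'$ covering-matches a single variable of $q$, so every separator is legal.
\end{itemize}

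Finally set $t=2^k$; this has polynomial size. Then a legal separator with an $S$-exhaustive family of size at most $t$ exists iff $G$ has vertex integrity at most $k$. If the theorem insists on $S\subsetneq\vars(q)$, I would restrict to $k<n$, or add one isolated dummy vertex; either leaves the problem NP-hard.

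\textbf{Main obstacle.} The delicate step is guaranteeing that the construction introduces no spurious sharing. Shared boundary constants would enlarge cohorts, and then a single $q'$ variable could covering-match many variables of $q$. That would add co-match edges not present in $G$ and break legality. My fix is to use disjoint constant ranges per vertex together with one relation per edge, so that position sets never overlap across vertices. Verifying carefully that each witness value's cohort is a singleton is the part I expect to require the most care.
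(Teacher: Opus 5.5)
Your proof is correct. It reaches the result by a different route from the paper's in two places.

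\textbf{Hardness of the graph problem.} You cite the known NP-completeness of vertex integrity. That result is usually credited to Clark, Entringer and Fellows (1987), not Clark, Entringer and Shapiro. The paper calls the same problem \textsc{Min-Separator} and proves it NP-hard from scratch by a reduction from \textsc{Vertex Cover}. It attaches $k$ pendant leaves to every vertex, pads with isolated vertices so that $n\ge 2k+1$, and sets $t=2k+1$. Your route is shorter; the paper's is self-contained.

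\textbf{The query gadget.} The paper leaves each $x_v$ unconstrained and gives it a private unary relation $U_v$. This relation is matched by two complementary $q'$-variables, $y_v<0$ and $y'_v\ge 0$. Hence $x_v$ has two infinite witnesses and both canonical values are fresh. Singleton cohorts and the absence of any co-match edges are then immediate. You instead bound $x_u$ to $[2u,2u+1]$ and let the edge variables of $q'$, with lower bound $>2u$, do the matching. This yields one fresh value and one boundary value $2u$, just as for $d$ in the paper's running example, where $\witset_d=\{[0,1000],\{0\}\}$. Your gadget buys economy: the edge atoms double as the value gadget, and unary atoms are needed only for isolated vertices. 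The price is that you must keep the ranges disjoint so that no boundary constant is shared. You must also check that the co-match edges it creates merely duplicate atom edges, which you do correctly.

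Both constructions give $\RG=G$, exactly two canonical values per variable, and legality of every separator. Your treatment of $S=\vars(q)$ is also sound, since $S=V\setminus\{v\}$ already attains the value $n$ that $S=V$ would give.
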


The hardness is by reduction from \textsc{Vertex Cover}, through an intermediate graph
problem defined in the appendix. Intractability does not affect correctness: every legal
separator yields a sound and complete containment test, and only the family size depends on
the choice, so in practice one fixes a legal separator heuristically.

\subsection{Extending to Variable--Variable Comparisons}\label{sec:cmp}

We now extend our results from $\vcclass$ to $\cmpclass$, which also allows comparisons between variables.
For this class, the witness sets alone no longer determine adequate canonical values, in either direction. When $q$ contains a comparison between two variables, the witnesses of those variables may leave no values satisfying it. When only $q'$ contains one, the constraint it induces between \textcolor{black}{the variables of $q$ that can realize its sides} is invisible to the witness sets, and the test passes although containment fails. Example~\ref{exm:witness_set_not_feasible} exhibits both.

To capture these additional constraints, recall from Section~\ref{sec:prelim} the variable--constant comparison atoms $\compvc{q}$ and the variable--variable comparison atoms $\compvv{q}$.

\begin{example}[Running example]\label{exm:cmp-running}
An online-banking system logs sessions and maintenance windows in $\mathrm{Session}(\mathit{login},\mathit{logout})$ and $\mathrm{Maint}(\mathit{start},\mathit{end})$, with hours measured from midnight.
\[
\begin{aligned}
q() \leftarrow {} & \mathrm{Session}(x_1,x_2),\ \mathrm{Session}(x_1,x_4),\ \mathrm{Maint}(x_3,x_2),\ \mathrm{Maint}(x_3,x_4),
\\ &x_1 \leq x_2,\ x_1 \leq x_3,\ x_3 < x_4,\ 0<x_1,x_2<10,\ 0<x_3<20,\ 10<x_4<20 \\
q'() \leftarrow {} & \mathrm{Session}(y_1,y_2),\ \mathrm{Maint}(y_3,y_2),\ y_1 \leq y_2,\ y_2 > y_3,\ 0<y_1,y_2<20
\end{aligned}
\]
\looseness=-1 Here $q$ asks for two sessions that log in together at hour $x_1$, and two maintenance windows that start together at hour $x_3$ and end at the two logout hours. Its variable--variable atoms are $\compvv{q} = \set{x_1 \leq x_2,\ x_1 \leq x_3,\  x_3 < x_4}$, and $\compvv{q'} = \set{y_1 \leq y_2,\ y_2 > y_3}$. The effective domains in $q$ are $\feasible{x_1}=\feasible{x_2}=(0,10)$, $\feasible{x_3}=(0,20)$, and $\feasible{x_4}=(10,20)$, and the match sets are $\tox{x_1}=\set{y_1}$, $\tox{x_2}=\tox{x_4}=\set{y_2}$, and $\tox{x_3}=\set{y_3}$.
\end{example}

We now focus on pairs of variables in $q$ whose relative order may affect containment with respect to $q'$. Some pairs arise directly from variable--variable comparison atoms in $q$. Others are induced by variable--variable comparison atoms in $q'$ \textcolor{black}{under a relaxed form of matching}. \textcolor{black}{A valuation of $q'$ over a canonical database may send $y$ to a value shared by several variables of $q$, so we say that $y\in\vars(q')$ {\em partially matches\/} $x\in\vars(q)$ if $\POS{q'}{y}\cap\POS{q}{x}\neq\emptyset$ and $\feasible y\cap\feasible x\neq\emptyset$. Formally, $\fullmcomp{q}{q'}$ contains $x \comp x'$ whenever some atom $y \comp y'$ of $\compvv{q'}$ has $y$ partially matching $x$ and $y'$ partially matching $x'$, and $\set{x\comp x'}\cup \compvv q$ is satisfiable.} When $q$ and $q'$ are clear from the context, we simply write $\mcomp$. The complete set of relevant inequalities is then $\fullrelcomp q {q'} = \compvv{q} \cup \fullmcomp{q}{q'}$.

The \emph{comparison graph} $\CG$ is the undirected graph on the vertex set $\vars(q)$ with an edge between the two variables of each inequality in $\fullrelcomp q {q'}$. We denote by $\compCG$ the set of connected components of $\CG$. We emphasize that $\CG$ is not the relational graph $\RG$ of Section~\ref{sec:vc-components}: $\RG$ groups variables whose canonical values must be varied jointly, whereas $\CG$ records the order relations among the variables, to reduce the number of orderings considered.

\begin{example}[Running example, cont.]
Since $\compvv{q'}=\set{y_1 \leq y_2,\ y_2 > y_3}$, the match sets above yield $\fullmcomp{q}{q'} = \set{x_1 \leq x_2,\ x_1\leq x_4,\ x_3 < x_2,\ x_3<x_4}$, so $\fullrelcomp{q}{q'} = \set{x_1 \leq x_2,\ x_1 \leq x_3,\  x_3 < x_4,\ x_1 \leq x_4,\ x_3 <x_2}$. The comparison graph $\CG$, shown in Figure~\ref{fig:queries_graph__running_example}, consists of a single connected component $C_1=\set{x_1,x_2,x_3,x_4}$.
\end{example}

For every nontrivial connected component $C \in \compCG$, let $\compCons{C}=\set{k_1,\ldots,k_p}$, where $k_1<\cdots<k_p$, be the set of all constants that appear in comparison atoms of $q$ involving a variable of $C$, or in comparison atoms of $q'$ involving a variable that \textcolor{black}{partially matches} some variable of $C$. These constants partition the rationals into the {\em singleton intervals} $\set{k_i}$ and the {\em open intervals} between consecutive constants, and we let $\allintervals{C} = \set{(-\infty, k_1), \set{k_1}, (k_1, k_2), \set{k_2}, \ldots, \set{k_p}, (k_p, \infty)}$ denote the set of all these intervals.

For every interval $I \in \allintervals C$, let $\actvars{C}{I} = \set{x \in C \mid I \subseteq \feasible{x}}$ be the set of {\em active} variables of $C$ on $I$, that is, the variables that may be assigned a value anywhere in $I$. Note that, by construction, for every $x \in C$ and every $I \in \allintervals{C}$, the intersection $\feasible{x} \cap I$ is either empty or equal to $I$.

Next, we select {\em representative values} $\repval{C}{I}$ for each interval $I\in\allintervals{C}$. If $I$ is an open interval, let $\ell = |\actvars{C}{I}|$. We choose values $\alpha_1^{C,I}<\alpha_2^{C,I}<\ldots<\alpha_\ell^{C,I}$ in $I$, and set $\repval{C}{I}=\set{\alpha_1^{C,I},\ldots,\alpha_\ell^{C,I}}$. We choose these values to be distinct from all constants appearing in $q$ and $q'$ and from all canonical values chosen for other components, which is possible since $I$ contains infinitely many values. If $I=\set{k}$ is a singleton interval, then each variable in $\actvars{C}{I}$ can take only the value $k$ in this interval, and we set $\repval C I =\set{k}$.

Not every active variable may take every representative value of an open interval $I$. Think of the representatives $\alpha_1^{C,I}<\cdots<\alpha_\ell^{C,I}$ as ordered positions, occupied by the active variables in increasing order of their values, where variables with equal values share a position. If $q$ entails $x\le y$ then $y$ never lies below $x$, so fewer positions are open to $x$. Accordingly, let $\rankmax{x}{I}$ be $\ell$ minus the number of active variables $y\neq x$ on $I$ for which the comparison atoms of $q$ entail $x\le y$. Only the lowest $\rankmax{x}{I}$ representatives need be available to $x$, and we set $\repvalx{C}{I}{x}=\set{\alpha_j^{C,I}\mid 1\le j\le \rankmax{x}{I}}$. For a singleton interval $I=\set{k}$, we set $\repvalx{C}{I}{x}=\set{k}$ for every $x\in\actvars{C}{I}$.

\looseness=-1 We now define, for every variable $x \in \vars(q)$, its set of canonical values, denoted by $c(x)$. If $x$ is an isolated vertex, then $c(x)$ is defined as in the construction for queries in the class $\vcclass$, namely $c(x)=\set{c(x,E) : E\in \witset_x}$.
 If $x$ belongs to a connected component $C \in \compCG$ with more than one vertex, then we collect, over the intervals in which $x$ is active, the representative values consistent with its rank,
$c(x)=\bigcup_{\substack{I \in \allintervals{C} \\ x \in \actvars{C}{I}}} \repvalx{C}{I}{x}$.
Whenever a canonical value is chosen arbitrarily, either from an infinite witness subdomain or as a representative of an open interval, we choose it to be distinct from all canonical values that were already fixed, in particular from all values that come from finite witness subdomains and singleton intervals.

A {\em canonical assignment} $\theta$ is a mapping from the variables of $q$ to canonical values such that $\theta(x) \in c(x)$ for every $x \in \vars(q)$, and $\theta$ satisfies all comparison atoms of $q$. We denote by $\Theta_q$ the set of all canonical assignments. For each $\theta \in \Theta_q$, let $D_\theta := \theta(D_q)$, and define $\mathcal{D}(\Theta_q) = \set{D_\theta \mid \theta \in \Theta_q}$ to be the set of all canonical databases.

\begin{example}[Running example, cont.]
Here $\compCons{C_1}=\set{0,10,20}$. The active sets are $\actvars{C_1}{(0,10)}=\set{x_1,x_2,x_3}$, $\actvars{C_1}{\set{10}}=\set{x_3}$, and $\actvars{C_1}{(10,20)}=\set{x_3,x_4}$, with representatives $\set{2,6,8}$, $\set{10}$, and $\set{12,17}$, respectively. Rank trimming trims two variables: $q$ entails $x_1\le x_2$ and $x_1\le x_3$, so $\rankmax{x_1}{(0,10)}=3-2=1$ and $x_1$ keeps only $\set{2}$, and the atom $x_3<x_4$ leaves $x_3$ only $\set{12}$ in $(10,20)$. 
We have 
$c(x_1) = \set{2}$, $c(x_2) = \set{2,6,8}$, $c(x_3) = \set{2,6,8,10,12}$, $c(x_4) = \set{12,17}$.
Every canonical assignment fixes $x_1\mapsto 2$, so $\Theta_q$ consists of $|c(x_2)|\cdot|\set{(x_3,x_4)\in c(x_3)\times c(x_4)\mid x_3<x_4}| = 3\cdot 9 = 27$ assignments.
\end{example}

\begin{theorem}\label{thm:cmp-criterion}
Let $q(\bar{x}),q'(\bar{x}) $ be $\cmpclass$ queries. Then
$q$ is contained in $q'$ if and only if $q(D)\subseteq q'(D)$ for all $D\in \mathcal D(\Theta_q)$.
\end{theorem}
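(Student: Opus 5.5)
The easy direction is immediate. Every $\theta\in\Theta_q$ satisfies the comparison atoms of $q$ by definition, and the relational atoms hold in $D_\theta$ by construction. Hence $\theta(\bar x)\in q(D_\theta)$, and containment gives $\theta(\bar x)\in q'(D_\theta)$. For the converse we follow the mimicking strategy of Theorem~\ref{thm:vc-criterion}. Fix an ordinary database $D$ and a satisfying valuation $\nu$ of $q$ over $D$ with $\nu(\bar x)=\bar a$. We will build $\theta\in\Theta_q$ that \emph{mimics} $\nu$ in the following sense:
\begin{enumerate}[label=(\roman*)]
\item for every $x\in\vars(q)$ and every $y\in\vars(q')$ that partially matches $x$, $\theta(x)\in\feasible y$ iff $\nu(x)\in\feasible y$;
\item for every pair $x,x'$ with $x\comp x'$ or $x'\comp x$ in $\fullrelcomp q{q'}$, and hence for every pair in a common component of $\CG$, $\theta$ and $\nu$ induce the same order type (strictly less, equal, or strictly greater);
\item $\theta(x)=\theta(x')$ implies $\nu(x)=\nu(x')$.
\end{enumerate}
Given such a $\theta$, the hypothesis yields a valuation $\theta'$ of $q'$ with $\theta'(\bar x)=\theta(\bar x)$ over $D_\theta$. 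We then set $\nu'=\nu\circ\theta^{-1}\circ\theta'$, which is well defined by (iii). Relational atoms of $q'$ map into $\nu(D_q)\subseteq D$, and the head maps to $\bar a$.

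For isolated vertices of $\CG$ we reuse the witness-set choice from the proof of Theorem~\ref{thm:vc-criterion}. Pick the witness $E\in\witset_x$ that contains the set of matched variables whose effective domains $\nu(x)$ avoids. If $\nu(x)$ lies in a finite $E_x(S)$, take $\theta(x)=\nu(x)$ itself, which is a value of $\finvals x$.

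For a nontrivial component $C$ we argue as follows.
\begin{enumerate}[label=(\alph*)]
\item Each $\nu(x)$ with $x\in C$ lies in a unique interval $I\in\allintervals C$, and $I\subseteq\feasible x$ because $\feasible x\cap I\in\{\emptyset,I\}$. So $x$ is active on $I$.
\item On a singleton interval, set $\theta(x)=k$.
\item On an open interval $I$, sort the distinct values $\nu$ takes on the active variables of $C$ in $I$, and send the $j$-th smallest to $\alpha_j^{C,I}$. There are at most $\ell$ distinct values, so this is possible.
\item Rank trimming is respected. If $\nu(x)$ has rank $j$, then every active $y$ that $q$ forces with $x\le y$ has $\nu(y)\ge\nu(x)$. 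The variables not in $I$ do not count against $\ell$; those in $I$ occupy ranks $\ge j$, and ties share a rank. We need to check carefully that this gives $j\le\rankmax xI$; the point is that the count should be of active variables forced above $x$ within $I$, which at worst collapse onto rank $j$.
\end{enumerate}
This map is monotone and injective on values within $C$. Since all constants relevant to $C$ are interval endpoints, it preserves membership in every $\feasible y$ for $y$ partially matching a variable of $C$, which gives (i). It also preserves order within $C$, which gives (ii) and hence the comparison atoms of $q$. Across different components and isolated vertices, chosen values are pairwise distinct except for shared constants, which are genuine equal values of $\nu$, so (iii) holds. Thus $\theta\in\Theta_q$.

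The main obstacle is verifying that $\nu'$ satisfies the comparison atoms of $q'$. A variable-constant atom on $y$ holds because $\theta'(y)=\theta(x)$ for some $x$. Since $y$ occupies a relational position where $x$ sits in $D_\theta$, $y$ partially matches $x$, provided $\feasible y\cap\feasible x\ne\emptyset$, which is witnessed by $\theta(x)$ itself. Then (i) transfers the membership to $\nu(x)$.

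For a variable-variable atom $y\comp y'$ of $q'$, let $\theta'(y)=\theta(x)$ and $\theta'(y')=\theta(x')$. As before, $y$ partially matches $x$ and $y'$ partially matches $x'$. Moreover, $\{x\comp x'\}\cup\compvv q$ is satisfiable, witnessed by $\theta$. Hence $x\comp x'\in\mcomp$, so $x$ and $x'$ lie in the same component of $\CG$, and (ii) gives $\nu(x)\comp\nu(x')$. The one subtle case is that the choice of $x$ realizing $\theta'(y)$ is not unique when $\theta$ is not injective. This is handled by (iii): every preimage has the same $\nu$-value, so we may pick any preimage, and the argument goes through for each choice.
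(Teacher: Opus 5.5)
Your plan is the paper's: mimic $\nu$ by a canonical assignment, then pull a witness for $q'$ back along $\nu\circ\theta^{-1}\circ\theta'$. Your treatment of the nontrivial components of $\CG$ and of well-definedness (iii) also matches it.

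The gap is mimicry condition (i) for isolated variables. It is not merely unproven but unattainable with the witness-set values.

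\emph{Counterexample to (i).} In Example~\ref{exm:witness-large}, $x_1$ is isolated and $\witset_{x_1}=\{[0,2),(3,5),(6,8),(11,13]\}$. Take $\nu(x_1)=2.5$, which lies in both $\fullfeasible{q'}{y_1}=(-\infty,3]$ and $\fullfeasible{q'}{y_3}=[2,9]$. No canonical value of $x_1$ lies in $[2,3]$, so every $\theta\in\Theta_q$ breaks your ``iff'' for $y_1$ or $y_3$. The cause is that extending $\varout(\nu(x))$ to a maximally infinite-contradictable set adds variables whose domains $\nu(x)$ does not avoid.

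\emph{A second failure.} The witness sets control $\theta(x)$ only relative to $y\in\tox{x}$. So for a $y$ that merely partially matches $x$, with $\POS{q'}{y}\not\subseteq\POS{q}{x}$, even the direction $\theta(x)\in\feasible{y}\Rightarrow\nu(x)\in\feasible{y}$ can fail. That is exactly the direction your variable--constant step invokes.

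\emph{What the isolated case actually gives.} The guarantee is one-directional. Either $\theta(x)=\nu(x)$ (a finite-witness value), or $\theta(x)$ is a private constant with $\forbidden{\nu}{x}\subseteq\forbidden{\theta}{x}$, where both sets range over $\tox{x}$ only. This is condition~3 of the paper's relation $\ceqord$ (Lemma~\ref{lem:component_equality_order_eq}).

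\emph{The missing step.} Suppose $\theta'(y)=\theta(x)$ with $\theta(x)$ private. Then every occurrence of $y$ in $q'$ lands, in $D_\theta$, on an occurrence of $x$, so $\POS{q'}{y}\subseteq\POS{q}{x}$. Since $\theta(x)$ lies in both effective domains, $y\in\tox{x}$. Only now does $\theta(x)\in\feasible{y}$ give $y\notin\forbidden{\theta}{x}\supseteq\forbidden{\nu}{x}$, that is, $\nu(x)\in\feasible{y}$.

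\emph{Two smaller repairs.}
The rank check in (d) closes by picking one variable from each of the $j-1$ $\nu$-classes below $x$ in $I$. These variables are active on $I$, differ from $x$, and are not forced above $x$, since $\nu$ satisfies $q$. Hence $(j-1)+|\{y\in\actvars{C}{I}\setminus\{x\} : q\models x\le y\}|\le\ell-1$, i.e.\ $j\le\rankmax{x}{I}$.
Second, partial matching needs the preimage of $\theta'(y)$ that sits at a position of $y$, not an arbitrary preimage. Property (iii) is what lets you choose it without changing $\nu'(y)$.

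Your argument for the atoms of $\compvv{q'}$ is correct: $\theta$ itself witnesses that $\{x\comp x'\}\cup\compvv{q}$ is satisfiable. It is also tidier than the paper's case split, whose inconsistent branch is vacuous for exactly this reason.
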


\looseness=-1 The proof parallels that of Theorem~\ref{thm:vc-criterion}, with the canonical assignment chosen componentwise. What is new is the relation it must meet: within each connected component of $\CG$, $\theta$ reproduces the order and equalities of a satisfying valuation $\nu$ and keeps every variable in the same interval, while isolated variables are treated as before. A witness for $q'$ over $\theta(D_q)$ then transports back, the variable--variable atoms holding because $\theta$ orders each component exactly as $\nu$ does.

In~\cite{klug1988conjunctive}, containment is checked over canonical databases for all possible orderings of the variables of $q$ together with all constants of the two queries. Theorem~\ref{thm:cmp-criterion} reduces this construction twice over: only the orderings \emph{within} each connected component of $\CG$ are considered, and each component is ordered only against the constants relevant to it.

\subsection{Splitting on Order Conflicts}\label{sec:addable}

\looseness=-1 The test of Theorem~\ref{thm:cmp-criterion} still enumerates, within each
connected component of $\CG$, every ordering of the variables consistent with $q$. Most of
them carry no information. The test only probes whether the order of $q$ can be arranged
so as to \emph{violate} a comparison induced from $q'$. This section isolates the pairs on
which the two queries disagree. An order that the induced comparisons do not contest may
be imposed on $q$ outright. What resists is a \emph{cyclic} conflict. Splitting on the
three possible orders of the pair it relates settles such a conflict, and each case pins
the whole component to a single assignment.

Comparison atoms may be added independently in each component of $\CG$, so we fix one
component $C$ throughout. All the order information relevant to $C$ fits into a single
directed graph, defined as follows. The
{\em opposite graph} $\OG=(V,E_{q}\cup E_{\tox{q}})$ has the vertex set
$V=\vars(C)\cup\compCons{C}$. Each of its edges points upward in the order and is labeled
$<$ or $\le$ by the strictness of the relation it records. The set $E_q$ records the order
that $q$ forces. It has a {\em constant edge} between consecutive constants of
$\compCons{C}$, an {\em endpoint edge} between each $x\in\vars(C)$ and each finite endpoint
of $\feasible{x}$, and an {\em atom edge} for every comparison atom $u\comp v$ of $q$.
Every canonical assignment satisfies all of $E_q$.

The set $E_{\tox q}$ records what the test is hunting for. A comparison atom
$y_1\comp y_2$ of $q'$ with \textcolor{black}{$y_1$ partially matching $x_1$ and $y_2$ partially matching $x_2$}, where
$x_1,x_2\in\vars(C)$, induces the relation $x_1\comp x_2$. If the comparison atoms and constants of $q$
already decide this relation, it receives the same verdict
in every canonical database, and no edge is added. Otherwise we add the {\em reverse edge} $x_2\to x_1$, labeled with the operator
$\altcomp$ opposite in strictness to $\comp$. Satisfying that edge is exactly violating
the induced relation. Induced comparisons against a constant of $\compCons{C}$ are handled
identically, with the constant as an endpoint.

An order that no reverse edge contests may be imposed on $q$ for free. Call $x_1<x_2$ a
{\em candidate comparison} if the comparison atoms of $q$ imply no strict order between
$x_1$ and $x_2$, so that the pair is unordered or ordered only non-strictly. Call it
{\em addable} if, in addition, $\OG$ has no path from $x_2$ to $x_1$. The path condition
keeps the addition from forcing further relations by transitivity.

\begin{example}[Running example]\label{exm:og-running}
Consider the hourly events $\mathrm{Login}$, $\mathrm{Backup}$ and $\mathrm{Audit}$, and
the pair
\[
\begin{aligned}
q() \leftarrow\ & \mathrm{Login}(x),\ \mathrm{Backup}(y),\ \mathrm{Audit}(z), && 0<x,y,z<10,\ x\le y,\\
q'() \leftarrow\ & \mathrm{Backup}(y'),\ \mathrm{Audit}(z'),\ \mathrm{Backup}(y''),\
\mathrm{Audit}(z''), && y'\le z',\ z''\le y''.
\end{aligned}
\]
\looseness=-1 The atoms of $q'$ induce $y\le z$ and $z\le y$, neither decided by $q$, so $\CG$ has the
single nontrivial component $C=\set{x,y,z}$, with $\compCons{C}=\set{0,10}$. Each induced
comparison contributes a reverse edge, labeled $<$
(Figure~\ref{fig:og-example}), and the two form the only directed cycle of $\OG$. The pair
$x<z$ is a candidate comparison, since $q$ implies no order between $x$ and $z$, and it is
addable: $\OG$ has no path from $z$ to $x$.
\end{example}

\begin{theorem} \label{thm:addable-edge-containment}
Let $x_1<x_2$ be an addable comparison, where $x_1,x_2 \in \vars(C)$, and let $q^*$ be
obtained from $q$ by adding it to the body of $q$. Then,
$q \subseteq q' \iff q^* \subseteq q'$.
\end{theorem}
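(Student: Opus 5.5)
The direction $q \subseteq q' \Rightarrow q^* \subseteq q'$ is immediate. The body of $q^*$ extends that of $q$, so $q^*(D)\subseteq q(D)$ for every ordinary database $D$.

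For the converse, we will work through the canonical-database criterion of Theorem~\ref{thm:cmp-criterion}. Suppose $q^*\subseteq q'$ and let $\theta\in\Theta_q$. We must show $\theta(\bar x)\in q'(\theta(D_q))$. Since $q^*$ has the same relational body as $q$, $D_q=D_{q^*}$. If $\theta(x_1)<\theta(x_2)$, then $\theta$ satisfies $q^*$, so $\theta(\bar x)\in q^*(\theta(D_q))\subseteq q'(\theta(D_q))$ and we are done. It therefore suffices to handle a $\theta$ with $\theta(x_2)\le\theta(x_1)$. For such a $\theta$ we construct a modified assignment $\theta^*$ that satisfies $q$ together with $x_1<x_2$. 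We will make sure that every comparison of $q'$ whose sides are realized through variables of $C$ (or constants of $\compCons{C}$) is satisfied by $\theta^*$ whenever it is satisfied by $\theta$. Then a valuation $\mu'$ witnessing $\theta^*(\bar x)\in q'(\theta^*(D_q))$ can be transported back, via $\theta\circ(\theta^*)^{-1}$ applied to $\mu'$, to a witness over $\theta(D_q)$. We will argue that $\theta^*$ refines the equalities of $\theta$, meaning $\theta^*(u)=\theta^*(v)$ implies $\theta(u)=\theta(v)$, so that this map is well defined. The structure is the same as in the proof of Theorem~\ref{thm:vc-criterion}. We must also check that it preserves the relational atoms and the head; this holds because we only move variables of $C$ inside their intervals of $\allintervals{C}$.

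The construction of $\theta^*$ is where the path condition enters. Let $U$ be the set of vertices of $\OG$ reachable from $x_2$, including $x_2$ itself. By addability, $x_1\notin U$, and no constant below $x_1$ lies in a position that forces it. We define $\theta^*$ by pushing every variable of $U$ upward. Concretely, we take a monotone shift that keeps each variable in its current interval of $\allintervals{C}$ where possible. Variables reached from $x_2$ are lifted strictly above $\theta(x_1)$, or $x_1$ is lowered below all of $U$, and the relative order of $\theta$ inside $U$ and outside $U$ is kept. Since $U$ is upward closed under $E_q$, every atom edge, endpoint edge and constant edge of $q$ with tail in $U$ also has its head in $U$, so all comparisons of $q$ remain satisfied. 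The remaining piece is the reverse edges in $E_{\tox q}$. Because $U$ is also closed under reverse edges, moving $U$ upward as a block never makes a reverse edge newly \emph{satisfied}. Equivalently, an induced comparison of $q'$ that held under $\theta$ still holds under $\theta^*$. Induced comparisons that $q$ already decides are unaffected, since $\theta^*$ still satisfies $q$. We finally round $\theta^*$ to canonical values, using the order-isomorphism argument of Theorem~\ref{thm:cmp-criterion} within the component.

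The main obstacle is the lifting step itself. Some variable in $U$ may be pinned to a singleton interval $\{k\}$, or sit at the top endpoint of its effective domain, so that it cannot move up. We would handle this by first trying to lower $x_1$ and everything that reaches $x_1$ in $\OG$; this set is downward closed and disjoint from $U$ by the path condition. The plan is to show that at least one of the two moves is possible. If both were blocked, there would be a path from $x_2$ up to a constant $k$ and a path from a constant $k'\le k$ to $x_1$. Because we are in the case $\theta(x_2)\le\theta(x_1)$, these paths combine through the constant edges into a path from $x_2$ to $x_1$, contradicting addability. Making this interval bookkeeping precise, including strict versus non-strict labels, is the delicate part.
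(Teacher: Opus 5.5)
Your overall route is the paper's: reorder a canonical assignment along an order extending $E_q$, keep the refinement property $\theta^*(u)=\theta^*(v)\Rightarrow\theta(u)=\theta(v)$, and transport a witness of $q'$ back. That property is all the relational atoms and the head need, and arguing directly rather than by contraposition is immaterial. But two steps fail.

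\textbf{First, your preservation invariant points the wrong way.} To pull a valuation $\mu'$ of $q'$ over $\theta^*(D_q)$ back to $\theta(D_q)$, every induced comparison that holds under $\theta^*$ must already hold under $\theta$. Equivalently, every reverse edge of $E_{\tox{q}}$ satisfied by $\theta$ must remain satisfied by $\theta^*$. You promise the converse, which says nothing about the comparison atoms of the transported valuation. The converse is also false for your lift: a reverse edge $a\to b$ with $a\notin U$ and $b\in U$ that $\theta$ violates can become satisfied once $b$ is raised. In the example below, reaching $x_1<x_2$ newly satisfies the reverse edge $x_1\overset{<}{\to}x_2$. A monotone lift of a set with no outgoing edges does preserve satisfied edges, so the correct invariant is the natural one. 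The paper enforces it by putting the set $F$ of reverse edges satisfied by $\theta$ into the graph that the new assignment must satisfy.

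\textbf{Second, and more seriously, the lift-or-lower dichotomy is false.} So is the plan to keep variables inside their intervals of $\allintervals{C}$. Take $q()\leftarrow R(x_1),S(x_2),\ 4\le x_1\le 10,\ 0\le x_2\le 6$ and $q'()\leftarrow R(y_1),S(y_2),\ y_2\le y_1$. The only reverse edge is $x_1\overset{<}{\to}x_2$, and the vertices reachable from $x_2$ are $x_2,6,10$, so $x_1<x_2$ is addable. For the canonical $\theta$ with $\theta(x_2)\in(0,4)$ and $\theta(x_1)\in(6,10)$, lifting is blocked by $x_2\le 6$ and lowering by $4\le x_1$. Yet placing both variables in $(4,6)$ with $x_1<x_2$ works, and both must change interval to do so.

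Your path-combination step breaks exactly here. With $k'=4\le k=6$, the constant edges run from $k'$ to $k$, so $x_2\to k$ and $k'\to x_1$ do not chain into a path from $x_2$ to $x_1$; that would need $k\le k'$.

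The missing idea is a simultaneous, global relayout. Since $\theta$ satisfies $E_q\cup F$, it is constant on the strongly connected components of $G=(V,E_q\cup F\cup\{x_1\overset{<}{\to}x_2\})$, and no cycle of $G$ carries a strict edge, because addability keeps the new edge off every cycle. Contract the components and sort them topologically, with the constant components in numerical order. Give each constant component its constant, and every other component a fresh value between its nearest constant components, increasing along the order. This yields, in one stroke, an assignment satisfying $q^*$, every edge of $F$, and the refinement property; this is Lemma~\ref{lem:reordering-transport}. Changing intervals is harmless, because induced variable--constant comparisons are themselves reverse edges. Finally, no rounding to canonical values is needed, since $q^*\subseteq q'$ holds over every database.
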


The nontrivial direction reorders a canonical database witnessing $q\not\subseteq q'$
along a topological order of $\OG$, so that it witnesses $q^*\not\subseteq q'$ as well.
Only the absence of a reverse path is used, so the theorem also covers
variable--constant candidate comparisons, with the constant fixed.

Applied repeatedly, the theorem orders more and more pairs. What it cannot order is a
pair on a directed cycle of $\OG$. In
either direction, the cycle supplies the forbidden reverse path. A cycle inside $E_q$
merely reflects an equality that $q$ already forces. The genuine obstruction is a cycle
through a reverse edge, where a comparison induced from $q'$ conflicts cyclically with the
order constraints of $q$. We call the reverse edges lying on a directed cycle of $\OG$ the
{\em cycle reverse edges}, and write $\cedge\subseteq E_{\tox q}$ for their set. We split
$q$ on them.

\looseness=-1 Each cycle reverse edge $e=(u\to v)$ admits three cases, namely $u<v$, $u=v$ and $v<u$. A
choice $\sigma\in\set{<,=,>}^{\cedge}$ picks one case for every $e\in\cedge$, and the
{\em trichotomy query} $q_\sigma$ adds the picked cases to $q$. It thereby fixes the full
order relation between the endpoints of every cycle reverse edge. We call the restrictions
to $\vars(C)$ of the canonical assignments of $q$ the {\em local assignments} of $C$, and
write $\Theta_q\!\!\upharpoonright_C$ for their set.
The test of Theorem~\ref{thm:cmp-criterion} enumerates these inside $C$.

\begin{theorem}\label{thm:trichotomy}
$q\subseteq q'$ if and only if $q_\sigma\subseteq q'$ for every trichotomy query
$q_\sigma$. Moreover, each $q_\sigma$ is tested over at most one local assignment.
\end{theorem}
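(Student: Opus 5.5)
The equivalence has two directions, and the ``moreover'' clause needs a separate argument; I would treat them in that order.

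\emph{The equivalence.} For the direction $q\subseteq q'\Rightarrow q_\sigma\subseteq q'$, each $q_\sigma$ is $q$ with extra comparison atoms, so $q_\sigma(D)\subseteq q(D)$ for every $D$. For the converse, take a database $D$ and a satisfying valuation $\nu$ of $q$ producing $\bar a\in q(D)$. For every cycle reverse edge $e=(u\to v)\in\cedge$, exactly one of $\nu(u)<\nu(v)$, $\nu(u)=\nu(v)$, $\nu(v)<\nu(u)$ holds, since $\nu$ takes rational values. Let $\sigma$ record these cases. Then $\nu$ satisfies $q_\sigma$, so $\bar a\in q_\sigma(D)\subseteq q'(D)$. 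If an edge has a constant endpoint, the same trichotomy applies with the constant fixed. This direction is routine.

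\emph{The ``moreover'' clause.} My plan is to show that, after fixing $\sigma$, every remaining unordered pair of $C$ can be ordered for free by Theorem~\ref{thm:addable-edge-containment}. Once the order of $C$ against $\compCons{C}$ and within $\vars(C)$ is total, the rank trimming of the representative values leaves exactly one local assignment. The steps are as follows.

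First, form the opposite graph $\OG_\sigma$ of $q_\sigma$. The picked cases turn into edges of $E_{q_\sigma}$. Every induced comparison that is contested only through a cycle reverse edge is now decided by $q_\sigma$, so it contributes no reverse edge. I will argue that $\OG_\sigma$ has no directed cycle through a reverse edge. Any such cycle must use some reverse edge $e$. If $e$ was a cycle reverse edge of $\OG$, its relation is decided in $q_\sigma$ and it is gone. If not, the cycle has to be created by new $E_{q_\sigma}$ edges, which are themselves decided relations, and I expect to show by transitivity that this forces $e$'s induced relation to be decided as well.

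Second, repeatedly pick a candidate comparison $x_1<x_2$ of the current query, orienting it so that there is no path $x_2\leadsto x_1$. Such an orientation exists because a pair connected by paths in both directions would either be forced equal by $E_q$ or lie on a cycle through a reverse edge, which the previous step excludes. Add the comparison using Theorem~\ref{thm:addable-edge-containment}. The same argument runs for variable--constant pairs and for the choice of interval. Each addition preserves acyclicity of reverse edges, because it introduces no path that closes a cycle. The process terminates with a total preorder on $\vars(C)\cup\compCons{C}$.

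Finally, under a total order, each variable's interval in $\allintervals{C}$ is fixed and its rank is fixed. By the definition of $\rankmax{x}{I}$, each variable then keeps exactly the representative matching its position, with ties sharing a value. So at most one local assignment survives, or none if $q_\sigma$ is unsatisfiable.

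\emph{Main obstacle.} The hardest step is the first one: showing that fixing $\sigma$ destroys every cycle through a reverse edge, in particular cycles created indirectly by the newly added $E_{q_\sigma}$ edges. A related difficulty is that the repeated additions must be applied to $q_\sigma$ itself, whose opposite graph differs from the original $\OG$. I would attempt this by induction on cycle length, using that the case chosen in $\sigma$ for $u\to v$ either coincides with the contested direction or contradicts it, and in both situations the corresponding pair becomes decided.
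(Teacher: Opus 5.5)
\textbf{Equivalence.} Your proof of the equivalence is the paper's. The inclusion $q_\sigma\subseteq q$ gives one direction, and recording the relation that a satisfying valuation $\nu$ realizes on each edge of $\cedge$ gives the other.

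\textbf{The gap in your first step.} For the ``moreover'' clause you follow the paper's architecture: first no reverse edge lies on a cycle after the split (Lemma~\ref{lem:pinned-pairs}), then saturation by addable comparisons and pinning (Lemma~\ref{lem:cycle-completion}). However, the step you call the main obstacle is left open, and the mechanism you propose for it does not work. Arguing that the induced relation of a reverse edge $e$ outside $\cedge$ becomes decided ``by transitivity'' fails in two ways.
\begin{itemize}
\item The path closing a cycle through $e$ may contain other reverse edges. These record nothing that $q_\sigma$ entails, so there is nothing to compose.
\item Even when that path lies in $E_{q_\sigma}$, strictness need not match. If $e=x_2\overset{\le}{\to}x_1$ encodes the induced $x_1<x_2$ and $q_\sigma$ entails only $x_1\le x_2$, the relation stays undecided.
\end{itemize}
What is missing is an \emph{expansion} into the original graph: every edge of the opposite graph of $q_\sigma$ is realized by a directed path of $\OG$.
\begin{itemize}
\item Edges of $E_q$ are already in $\OG$.
\item An atom that $\sigma$ adds between the endpoints of some $u\to v\in\cedge$ is realized by $u\to v$ itself, or, in the opposite direction, by the rest of a cycle of $\OG$ through $u\to v$.
\item Every reverse edge of $q_\sigma$ has a counterpart in $E_{\tox q}$, since partial matches only shrink.
\end{itemize}
A cycle through a reverse edge after the split therefore expands to a cycle of $\OG$ through its counterpart. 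That puts the counterpart in $\cedge$, whose relations $\sigma$ decides, a contradiction. The case you hoped to handle by transitivity simply cannot occur. This must be run after merging the variables that $q_\sigma$ entails equal. Lemma~\ref{lem:path-characterization} then connects members of a class in both directions, so expansions meeting at a merged vertex can be concatenated. The only cycles left are variable--constant pairs pinned by equality.

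\textbf{Your final step needs the same merging.} Read literally, $\rankmax{x}{I}$ counts tied variables individually, so a total preorder does not pin $C$. Suppose exactly $a,b,c$ are active on an open interval $I$ and the completed query entails $a=b<c$. Then $a$ and $b$ are trimmed to $\alpha_1$, but $c$ keeps both $\alpha_2$ and $\alpha_3$, giving two local assignments. Such a tie arises in your own procedure, for instance when $\sigma$ picks $=$ on a cycle between $a$ and $b$ and saturation then orients $a<c$. The paper collapses each equality class to one vertex. When $t$ merged vertices are active on $I$, the produced query uses only the lowest $t$ representatives, and the paper then checks that these values are still canonical for $q$.

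\textbf{Applying the addable-comparison theorem to produced queries.} The difficulty you flag, applying Theorem~\ref{thm:addable-edge-containment} to $q_\sigma$ and its successors, is resolved by a convention. Every produced query is read with the component, constants, intervals and representatives of $(q,q')$. One then verifies that Theorem~\ref{thm:cmp-criterion} and Corollary~\ref{cor:canonical-violating} survive this reading, which is the content of facts (F1)--(F3). Without this convention the comparison graph of $q_\sigma$ may differ from that of $q$, and ``one local assignment of $C$'' is not well defined.
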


\looseness=-1 The case split holds because every satisfying valuation of $q$ realizes exactly one of the
three cases of each cycle reverse edge. For the single-assignment claim, merging the
variables that $q_\sigma$ equates leaves an opposite graph with no reverse edge on a
cycle, so the theorem above strictly orders every pair, and rank trimming then leaves
each variable a single representative. So $C$ contributes at most $3^{|\cedge|}$ local
assignments, and fewer when some $q_\sigma$ is unsatisfiable.

\looseness=-1 The trichotomy family fixes the order of every cycle reverse edge separately, although
breaking each cycle once already removes the obstruction. Call a directed cycle of $\OG$
{\em non-strict} when every edge on it is labeled $\le$, and suppose $\OG$ has none.
Then the minimal sets of reverse edges that break all cycles suffice. A set $E\subseteq\cedge$ is a {\em match feedback edge set} ($\mfac$) if the graph
$(V,E_q\cup(E_{\tox q}\setminus E))$ is acyclic, and a {\em minimal} $\mfac$ if no proper
subset of $E$ is one. For each minimal $\mfac$ $E_i$ of $\OG$, the {\em feedback query}
$q_i$ adds to $q$ the relation $u\comp v$ asserted by every cycle reverse edge
$u\overset{\comp}{\to}v$ outside $E_i$, and the negation $v\altcomp u$ of every
such edge in $E_i$.
\begin{example}[Running example, cont.]\label{exm:feedback-running}
\looseness=-1 Both reverse edges of Example~\ref{exm:og-running} lie on its directed cycle, so
$\cedge=\set{y\to z,\ z\to y}$ and the minimal $\mfac$s are $E_1=\set{y\to z}$ and $E_2=\set{z\to y}$.
For $E_1$, the kept edge $z\to y$ contributes $z<y$ and the removed edge its negation
$z\le y$, so $q_1$ is equivalent to $q\wedge z<y$. Symmetrically, $q_2$ is equivalent to
$q\wedge y<z$. With the representatives $\set{2,6,8}$, the test of
Theorem~\ref{thm:cmp-criterion} runs over the $15$ local assignments of
$\Theta_q\!\!\upharpoonright_C$, the trichotomy split over $3^2=9$ cases, and each feedback
query over a single one. Since $C$ holds every variable of $q$, two canonical databases
decide the containment.
\end{example}

\begin{theorem}\label{thm:feedback-query-containment}
Assume that $\OG$ has no non-strict cycle. Then $q\subseteq q'$ if and only if $q_i\subseteq q'$ for every minimal $\mfac$ $E_i$ of $\OG$. Moreover, these feedback queries, each together with its local assignment, can be generated with polynomial delay in the sizes of $q$ and $q'$.
\end{theorem}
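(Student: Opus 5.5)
Both directions of the equivalence rest on a covering argument. For the direction ($\Rightarrow$), each $q_i$ is $q$ with extra comparison atoms, so $q_i\subseteq q\subseteq q'$ is immediate. For ($\Leftarrow$), the plan is to show that every satisfying valuation $\nu$ of $q$ (on the component $C$) satisfies the body of some feedback query $q_i$. Fix $\nu$ and let $F_\nu\subseteq\cedge$ be the set of cycle reverse edges $u\overset{\comp}{\to}v$ that $\nu$ \emph{violates}, i.e.\ with $\nu(v)\altcomp\nu(u)$. Every edge of $E_q$ is satisfied by $\nu$, and so is every reverse edge outside $F_\nu$. Hence, if a directed cycle survived in $(V,E_q\cup(E_{\tox q}\setminus F_\nu))$, all its edges would be satisfied by $\nu$. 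Under the hypothesis that $\OG$ has no non-strict cycle, that cycle contains a $<$ edge, which would force $\nu(w)<\nu(w)$, a contradiction. Therefore $F_\nu$ is an $\mfac$.

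The next step is to pass from $F_\nu$ to a minimal $\mfac$ $E_i\subseteq F_\nu$. This is where the polarity matters. For edges in $E_i$ the feedback query $q_i$ asserts the negation, which $\nu$ satisfies since $E_i\subseteq F_\nu$. For edges outside $E_i$ it asserts the edge relation itself, but an edge in $F_\nu\setminus E_i$ is violated by $\nu$. So shrinking to a minimal set breaks the naive argument, and this is the main obstacle. I would first try a reordering argument in the spirit of Theorem~\ref{thm:addable-edge-containment}. In $G_i=(V,E_q\cup(E_{\tox q}\setminus E_i))$, which is acyclic, I would construct a valuation $\nu'$ that satisfies all of $G_i$ together with the negations of the edges in $E_i$. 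The aim is for $\nu'$ to still witness $q$ with the same relational image, so that any counterexample to $q\subseteq q'$ yields a counterexample to $q_i\subseteq q'$.

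Concretely, I would run a topological sort of the acyclic graph obtained from $G_i$ by adding the reversed edges of $E_i$. Acyclicity of this graph needs an argument. A cycle in it would use some reversed $E_i$ edge $v\to u$, and by minimality of $E_i$ that edge lies on a cycle of $G_i\cup\{u\to v\}$, which gives a path $v\leadsto u$ in $G_i$. Combining such paths with the strictness hypothesis should yield a contradiction. This combination is the delicate point I expect to spend the most time on. Once acyclicity is established, I would place the values of $C$ in the open intervals of $\allintervals{C}$ consistently with this order, using the representatives. Because the constants of $\compCons{C}$ are vertices and all endpoint edges are kept, the effective domains are respected, so $q$ is still satisfied. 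The relations induced from $q'$ then receive the verdicts fixed by $q_i$. Following the proof of Theorem~\ref{thm:addable-edge-containment}, I would transfer non-containment along this reordering: if $q'$ fails on the canonical database of $\nu$, it also fails on the reordered one, because a failure is certified only by violations of induced relations, and $q_i$ preserves the violated relations on $E_i$.

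For the enumeration claim, I would reduce generation of the minimal $\mfac$s to enumerating minimal feedback arc sets restricted to the arcs in $\cedge$, that is, minimal subsets of $\cedge$ hitting every directed cycle. This is a monotone hypergraph transversal-type problem, and I would cite or adapt a polynomial-delay flashlight search: extend a partial choice of in/out decisions edge by edge, and prune whenever no minimal completion exists. The existence check is a polynomial test that combines acyclicity with the condition that each chosen edge has a private cycle. For each $E_i$ produced, the single local assignment follows as in Theorem~\ref{thm:trichotomy}: merge the forced equalities, strictly order every pair via Theorem~\ref{thm:addable-edge-containment}, and apply rank trimming, all in polynomial time.
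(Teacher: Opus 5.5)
\textbf{The equivalence is essentially the paper's argument.} Your $F_\nu$ is the paper's set $U$ of violated cycle reverse edges. The paper takes it for a violating canonical assignment from Corollary~\ref{cor:canonical-violating}, which is the setting Lemma~\ref{lem:reordering-transport} is stated for. Reordering along the DAG $G_i$ and transporting the failure is exactly that lemma. Three corrections:

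\emph{The step you call delicate is immediate.} Each reversed edge $v\to u$ you add runs parallel to a path from $v$ to $u$ in $G_i$, so reachability, and hence acyclicity, is unchanged, and no strictness is needed. You need not add these edges at all: a placement that is strict along every edge of $G_i$ already gives $\nu'(v)<\nu'(u)$, which satisfies $v\altcomp u$ for either label.

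\emph{The transport is not justified by what happens on $E_i$.} Hypothesis (Tr) needs every reverse edge that $\nu$ satisfies, i.e.\ every induced comparison $\nu$ violates, to stay in the graph you reorder along. This holds because $E_i\subseteq F_\nu$. Edges of $F_\nu\setminus E_i$ that $\nu'$ newly satisfies are harmless, since they only make more comparisons of $q'$ fail. The reordered database also need not have the same relational image; the pullback goes through (Inj).

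\emph{A slip in the $\mfac$ argument.} A reverse edge outside $\cedge$ may be violated by $\nu$, contrary to your sentence. Such an edge lies on no cycle, so your argument that $F_\nu$ is an $\mfac$ survives.

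\textbf{The genuine gap is the enumeration.} A flashlight search has polynomial delay only if deciding whether a partial in/out choice extends to a \emph{minimal} $\mfac$ is polynomial. Your edge-by-edge test does not decide this. Whether a chosen edge keeps a private cycle depends on which undecided edges the completion must add to break the remaining cycles, and these requirements interact. In general the extension problem is NP-hard. The minimal feedback vertex sets of a bidirected graph are the minimal vertex covers of the underlying graph, and extending a given set to a minimal vertex cover is NP-complete. The hypergraph-transversal view does not help either: the cycle hypergraph is exponentially large, and polynomial delay is open for general hypergraphs.

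\textbf{How the paper gets polynomial delay.} It uses a reduction rather than a search:
\begin{itemize}
\item It assumes $E_q$ is acyclic. Otherwise the hypothesis gives a strict cycle in $E_q$, and no $\mfac$ exists.
\item It builds the digraph $T$ on $\cedge$, with an edge $e\to f$ whenever the head of $e$ reaches the tail of $f$ by a possibly empty $E_q$-path.
\item It proves that the $\mfac$s of $\OG$ are exactly the feedback vertex sets of $T$. The correspondence preserves inclusion, so minimal sets correspond.
\item It removes the self-loop vertices, which lie in every solution, and enumerates the rest with the polynomial-delay algorithm of Schwikowski and Speckenmeyer.
\end{itemize}

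Your per-output computation of the local assignment via Lemma~\ref{lem:cycle-completion} is fine once you note that each feedback query is satisfiable. Your own reordering argument shows this when applied to any canonical assignment.
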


The forward direction is immediate. For the
converse, the cycle reverse edges violated by a witnessing assignment contain a minimal
$\mfac$ $E_i$, and rebuilding along a topological order of the rest, as in
Theorem~\ref{thm:addable-edge-containment}, witnesses $q_i\not\subseteq q'$. The opposite
graph of each pair $(q_i,q')$ is acyclic, so saturation pins one local assignment, and the
minimal $\mfac$s are exactly the minimal feedback vertex sets of a digraph on $\cedge$,
enumerable with polynomial delay~\cite{schwikowski2002enumerating}.

The assumption cannot be dropped. Take $q() \leftarrow R(x,y)$, $R(y,x)$ and
$q'() \leftarrow R(u,v)$, $R(u',v')$, $u<v$, $v'<u'$, whose opposite graph is the
non-strict cycle $x\to y\to x$. Containment fails exactly on the
canonical databases with $\theta(x)=\theta(y)$, so the two feedback queries, $q\wedge y<x$
and $q\wedge x<y$, both miss it and are contained in $q'$. The trichotomy split still
decides the pair, through the case sending both cycle edges to $=$.

\section{Combining Nulls and Comparisons}\label{sec:combined}

\looseness=-1 A query in $\cmpcombclass$ has comparison atoms and is evaluated under the three-valued
semantics of Section~\ref{sec:prelim}, over databases that may contain $\perp$. Its
subclass $\combclass$ allows only variable--constant comparisons. The two features meet in
exactly one place. A comparison with a $\perp$ argument is unsatisfied, so a variable
occurring in a comparison atom is never nulled. Nulling and ordering therefore act on
disjoint variables, and the constructions of Sections~\ref{NullCQ} and~\ref{IneCQ} compose
instead of interfering.

\looseness=-1 For a query $q$, let $\compvars(q)$ be the variables of its comparison atoms, and let
$\nonnull(q)=\join(q)\cup\compvars(q)$ be its variables that cannot take the value
$\perp$. Let $V_{\mathrm{CJ}}$ consist of the covered variables $v$ of
Section~\ref{NullCQ} with $\POS q v=\POS{q'}{w'}$ for some $w'\in\nonnull(q')$, and set
\[
V_t=V_{\mathrm{CJ}}\setminus\nonnull(q),\quad
V_n=\vars(q)\setminus\bigl(\nonnull(q)\cup V_{\mathrm C}\cup\head(q)\bigr),\quad
V_f=\vars(q)\setminus(V_t\cup V_n).
\]
\looseness=-1 This is the partition of Section~\ref{NullCQ} with $\nonnull$ in place of $\join$, in both
queries.
A compared variable is non-null for the same reason a join variable is, so it plays the same role on each side.
On the side of $q'$ it forces a covered variable of $q$ to be toggled, and on the side of
$q$ it is frozen.

\looseness=-1 A \emph{combined assignment} $\theta$ sends each $x$ to a value in $c(x)$ if $x\in V_f$,
to $\perp$ if $x\in V_n$, and to either if $x\in V_t$. Its non-$\perp$ part satisfies the
comparison atoms of $q$. The set of variables that $\theta$ sends to $\perp$ is its
\emph{null pattern}, and the \emph{combined canonical family} is
$\Dcomb(q,q')=\set{\theta(D_q)\mid\theta\text{ combined}}$.

\begin{theorem}[Combined characterization]\label{thm:comb-criterion}
For $q,q'\in\cmpcombclass$, $q\subseteq_\perp q'$ if and only if $q(D)\subseteq
q'(D)$ for every $D\in\Dcomb(q,q')$.
\end{theorem}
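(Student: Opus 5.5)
The forward direction is immediate: every $D\in\Dcomb(q,q')$ is a database, possibly with nulls, so $q\subseteq_\perp q'$ gives $q(D)\subseteq q'(D)$ for it. For the converse, fix a database $D$ with nulls and a satisfying valuation $\nu$ of $q$ over $D$ producing $\bar a=\nu(\bar x)$. We must show $\bar a\in q'(D)$. The plan composes the two earlier arguments along the disjoint split of $\vars(q)$ into the variables $\nu$ sends to $\perp$ and those it sends to rationals. Since a compared variable is never nulled by a satisfying valuation, $N=\nu^{-1}(\perp)$ lies inside $\vars(q)\setminus\nonnull(q)$. So $N$ is a legal null pattern in the sense of Theorem~\ref{thm:null-farre}, with $\nonnull$ in place of $\join$. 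On the complement, $\nu$ is an ordinary valuation satisfying all comparison atoms of $q$.

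\textbf{Choosing the mimicking assignment.} First choose a combined assignment $\theta$. On $V_t$, $\theta$ nulls exactly $N\cap V_t$. On the remaining non-null variables, $\theta$ mimics $\nu$ exactly as in the proofs of Theorems~\ref{thm:vc-criterion} and~\ref{thm:cmp-criterion}. For isolated vertices of $\CG$ this means avoiding the same effective domains of matched variables. Within each component of $\CG$ it means reproducing the order, the equalities, and the intervals of $\nu$. This is well defined because comparisons only involve non-null variables, so the component construction never sees a variable of $V_t\cup V_n$.

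Next consider the variables of $V_f$ that $\nu$ nulls, and the variables of $V_n$ that $\nu$ keeps non-null. These are the non-join, non-compared variables on which $\theta$ and $N$ disagree. They are handled exactly as in Theorem~\ref{thm:null-tight}. By hypothesis, $q'$ returns $\theta(\bar x)$ over $\theta(D_q)$ via some valuation $\theta'$, which amounts to a homomorphism $h$ from the relational body of $q'$ into that of $q$. This homomorphism sends $\nonnull(q')$ to non-nulled variables, and its head agrees up to nulling.

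\textbf{Re-witnessing and the main obstacle.} We argue, as in the proof of Theorem~\ref{thm:null-tight}, that $h$ also witnesses the null version $\theta_N D_q$. A variable of $q$ in the image of a $\nonnull(q')$ variable whose position set it covers is in $V_{\mathrm{CJ}}$. Hence it is either frozen because it lies in $\nonnull(q)$, or toggled consistently with $N$. Variables of $V_n$ are never the image of a head or $\nonnull$ variable of $q'$ in a way that matters. The covered variables in $V_{\mathrm{CN}}$ are frozen, but they are only hit by nullable head variables of $q'$, so nulling them in $N$ does no harm.

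Then compose: $\nu\circ\theta^{-1}\circ\theta'$ on non-null values, and $\perp$ where $h$ lands in $N$. This gives a valuation of $q'$ over $D$. Its relational atoms hold because $\nu$ satisfies the corresponding atoms of $q$. Its comparison atoms hold because their variables land on non-null variables of $q$, where $\theta$ mimics $\nu$ as in Theorem~\ref{thm:cmp-criterion}. Its $\nonnull$ variables avoid $\perp$ by the previous step. Its head equals $\bar a$.

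The step I expect to be the main obstacle is this last gluing. We must check that the homomorphism obtained on the combined database, which may have nulls at $V_n$ positions where $\nu$ has values, can be rerouted so that its comparison variables and join variables never touch a position that $N$ nulls but $\theta$ froze, or the reverse. I would handle it by a case analysis on each variable $y'$ of $q'$: whether $y'$ is in $\nonnull(q')$, in the head, or neither. I would use the position-set equalities defining $V_{\mathrm{CJ}}$ and $V_{\mathrm{CN}}$ to show that the image of each case is treated identically by $\theta$ and $N$ wherever it matters.
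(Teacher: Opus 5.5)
Your skeleton matches the paper's: build a combined assignment $\theta$ that nulls $V_n$, follows $\nu$'s null pattern on $V_t$ and mimics $\nu$ elsewhere, then transport a witness of $q'$ over $\theta(D_q)$ back to $D$. But the step you flag as the main obstacle is left without the idea that resolves it, and the tool you borrow for it does not work here. A valuation $\theta'$ of $q'$ over $\theta(D_q)$ does not amount to a homomorphism from the body of $q'$ into that of $q$. The proof of Theorem~\ref{thm:null-tight} reads such a map off by position only because its frozen values are the variables themselves, hence pairwise distinct. Here $\theta$ is not injective: finite-witness and singleton-interval constants are shared, and $\nu$-equal variables of a component get the same representative. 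So two occurrences of one join variable of $q'$ can be realized by different variables of $q$, and (h1)/(h2) cannot simply be re-checked for $N$. The transport has to go through values, as in the paper. For each atom of $q'$ fix a realizing atom of $q$. Set $\nu'(y)=\nu(z)$ for any $z$ with $\theta(z)=\theta'(y)$ when $\theta'(y)\neq\perp$, and use the positional realizer $z$ when $\theta'(y)=\perp$. Your clause ``$\perp$ where $h$ lands in $N$'' misses the case of a realizer $z\in V_n\setminus N$, where $\nu'(y)$ must be $\nu(z)\neq\perp$.

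The decisive gap is that you never say what $\theta$ assigns to the variables of $V_f$ that $\nu$ nulls. ``Handled exactly as in Theorem~\ref{thm:null-tight}'' is not a rule, since there freezing is automatically injective. These values must be \emph{private}, i.e., taken by no other variable of $q$. For a variable isolated in $\CG$, use the fresh constant of an infinite witness, which exists because its effective domain is $\Q$. For a variable in a nontrivial component, use an unused open-interval representative.

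The component case does occur, contrary to your claim that the component construction never sees nullable variables: uncompared variables of $q$ enter components through comparisons induced from $q'$. Enough representatives remain because such a variable is active with full rank on every interval, while the mimicking uses one representative per $\nu$-class.

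Privacy does two jobs. It makes $\nu'$ well defined at these values. It also yields $\POS{q'}{w}\subseteq\POS q z$ whenever some $w\in\nonnull(q')$ is sent onto such a $z$. Since $\POS q z$ is a singleton, $z$ is then covered, so $w$ puts $z$ into $V_{\mathrm{CJ}}$, contradicting $z\notin V_t$. For an uncovered head variable this step needs the inclusion $\POS{q'}{x'_i}\subseteq\POS q{x_i}$, which the paper obtains from a separate assignment $\theta_1$ giving $x_i$ a private value.

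Only this argument shows that compared variables of $q'$ are realized at variables that $\nu$ keeps non-null, where $\theta$ genuinely mimics $\nu$. Landing on a variable that $\theta$ keeps non-null, as you state, is not enough for the comparison transport. Without privacy, the position-set case analysis you plan has no position inclusion to start from.
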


\looseness=-1 The proof follows those of Theorems~\ref{thm:null-tight} and~\ref{thm:cmp-criterion}. A
combined assignment mimics a satisfying valuation of $q$ in both features at once, and a
witness for $q'$ then transports back. What is new is that the two halves must agree on
which variables may be nulled, and $\nonnull(q')$ in place of $\join(q')$ is what makes
them agree. A variable of $\nonnull(q')$ landing on a covered variable that is frozen
rather than toggled would place it in $V_{\mathrm{CJ}}$, contradicting its membership in
$V_{\mathrm{CN}}$, and this one argument settles join and compared variables alike.

\enlargethispage{2\baselineskip}

\looseness=-1 Let the \emph{value count} $\valcnt(u)$ be the number of values a combined assignment may
give $u$, namely $1$, $|c(u)|$, or $|c(u)|+1$ according as $u\in V_n$, $u\in V_f$, or
$u\in V_t$. For the trivial components of $\CG$ these choices are independent, so
they contribute the product of the counts. Inside a nontrivial component they are not. Its
variables draw their values from the representatives of Section~\ref{sec:cmp} and must
realize a common ordering, so the component contributes its orderings rather than a
product. The trichotomy split of Section~\ref{sec:addable} removes the exception. For every nontrivial component $C$ we
build its opposite graph on the vertices outside $V_n$, write $\cedge(C)$ for its cycle
reverse edges, and let $\cedge(q)=\bigcup_C\cedge(C)$. Nulling a vertex removes every cycle
through it, so a null pattern can only shrink the case split, and each case still forces
$C$ to a single local assignment. Within a case a variable of a nontrivial component
therefore has a single value, or two if it lies in $V_t$, and we use these counts for such
variables and the counts above for all others.

\looseness=-1 Two points remain about the decomposition of Section~\ref{sec:vc-components}. First, we
treat $\perp$ as a canonical value of every variable of $V_n\cup V_t$, so that the
covering-matches account for it, and we read $\RG$ accordingly. Second, $\RG$ is now too
coarse. A component of $\CG$ is pinned as a whole, so its variables can no longer be
varied independently, and a variable--variable comparison atom of $q'$ constrains the
values realized for its two sides together. We therefore obtain $\RG^{+}$ from $\RG$ by
adding a clique on every nontrivial component of $\CG$ and, for each variable--variable
comparison atom $y\comp y'$ of $q'$, a clique on $\cmctox y\cup\cmctox{y'}$. Deleting a
legal separator of $\RG^{+}$ leaves each of these sets inside a single piece, as the
stitching of Theorem~\ref{thm:s-exhaustive-fixed} requires. On $\combclass$ every component
of $\CG$ is trivial, so $\cedge(q)=\emptyset$ and $\RG^{+}=\RG$. Write
$\Delta=\max_{x}|\tox x|$ for the largest match set. The width of $\RG^{+}$ is
$w=\min_S(|S|+\max_i|C_i|)$, over its legal separators $S$.

\begin{theorem}[Combined multiplicity]\label{thm:comb-full-mult}
Let $q,q'\in\cmpcombclass$. For a legal separator $S$ of
$\RG^{+}$ with components $C_1,\dots,C_m$ of $\RG^{+}-S$, containment
$q\subseteq_\perp q'$ is decided by
\[
3^{\,|\cedge(q)|}\cdot\Bigl(\prod_{v\in S}\valcnt(v)\Bigr)\cdot
\max_{i\in[m]}\prod_{u\in C_i}\valcnt(u)
\ \le\
3^{\,|\cedge(q)|}\,(2\Delta+2)^{\,|S|+\max_i|C_i|}
\]
canonical databases. The size of the family is therefore fixed-parameter
tractable in $(\Delta,w,|\cedge(q)|)$, and on $\combclass$ in $(\Delta,w)$.
\end{theorem}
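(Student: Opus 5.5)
Our plan is to assemble the statement from the three reductions already established, applied in a fixed order: first the trichotomy split, then the null/value choices per variable, then the separator decomposition. We start from Theorem~\ref{thm:comb-criterion}, so containment is decided by $\Dcomb(q,q')$. Next we split on the cycle reverse edges. For each nontrivial component $C$ we use the opposite graph built on the vertices outside $V_n$, and form the trichotomy queries $q_\sigma$ for $\sigma\in\set{<,=,>}^{\cedge(q)}$. Every satisfying valuation of $q$ (nulls included) realizes exactly one case per edge, and an edge with a nulled endpoint is simply absent. Hence, as in Theorem~\ref{thm:trichotomy}, $q\subseteq_\perp q'$ iff $q_\sigma\subseteq_\perp q'$ for all $\sigma$. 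This contributes the factor $3^{|\cedge(q)|}$.

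We then fix $\sigma$ and a null pattern restricted to $V_t$. We argue, as in the single-assignment part of Theorem~\ref{thm:trichotomy}, that every non-null variable of a nontrivial component is pinned to one representative. Nulling a vertex only deletes cycles, so the merged opposite graph stays free of cycle reverse edges, and Theorem~\ref{thm:addable-edge-containment} together with rank trimming pins the order. Consequently each variable $u$ ranges over at most $\valcnt(u)$ values within the case: its value count if $u$ lies in a trivial component, and $1$ or $2$ (the latter when $u\in V_t$) otherwise. Conditioned on $\sigma$, the combined canonical family is thus a product family over variables with these per-variable ranges. The exception is that values inside a nontrivial component are coupled; this is harmless, because the clique we added to $\RG^{+}$ keeps such a component inside a single piece of $\RG^{+}-S$.

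The third step applies Theorem~\ref{thm:s-exhaustive-fixed} to the pair $(q_\sigma,q')$ with the graph $\RG^{+}$, treating $\perp$ as an extra canonical value of the variables in $V_n\cup V_t$. The main obstacle is re-verifying the stitching argument in this setting. Three things must hold when a valuation of $q'$ is glued from per-component witnesses. First, atoms of $q'$ are still realized within one piece; the co-match edges guarantee this, with $\perp$ counted in cohorts. Second, each variable--variable comparison $y\comp y'$ of $q'$ has both sides served by the same piece; the added clique on $\cmctox y\cup\cmctox{y'}$ guarantees this. Third, join and compared variables of $q'$ never land on $\perp$; this is a local property of each piece's witness and is preserved by gluing. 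Once these hold, Lemma~\ref{lem:s-exhaustive-size}, with $|c(\cdot)|$ replaced by $\valcnt(\cdot)$, gives a smallest $S$-exhaustive family of size $\prod_{v\in S}\valcnt(v)\cdot\max_i\prod_{u\in C_i}\valcnt(u)$. Summing over the $3^{|\cedge(q)|}$ cases yields the left-hand bound.

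Finally, for the inequality we use Proposition~\ref{prop:witness_set_size}: $|c(u)|\le 2\Delta+1$ for trivial components, so $\valcnt(u)\le 2\Delta+2$. Within a case the pinned counts are at most $2\le 2\Delta+2$. Each factor is therefore bounded by $2\Delta+2$, and the product has $|S|+\max_i|C_i|$ factors. Choosing $S$ to attain the width $w$ makes the family size $3^{|\cedge(q)|}(2\Delta+2)^{w}$, which is fixed-parameter tractable in $(\Delta,w,|\cedge(q)|)$. On $\combclass$ all components of $\CG$ are trivial, so $\cedge(q)=\emptyset$ and $\RG^{+}=\RG$, leaving the parameters $(\Delta,w)$.
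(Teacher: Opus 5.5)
\textbf{There is a genuine gap in your middle step.} Your overall plan (combined criterion, then a $3^{|\cedge(q)|}$ case split that pins each nontrivial component, then separator stitching with $\perp$ in the cohorts and the extra cliques of $\RG^{+}$) is the paper's plan, and your three stitching checks and the numeric bound are fine. The problem is that you do the trichotomy split and the pinning on \emph{queries}, citing Theorems~\ref{thm:trichotomy} and~\ref{thm:addable-edge-containment}. Those results are proved only for null-free containment, through Corollary~\ref{cor:canonical-violating} and Lemma~\ref{lem:reordering-transport}. Under three-valued semantics, adding a comparison atom forbids nulling its variables.

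Endpoints of cycle reverse edges, and of the edges added by saturation, may lie in $V_t$. Such a variable is never compared in $q$, yet it joins a nontrivial component of $\CG$ as soon as a compared variable of $q'$ partially matches it. Every $q_\sigma$ then discards exactly the valuations that null it, so ``$q\subseteq_\perp q'$ iff $q_\sigma\subseteq_\perp q'$ for all $\sigma$'' fails. For example, take $q(x)\leftarrow R(x,a),T(a),R(u,b)$ and $q'(x')\leftarrow R(x',a_0),R(j,a_1),T(a_1)$, $R(w,a_2),R(j_2,a_3),T(a_3),R(w_2,a_4)$, $j\le w$, $w_2\le j_2$. Then $V_t=\{x,u\}$, and $\cedge(q)$ consists of the two reverse edges between $x$ and $u$. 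Every $q_\sigma$ is contained in $q'$: send $x',j,w,j_2,w_2$ to the value of $x$ and each $a_i$ to that of $a$. Yet $\{R(\perp,1),T(1),R(5,\perp)\}$ refutes $q\subseteq_\perp q'$.

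Your remark that an edge with a nulled endpoint is ``simply absent'' points to the repair, but it cannot be expressed by a single query. The pinning must then be re-proved, not quoted. Fix the null pattern $P$ first, take $\sigma$ on $\cedge_P(C)\subseteq\cedge(C)$, and show that a violating combined assignment with pattern $P$ realizing $\sigma$ stays violating when its values on the surviving variables of $C$ are replaced by one local assignment $\theta_{P,\sigma}$. This needs a new transport that pulls a valuation $h$ of $q'$ back with $h'(y)=\perp$ whenever $h(y)=\perp$, and is well defined via the merged classes and pinned constants. That is Lemma~\ref{lem:component-pinning-nulls}, used together with Corollary~\ref{cor:combined-counterexample}.

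\textbf{A second, smaller gap} is your claim that coupling inside a nontrivial component is harmless because the clique of $\RG^{+}$ keeps it inside a single piece. The clique only places $C\setminus S$ inside one piece $C_i$; $C$ may also meet $S$. Then the local assignment of $C$ depends on the null choices on both $C\cap S$ and $C\cap C_i$. Some combinations of an $S$-vector with a piece-vector are unrealizable in the case, because their trichotomy query is unsatisfiable. So the case family is not a product, and the construction of Lemma~\ref{lem:s-exhaustive-size} cannot be applied verbatim. In particular its exact ``smallest family'' size is unavailable; you can get only an upper bound, which is all you need.

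The paper obtains that bound by working with per-variable choice vectors. In each member of the product construction, it replaces the choices on an unrealizable piece $C_j$ by those of a realizable assignment that agrees on $S$, and omits the member if none exists. $S$-exhaustiveness survives because, for the relevant pair $(\theta,i)$, $\theta$ itself is realizable and the replacement touches only pieces outside $S\cup C_i$.
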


\looseness=-1 Replacing a component by its forced local assignment leaves the null pattern untouched, so
the covering-match decomposition applies with the value counts above. The numeric bound
holds because $\valcnt(u)\le 2|\tox u|+2\le 2\Delta+2$ by
Proposition~\ref{prop:witness_set_size}.

\begin{corollary}\label{cor:fpt-full}
If $\Delta$, $w$ and $|\cedge(q)|$ are constant, containment is in
\textnormal{NP}, and in \textnormal{P} when $q'$ admits polynomial evaluation.
If no opposite graph has a non-strict cycle, then $3^{\,|\cedge(q)|}$
improves to $\prod_C\ell_C$, where $\ell_C$ is the number of minimal
$\mfac$s of the opposite graph of $C$.
\end{corollary}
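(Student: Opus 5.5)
The corollary follows from Theorem~\ref{thm:comb-full-mult} and Theorem~\ref{thm:comb-criterion}, in the same way that Corollary~\ref{cor:single-db} follows from Theorem~\ref{thm:null-tight}. We treat the two sentences separately.

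\emph{First sentence.} Fix constants bounding $\Delta$, $w$ and $|\cedge(q)|$. A legal separator $S$ of $\RG^{+}$ attaining $w$ can be found in polynomial time by brute force. There are at most $|\vars(q)|^{w}$ candidates of size at most $w$. For each candidate, legality is checked pairwise over the covering-match sets, and the components of $\RG^{+}-S$ are found by a graph traversal. Computing $\RG^{+}$ is polynomial: the match sets, witness sets (of size at most $2\Delta+1$ by Proposition~\ref{prop:witness_set_size}), cohorts, covering-matches and $\CG$ are all defined by polynomially many interval and position checks. By Theorem~\ref{thm:comb-full-mult}, containment is then decided by at most $3^{|\cedge(q)|}(2\Delta+2)^{w}$ canonical databases, a constant number.

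We enumerate this family explicitly. We range over the trichotomy cases $\sigma$, and for each case we take the forced local assignment of every nontrivial component, which is computable by saturating with addable comparisons (Theorem~\ref{thm:addable-edge-containment}) and rank trimming. We then range over the values on $S$ and on one component at a time, filling in the remaining variables by fixed defaults as in the $S$-exhaustive family of Lemma~\ref{lem:s-exhaustive-size}. Each database has size linear in $q$.

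Non-containment is then certified by guessing one database $\theta(D_q)$ of the family. Containment is certified by guessing, for each of the constantly many databases, a valuation of $q'$ producing $\theta(\bar x)$; checking such a valuation is polynomial. This places containment in NP. If $q'$ admits polynomial evaluation, we simply evaluate $q'$ on each of the constantly many databases, which places containment in P.

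\emph{Second sentence.} If no opposite graph has a non-strict cycle, we replace the trichotomy split inside each nontrivial component $C$ by the feedback queries of Theorem~\ref{thm:feedback-query-containment}. That theorem shows that the $\ell_C$ feedback queries of $C$ decide containment, each with a single local assignment. Since components of $\CG$ lie inside single pieces of $\RG^{+}$ (they are cliques there), the splits of distinct components are independent. The factor $3^{|\cedge(q)|}=\prod_C 3^{|\cedge(C)|}$ therefore becomes $\prod_C \ell_C$.

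\emph{Main obstacle.} The delicate point is the interaction with nulls. Theorem~\ref{thm:feedback-query-containment} is stated for $\cmpclass$ over ordinary databases, whereas here the opposite graph is built on vertices outside $V_n$, and toggled variables may be nulled. We will argue as in the discussion before Theorem~\ref{thm:comb-full-mult}: nulling a toggled vertex only deletes cycles, so for a fixed null pattern every minimal $\mfac$ of the reduced graph is contained in some minimal $\mfac$ of the full graph, restricted to the surviving edges. Each witness of non-containment under that pattern is therefore caught by one of the $\ell_C$ feedback queries with the pattern applied. Deleting vertices cannot create a non-strict cycle, so the hypothesis survives under every null pattern.

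If this transfer fails, the fallback is to state the count per null pattern. This loses at most the factor already accounted for by the $V_t$ value counts, so the claimed improvement still holds with $\ell_C$ read as the maximum over null patterns of the number of minimal $\mfac$s.
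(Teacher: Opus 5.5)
Your argument for the first sentence is the paper's: brute-force a legal separator with $|S|\le w$, note that the witness sets, $\RG^{+}$ and the pinned local assignments are polynomial, and certify containment by one valuation of $q'$ per database of the constant-size family. Your aside that non-containment is certified by guessing a database is not an NP certificate, since that check is a coNP test, but nothing depends on it.

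The gap is in the second sentence, where you locate it yourself. Theorem~\ref{thm:feedback-query-containment} is about ordinary databases, so under a null pattern $P$ you must work on $\OG_P$. Your step from ``nulling only deletes cycles'' to ``every minimal $\mfac$ $E$ of $\OG_P$ is contained in the surviving part of some minimal $\mfac$ $E''$ of the full graph'' neither suffices nor follows.

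\emph{It does not suffice.} Suppose the surviving part of $E''$ also contains some $e\notin E$, for instance an edge whose cycles in the full graph all pass through a nulled vertex. Then the feedback query of $E''$ under $P$ asserts the negation of $e$. The witness reordered along $E$ satisfies $e$, so this query does not catch it.

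\emph{It does not follow from cycle deletion alone.} In a general digraph, deleting a vertex incident to edges of $E_q$ can even increase the number of minimal $\mfac$s.

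What you need is equality, and it rests on a fact you never invoke: a nulled variable is uncompared, so it carries no edge of $E_q$, only reverse edges. Let $F$ be the cycle reverse edges with an endpoint in $P$. Every cycle through $P$ meets $F$, so $E\cup F$ is an $\mfac$ of the full graph. Any minimal $E''\subseteq E\cup F$ then has a surviving part that is an $\mfac$ of $\OG_P$ contained in $E$, hence equal to $E$ by minimality. Distinct $E$ give distinct $E''$, which is what bounds the cases by $\ell_C$. This is part~(2) of Lemma~\ref{lem:feedback-pinning-nulls}.

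You also still owe the transport with nulls. You must show that replacing a violating combined assignment on $C$ by the feedback local assignment of $E$ on $\OG_P$ keeps $\theta(\bar x)\notin q'(\theta(D_q))$ when the valuation of $q'$ may send variables to $\perp$. Lemma~\ref{lem:reordering-transport} does not handle $\perp$, and your text asserts the conclusion without argument.

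The paper's proof substitutes Lemma~\ref{lem:feedback-pinning-nulls} for Lemma~\ref{lem:component-pinning-nulls} in the proof of Theorem~\ref{thm:comb-full-mult}, and part~(3) of that lemma is exactly this transport. The feedback query of $E$ read on $\OG_P$ has an acyclic opposite graph. So its completed local assignment gives the surviving variables of $C$ distinct, private open-interval values. This makes the pull-back $h'$ (keeping $\perp$ wherever $h$ has it) well defined, and the three-way case analysis on each comparison atom of $q'$ then uses $E\subseteq U$.

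Finally, your fallback does not establish the statement. Reading $\ell_C$ as a maximum over null patterns is a different claim, and showing that this maximum equals the stated $\ell_C$ is precisely part~(2).
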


\looseness=-1 All three parameters are local. The largest match set $\Delta$ measures how many variables
of $q'$ compete for a single variable of $q$, the width $w$ measures the separator
structure of $\RG^{+}$, and $|\cedge(q)|$ counts the comparisons induced from $q'$ that conflict
cyclically with the order of $q$. None of them grows with the size of the queries on its
own.
Finding a legal separator that
minimizes $w$ is \textnormal{NP}-complete even for $\RG$
(Theorem~\ref{thm:separator-nphard}), and $\cedge(q)$ is empty whenever the induced comparisons
create no cyclic conflict. The bound accordingly specializes to that of
Lemma~\ref{lem:s-exhaustive-size} in the absence of nulls and of variable--variable
comparisons, and to the bound $2^{|V_t|}$ of Theorem~\ref{thm:null-tight} in the absence
of comparisons. Over null-free databases the feedback queries and their local assignments
are moreover generated with polynomial delay
(Theorem~\ref{thm:feedback-query-containment}). A worked example, computing the three
parameters and the resulting family, is given in Example~\ref{exm:combined-worked}.

\section{Related Work}\label{sec:related}

\looseness=-1 {\em Containment and equivalence\/} of conjunctive queries were characterized by Chandra and
Merlin~\cite{chandra1977optimal} through containment mappings, and the theory was
extended to relational expressions with union and difference by Aho, Sagiv, and
Ullman~\cite{aho1979equivalences} and by Sagiv and
Yannakakis~\cite{sagiv1980equivalences}. Containment of CQs is NP-complete, with
tractable fragments including acyclic queries~\cite{yannakakis1981algorithms}, restricted
fanout~\cite{saraiya1991subtree,sagiv1992minimizing} and bounded
treewidth~\cite{chekuri2000conjunctive}, and Barcel\'o et al.~\cite{barcelo2014does} study
when tractability of evaluation transfers to containment.

\looseness=-1 Klug~\cite{klug1988conjunctive} showed that the homomorphism criterion does not extend
to {\em conjunctive queries with inequalities}, although it remains complete for
semi-interval queries, and gave a $\Pi_2^p$ upper bound for containment through the
canonical databases of all orderings of the variables. Van der
Meyden~\cite{van1992complexity} established the matching lower bound. Kolaitis, Martin, and
Thakur~\cite{kolaitis1998complexity} located the boundary between the NP-complete and
$\Pi_2^p$-complete cases of inequality predicates. Afrati, Li, and Mitra~\cite{afrati2004containment} identify
further classes of queries with arithmetic comparisons for which homomorphism-based
tests remain complete. Afrati and Damigos~\cite{afrati2025semiinterval}
chart the semi-interval case, separating the classes in which a single containment mapping
certifies containment from those that remain $\Pi_2^p$-complete, and bounding the number
of mappings required. Their axis is the size of the \emph{mapping} disjunction. Ours is
orthogonal, counting \emph{canonical databases}, of which the homomorphism-complete cases
are the single-database endpoint.

\looseness=-1 Farr\'e et al.~\cite{farre2007containment} showed that {\em null-containment\/} is
$\Pi_2^P$-complete in general and NP-complete for Boolean queries, and characterized it
through the null versions of the canonical database (Theorem~\ref{thm:null-farre}).
They also give a sufficient but not necessary condition, a homomorphism
mapping every join variable to a join variable or a constant, which is itself NP-complete
to decide.

\looseness=-1 On the practical side,
{\em SQL equivalence checkers\/} are one-sided. Symbolic provers certify equivalence on
restricted syntactic fragments and return no
counterexamples~\cite{zhou2019automated,zhou2022spes,ding2023proving,wang2024qed}. Bounded checkers
search for a counterexample database up to a size bound and certify nothing beyond
it~\cite{chu2017cosette,he2024verieql,zhao2025polygon,klopfenstein2026spotit}, and
test-generation tools guarantee only coverage of a criterion or a mutation
space~\cite{chandra2015data,chen2025parseval,miao2019explaining}. We instead delimit families of databases
whose passing certifies containment outright.

\section{Conclusion}\label{sec:conclusion}
\enlargethispage{2\baselineskip}

\looseness=-1 This paper asked how few test databases suffice to decide containment, for two classes in
which one database does not: CQs over databases with nulls, and CQs with order
comparisons. For nulls, only the toggled variables need vary, giving $2^{|V_t|}$ test
databases in place of $2^{|\njoin(q)|}$. For comparisons, canonical values from witness
sets decide containment, and the family shrinks by decomposition and by splitting along
the cyclic order conflicts. Combined, the number of test databases is fixed-parameter
tractable in three local parameters, small for many query pairs.

\looseness=-1 Several directions remain open. The nearest is to admit disequality atoms ($s\neq s'$),
whose effective domains are unions of intervals rather than intervals. Further out lie
other semantics and richer languages. Under bag semantics two CQs are equivalent exactly
when they are isomorphic, while containment remains
open, and slight extensions already make it
undecidable~\cite{chaudhuri1993optimization,jayram2006containment,marcinkowski2024bag}, so it is not clear
what a canonical-database test should even be. Aggregation, unions and negation are beyond our constructions as well. Finally, our
families are generated in no particular order, while a checker hoping to
refute~\cite{he2024verieql,zhao2025polygon} wants a distinguishing database early.

\bibliography{references}
\newpage
\appendix
\section{Use of Artificial Intelligence}

\looseness=-1 We disclose the following uses of Anthropic’s Claude Code. During the exploratory stages of the research, we used the tool to search for potential counterexamples to tentative hypotheses and to perform preliminary sanity checks.

\looseness=-1 All proof strategies, constructions, mathematical arguments, and case analyses presented in the paper were developed by the authors. We additionally used Claude Code to draft portions of the appendices from detailed author-prepared outlines. The resulting prose was thoroughly reviewed and revised by the authors, and all mathematical statements, proofs, and citations in the final paper were independently checked and validated. The authors bear full responsibility for the content of the paper.

\section{Full Proofs}\label{app:proofs}

\subsection{Proofs for Section~\ref{NullCQ}}\label{app:proofs-nulls}

\begin{proof}[Proof of \cref{thm:null-tight}]
Recall that $\nullclass$ queries contain only relational atoms, so a satisfying valuation of
$q'$ over a database $D$ is a map $h\colon\vars(q')\to\adom(D)$, where $\adom(D)$ is the
set of values occurring in the tuples of $D$, that places every relational
atom of $q'$ into $D$ and sends every join variable of $q'$ to a non-$\perp$ value. We use two
elementary facts about position sets. First, every non-join variable of $q$ has a single occurrence, 
hence a singleton position
set. Second, being \emph{covered}, and lying in $V_{\mathrm{CJ}}$ or $V_{\mathrm{CN}}$, depends
only on a variable's position set: if $v,v'\in\njoin(q)$ satisfy $\POS{q}{v}=\POS{q}{v'}$, then
$v$ is covered (respectively, in $V_{\mathrm{CJ}}$, in $V_{\mathrm{CN}}$) iff $v'$ is, since each
of these conditions is phrased entirely in terms of the position set. 

\looseness=-1 \emph{The matching homomorphism.}
Let $M\subseteq\njoin(q)$, and suppose $q'$ returns $\theta_M\bar x$ over $\theta_M D_q$,
witnessed by a valuation $h$. Every value of $\adom(\theta_M D_q)$ is either $\perp$ or a frozen
variable $u\notin M$ (identified with its constant), and every tuple of $\theta_M D_q$ is the
$\theta_M$-image of a relational atom of $q$. Hence for each atom $R(\bar y)$ of $q'$ we may fix
an atom $R(\bar u)$ of $q$ with $h(\bar y)=\theta_M(\bar u)$. Reading off, for each occurrence of
a variable $y$ of $q'$ at a position $p$, the variable of the matched $q$-atom at position $p$,
defines a map $\phi\colon\vars(q')\to\vars(q)$. This is well defined. A non-join variable of $q'$
has a single occurrence. If $y$ is a join variable then $h(y)\neq\perp$ is a frozen variable
$w\notin M$, so $\theta_M$ maps the matched variable at \emph{every} occurrence of $y$ to the
constant $w$, forcing that variable to be $w$. Thus $\phi(y)=w\notin M$ for join $y$, independently
of the occurrence chosen.

By construction $\phi$ maps each atom $R(\bar y)$ of $q'$ to the atom $R(\phi(\bar y))$ of $q$,
i.e.\ it is a homomorphism from the relational body of $q'$ to that of $q$, and it satisfies
\begin{enumerate}
  \item[(h1)] $\phi(u)\notin M$ for every join variable $u$ of $q'$;
  \item[(h2)] $\theta_M\bigl(\phi(x'_i)\bigr)=\theta_M(x_i)$ for every head position $i$,
\end{enumerate}
where (h2) restates $h(\bar x')=\theta_M\bar x$ via $\theta_M\circ\phi=h$. Conversely, for any
$N\subseteq\njoin(q)$, a homomorphism $\phi$ from the relational body of $q'$ to that of $q$
satisfying (h1) and (h2) with $M:=N$ gives the valuation $\theta_N\circ\phi$, which witnesses that
$q'$ returns $\theta_N\bar x$ over $\theta_N D_q$: relational atoms are satisfied because $\phi$ is
a homomorphism, each join variable $u$ has $\theta_N(\phi(u))\neq\perp$ by (h1), and the output is
$\theta_N\bar x$ by (h2). We use this criterion in both directions.

For the forward direction, if $q\subseteq_\perp q'$, then by \cref{thm:null-farre}, $q'$ returns the head of $q$ over every null version of $D_q$. Since each member of the family is a null version $\theta_{V\cup V_n}D_q$ with $V\cup V_n\subseteq\njoin(q)$, $q'$ returns the head of $q$ over every member of the family.

For the converse, assume that $q'$ returns $\theta_{V\cup V_n}\bar x$ over $\theta_{V\cup V_n}D_q$ for every $V\subseteq V_t$. By \cref{thm:null-farre}, it suffices to fix an arbitrary $N\subseteq\njoin(q)$ and show that $q'$ returns $\theta_N\bar x$ over $\theta_ND_q$. We will use the identity $V_n=\vars(q)\setminus\bigl(\join(q)\cup V_{\mathrm C}\cup\head(q)\bigr)$,
which follows immediately from $V_f\cup V_t=\join(q)\cup V_{\mathrm C}\cup\head(q)$ and the definitions of $V_f,V_t,$ and $V_n$.

We first observe that $\POS{q'}{x'_i}\subseteq\POS{q}{x_i}$ for every head position $i$. Indeed, take $V=\emptyset$, so the corresponding member of the family is $\theta_{V_n}D_q$. Since $\head(q)\cap V_n=\emptyset$, every head variable is frozen. Let $\psi$ be a matching homomorphism witnessing that $q'$ returns over this database. By (h2), $\theta_{V_n}(\psi(x'_i))=\theta_{V_n}(x_i)=x_i$,
hence $\psi(x'_i)=x_i$. Since $\psi$ is a homomorphism, every position of $x'_i$ in $q'$ is therefore a position of $x_i$ in $q$.

Now set $V:=V_t\cap N$ and $M:=V\cup V_n$. By hypothesis, $q'$ returns $\theta_M\bar x$ over $\theta_MD_q$. Let $\phi$ be a matching homomorphism witnessing this, so $\phi$ satisfies (h1) and (h2) for $M$. We show that the same $\phi$ satisfies both conditions for $N$.
\emph{Condition (h2).}
Fix a head position $i$. Since $x_i\in\head(q)$, we have $x_i\notin V_n$, and hence $x_i\in M$ iff $x_i\in V$.

\begin{itemize}
    \item If $x_i\notin V$, then $x_i\notin M$, so $\theta_M(x_i)=x_i$. By (h2) for $M$, $\phi(x'_i)=x_i$, and therefore $\theta_N(\phi(x'_i))=\theta_N(x_i)$.
    \item \looseness=-1 If $x_i\in V=V_t\cap N$, then $\theta_M(x_i)=\perp$, so (h2) for $M$ gives $\phi(x'_i)\in M$. We claim that $\phi(x'_i)\in N$. Since $M=V\cup V_n$ and $V\subseteq N$, it remains only to rule out $\phi(x'_i)\in V_n$.

    Suppose $e:=\phi(x'_i)\in V_n$. Then $e\in\njoin(q)$, so $\POS{q}{e}$ is a singleton, and since $\phi$ is a homomorphism,
    $\POS{q'}{x'_i}\subseteq\POS{q}{e}$.
    
  \looseness=-1 We also have $\POS{q'}{x'_i}\subseteq\POS{q}{x_i}$ by the observation above, while $\POS{q}{x_i}$ is a singleton because $x_i\in N\subseteq\njoin(q)$. Since $\POS{q'}{x'_i}$ is nonempty, $ \POS{q}{e}=\POS{q'}{x'_i}=\POS{q}{x_i}$.
    But $x_i\in V_t\subseteq V_{\mathrm C}$ is covered, so $e$ is covered as well, contradicting $e\in V_n$. Thus $\phi(x'_i)\in V\subseteq N$, and hence $\theta_N(\phi(x'_i))=\perp=\theta_N(x_i)$.
    
\end{itemize}

\emph{Condition (h1).}
\looseness=-1 Let $u$ be a join variable of $q'$. By (h1) for $M$,
$\phi(u)\notin M=V\cup V_n$.
Suppose towards a contradiction that $\phi(u)\in N$. Since $N\subseteq\njoin(q)$, $\POS{q}{\phi(u)}$ is a singleton, say $\{(R,c)\}$. As $\phi$ is a homomorphism and $u$ occurs in $q'$,
$\POS{q'}{u}\subseteq\POS{q}{\phi(u)}=\{(R,c)\}$,
and hence $\POS{q'}{u}=\POS{q}{\phi(u)}$.

\begin{itemize}
    \item If $\phi(u)$ is covered, then $\POS{q}{\phi(u)}=\POS{q'}{x''}$ for some $x''\in\head(q')$. Since $u$ is a join variable and $\POS{q'}{u}=\POS{q}{\phi(u)}$, the variable $\phi(u)$ satisfies the defining condition of $V_{\mathrm{CJ}}$. Thus
    $\phi(u)\in V_{\mathrm{CJ}}=V_t$,
    and therefore $\phi(u)\in V_t\cap N=V\subseteq M$, a contradiction.

    \item If $\phi(u)$ is not covered, then $\phi(u)\notin V_{\mathrm C}$. Moreover, $\phi(u)\notin\join(q)$ because $\phi(u)\in N\subseteq\njoin(q)$, and $\phi(u)\notin V_n$ because $\phi(u)\notin M$. By the identity above, $\phi(u)\in\head(q)$, so we write $\phi(u)=x_i$. The observation above then gives $\POS{q'}{x'_i}\subseteq\POS{q}{x_i}=\set{(R,c)}$.

    Since $x'_i$ occurs in $q'$, equality follows. Hence $x_i$ is covered by the head variable $x'_i$ of $q'$, contradicting the assumption that $\phi(u)=x_i$ is not covered.
\end{itemize}
Thus $\phi(u)\notin N$, so (h1) also holds for $N$.

Therefore (h1) and (h2) hold for $N$, so $\theta_N\circ\phi$ witnesses that $q'$ returns
$\theta_N\bar x$ over $\theta_N D_q$. As $N\subseteq\njoin(q)$ was arbitrary, \cref{thm:null-farre}
yields $q\subseteq_\perp q'$.
\end{proof}

\subsection{Proofs for Section~\ref{sec:matching-witness}}\label{app:proofs-matching-witness}

\looseness=-1 The following example, referenced in Section~\ref{sec:matching-witness}, illustrates the witness-set construction on a larger instance, where the maximally infinite-contradictable sets can be read directly from the diagram of the effective domains in Figure~\ref{fig:witness_interval}.

\begin{example}\label{exm:witness-large}
Consider the queries $q$ and $q'$ over $\mathbb{Q}$:
\[
\begin{aligned}
q() &\leftarrow  R(x_1),\ x_1 \ge 0,\ x_1 \le 13 \\
q'() &\leftarrow  R(y_1),\ R(y_2),\ R(y_3),\ R(y_4),\ R(y_5), \\
& y_1 \le 3,\ y_2 \ge 8,\ y_3 \le 9,\ y_3 \ge 2,\ y_4 \le 6,\ y_5 \le 11,\ y_5 \ge 5
\end{aligned}
\]

We compute the witness set $\mathcal{W}_{x_1}$. The effective domain of $x_1$ is $\fullfeasible{q}{x_1} = [0, 13]$. For each $y_i \in \text{match}(x_1)$, the incompatibility set $E_i = \fullfeasible{q}{x_1} \setminus \fullfeasible{q'}{y_i}$ is as follows:
\[
\begin{aligned}
E_1 = (3, 13], \quad E_2 = [0, 8), \quad E_3 = [0, 2) \cup (9, 13], \\
E_4 = (6, 13], \quad E_5 = [0, 5) \cup (11, 13]
\end{aligned}
\]

\looseness=-1 Figure~\ref{fig:witness_interval} illustrates the interaction between these domains. This illustration shows how the set of incompatible variables changes along the effective domain of $x_1$ as we cross the endpoints of each $\fullfeasible{q'}{y_i}$.
The labels $-y_i$ and $+y_i$ show when a variable's interval starts or ends within the domain of $x_1$. Specifically, $-y_i$ marks the point where $y_i$'s interval begins, meaning it is no longer incompatible with $x1$. Conversely, $+y_i$ marks where $y_i$'s interval ends, making it incompatible from that point.

From Figure~\ref{fig:witness_interval}, we can see that there are four maximally infinite-contradictable sets of variables. These sets, highlighted in red boxes, define the following witnesses in $\mathcal{W}_{x_1}$:
\begin{enumerate}
    \item $S_1 = \{y_2, y_3, y_5\}$ giving the witness $E_1 = [0,2)$.
    \item $S_2 = \{y_1, y_2, y_5\}$ giving the witness $E_2 = (3,5)$.
    \item $S_3 = \{y_1, y_2, y_4\}$ giving the witness $E_3 = (6,8)$.
    \item $S_4 = \{y_1, y_3, y_4, y_5\}$ giving the witness $E_4 = (11,13]$.
\end{enumerate}

Here all comparisons are non-strict, so no finite witness arises and $\finvals{x_1}=\emptyset$. As a result, $\mathcal{W}_{x_1} = \{[0,2), (3,5), (6,8), (11,13]\}$. Note that each witness represents a region where no single $y_i$ from the corresponding maximal set can satisfy the query $q$.
\end{example}

\begin{figure}[t]
\centering
\begin{tikzpicture}[
  x=1.0cm, y=0.78cm,
  interval/.style={black, line width=0.8pt},
  axis/.style={black, line width=1.0pt},
  vline/.style={blue, dashed, line width=0.8pt},
  settxt/.style={font=\scriptsize, fill=white, inner sep=1.5pt},
  eventtxt/.style={font=\scriptsize},
  ylab/.style={font=\small, anchor=west, inner sep=2pt},
  hi/.style={draw=red, line width=0.8pt, fill=white, inner sep=1.5pt}
]

\foreach \x in {0,2,3,5,6,8,9,11,13} {
  \draw[vline] (\x,-1.75) -- (\x,-0.5);
  \draw[vline] (\x,0.05) -- (\x,3.7);
}

\draw[interval] (0,0.7) -- (3,0.7);
\node[ylab] at (3.05,0.7) {$y_1$};

\draw[interval] (8,0.7) -- (13,0.7);
\node[ylab] at (13.05,0.7) {$y_2$};

\draw[interval] (2,1.4) -- (9,1.4);
\node[ylab] at (9.05,1.4) {$y_3$};

\draw[interval] (0,2.1) -- (6,2.1);
\node[ylab] at (6.05,2.1) {$y_4$};

\draw[interval] (5,2.8) -- (11,2.8);
\node[ylab] at (11.05,2.8) {$y_5$};

\draw[axis] (0,0) -- (13,0);

\foreach \x/\lab in {0/0,2/2,3/3,5/5,6/6,8/8,9/9,11/11,13/13} {
  \node[anchor=north, inner sep=2pt, fill=white] at (\x,-0.08) {\scriptsize \lab};
}

\node[settxt,hi] at (1,-1.05) {$\left\{y_2,y_3,y_5\right\}$};
\node[settxt]    at (2.5,-1.05) {$\left\{y_2,y_5\right\}$};
\node[settxt,hi] at (4,-1.05) {$\left\{y_1,y_2,y_5\right\}$};
\node[settxt]    at (5.5,-1.05) {$\left\{y_1,y_2\right\}$};
\node[settxt,hi] at (7.0,-1.05) {$\left\{y_1,y_2,y_4\right\}$};
\node[settxt]    at (8.5,-1.05) {$\left\{y_1,y_4\right\}$};
\node[settxt]    at (10.0,-1.05) {$\left\{y_1,y_3,y_4\right\}$};
\node[settxt,hi] at (12.0,-1.05) {$\left\{y_1,y_3,y_4,y_5\right\}$};

\node[eventtxt, text=blue, fill=white, inner sep=1.5pt] at (2,-2){$-y_3$};
\node[eventtxt, text=blue, fill=white, inner sep=1.5pt] at (3,-2){$+y_1$};
\node[eventtxt, text=blue, fill=white, inner sep=1.5pt] at (5,-2){$-y_5$};
\node[eventtxt, text=blue, fill=white, inner sep=1.5pt] at (6,-2){$+y_4$};
\node[eventtxt, text=blue, fill=white, inner sep=1.5pt] at (8,-2){$-y_2$};
\node[eventtxt, text=blue, fill=white, inner sep=1.5pt] at (9,-2){$+y_3$};
\node[eventtxt, text=blue, fill=white, inner sep=1.5pt] at (11,-2){$+y_5$};

\end{tikzpicture}
\caption{Contradictable sets.}\label{fig:witness_interval}
\end{figure}

\begin{proof}[Proof of \cref{prop:witness_set_size}]
Since the comparison operators are $<$ and $\le$, the effective domain $\feasible x$ and
every $\fullfeasible{q'}{y}$ with $y\in\tox x$ are intervals. For a point $p\in\feasible x$, let $\varout(p)=\set{y\in\tox x\mid p\notin\fullfeasible{q'}{y}}$ be
the set of variables in $\tox x$ whose effective domain does not contain $p$. Grouping the points of $\feasible x$ by the set $\varout(p)$ splits $\feasible x$ into maximal segments, on which $\varout(p)$ does not change. We call each such segment a \emph{cell}.
Each cell is an interval or a single point. We first bound the number of cells, and then show that every
witness in $\witset_x$ picks out a cell of its own.

\smallskip\noindent \looseness=-1 There are at most $2|\tox x|+1$ cells.
As $p$ moves through $\feasible x$, the set $\varout(p)$ can change only when $p$ 
crosses an endpoint of some $\fullfeasible{q'}{y}$. 
Each $y\in\tox x$ has just two endpoints and can therefore change $\varout$ at most twice, once when $p$ enters $\fullfeasible{q'}{y}$ and once when it leaves. This gives at most $2|\tox x|$ changes in total. The cells are the maximal segments on which $\varout(p)$ does not change, so their number is one more than the number of changes, that is, at most $2|\tox x|+1$.

\smallskip\noindent\looseness=-1\emph{Singleton witnesses use point cells.}
Let $\set c$ be a witness with $c\in \finvals x$, so $c\in E_x(S)$ for some $S$ with $E_x(S)$ finite. Then $S\subseteq\varout(c)$, so $E_x(\varout(c))\subseteq E_x(S)$ is finite and still contains $c$. The cell of $c$ lies inside the finite set $E_x(\varout(c))$, which forces it to be the single point $\set c$. Therefore, distinct values $c$ occupy distinct point cells.

\smallskip\noindent\emph{Infinite witnesses use interval cells.}
Let $E_x(S)$ be a witness with $S$ maximally infinite-contradictable. As $E_x(S)$ is infinite, it contains a nondegenerate interval, and hence an interval cell $C$, on which $\varout(p)$ is the same set $T\supseteq S$. Were $T$ strictly larger than $S$, the inclusion $C\subseteq E_x(T)$ would make $T$ infinite-contradictable and larger than $S$, contradicting maximality. So $\varout$ equals $S$ on $C$, and distinct sets $S$ pick out distinct interval cells.

\smallskip
Point cells and interval cells are never the same cell, so the singleton and infinite witnesses are matched to disjoint families of cells. Therefore
\[
|\witset_x|\ \le\ \#\text{cells}\ \le\ 2|\tox{x}|+1 . \qedhere
\]
\end{proof}

\begin{proof}[Proof of \cref{thm:vc-criterion}]
Clearly, if $q$ is contained in $q'$, then  $q(D)\subseteq q'(D)$ for all $D\in \mathcal D(\Theta_q)$. We prove the other direction. 

\looseness=-1 Assume that $q(D)\subseteq q'(D)$ for all $D\in \mathcal D(\Theta_q)$.
Let $D$ be any database instance. We assume that $q(D) \neq \emptyset$, as otherwise containment is trivial. 
Therefore, there is a valuation $\nu: \vars(q) \to \adom(D)$ and a tuple $\bar{a} \in q(D)$ such that $\nu(\bar{x}) = \bar{a}$ and  $D\models \nu(\body(q))$. 

\looseness=-1 We construct a valuation $\nu'$ for $q'$ that also produces $\bar a$ over $D$. To find this valuation, we first identify a canonical database that maps the variables in $q$ to constants in a way that is similar to (and only more restrictive than) the valuation of $q$ over $D$. 
\looseness=-1 For each \(x \in \vars(q)\), consider the set \(\varout(\nu(x))\) of variables in \(\tox x\) whose effective domains do not contain \(\nu(x)\), and the subdomain \(E_x(\varout(\nu(x)))\) induced by this set. Note that this subdomain is non-empty, since \(\nu(x) \in \feasible{x}\).

We now choose, for each \(x\in\vars(q)\), a witness \(F_x\in\witset_x\), and use it to define a canonical assignment \(\theta\in\Theta_q\). Recall \(\nu(x)\in E_x(\varout(\nu(x)))\).

\smallskip
\noindent\emph{If \(E_x(\varout(\nu(x)))\) is finite:} since \(\nu(x)\in E_x(\varout(\nu(x)))\), the definition of \(\finvals x\) gives \(\nu(x)\in \finvals x\), so \(\{\nu(x)\}\in\witset_x\). We set \(F_x:=\{\nu(x)\}\) and \(\theta(x):=\nu(x)\), which is a canonical value since \(c(x,\{\nu(x)\})=\{\nu(x)\}\).

\smallskip
\noindent\emph{If \(E_x(\varout(\nu(x)))\) is infinite:} then \(\varout(\nu(x))\) is infinite-contradictable, so we may extend it to a maximally infinite-contradictable set \(S_x\supseteq \varout(\nu(x))\). We set \(F_x:=E_x(S_x)\), which is an infinite witness in \(\witset_x\), and \(\theta(x):=c(x,F_x)\), the constant chosen for $F_x$.

\smallskip
In both cases \(\theta(x)\in F_x\), and \(F_x\subseteq E_x(\varout(\nu(x)))\) (in the infinite case because \(\varout(\nu(x))\subseteq S_x\)). Hence \(\theta(x)\in E_x(\varout(\nu(x)))\).

Now, consider the canonical database $D_\theta = \theta(D_q)$. 
Since $\theta \in \Theta_q$, the assignment $\theta$ satisfies all atoms of $q$ over $D_\theta$, so $\theta(\bar x) \in q(D_\theta)$, and by assumption $\theta(\bar x) \in q'(D_\theta)$. Therefore, there exists a valuation
\(\theta' : \vars(q') \to \adom(D_\theta)\) such that $\theta'(\bar x) = \theta(\bar x)$ and $D_\theta \models \theta'(\body(q'))$. Observe that we have a mapping $\theta'$ from $q'$ to $D_\theta$, a mapping $\theta$ from $q$ to $D_\theta$ and a mapping $\nu$ from $q$ to $D$ (see Figure~\ref{fig:mappings}). If $\theta$ was invertible, composing these mapping would immediately yield a mapping from $q'$ to $D$ as is required. Unfortunately, $\theta$ is not invertible, as multiple variables in $q$ may be mapped to the same constant by $\theta$. However, as we show next, we can overcome this problem as all such variables in $q$ will also be mapped to the same value in $D$.

We define the mapping \(\nu':\vars(q')\to\adom(D)\) as follows. For each \(y\in\vars(q')\), choose some \(x\in\vars(q)\) with \(\theta'(y)=\theta(x)\), and set \(\nu'(y):=\nu(x)\). Such an \(x\) exists because \(q\) is in normalized form, so every value occurring in \(D_\theta\) is the \(\theta\)-image of a variable of \(q\).

\looseness=-1 We check that \(\nu'\) is well-defined, i.e., independent of the chosen \(x\). Suppose \(\theta(x_1)=\theta(x_2)\). If this common value was chosen from an infinite witness, then by construction it is distinct from all other canonical values, so \(x_1=x_2\). Otherwise it is a singleton value, and then \(\theta(x_i)=\nu(x_i)\) for \(i=1,2\), so \(\nu(x_1)=\theta(x_1)=\theta(x_2)=\nu(x_2)\). In either case the value \(\nu'(y)\) does not depend on the chosen preimage.

Now, it remains to show that $D\models \nu'(\body(q'))$. First, consider a relational atom
$R(\bar{y})$ in the body of $q'$. We must show that this atom is satisfied by $\nu'$ in $D$.
Since $\theta'$ satisfies $q'$ in $D_\theta$, we have $R(\theta'(\bar{y})) \in D_\theta$.
By the construction of $D_\theta$,  there exists an atom $R(\bar{z})$ in the body of $q$ such that
\(\theta(\bar{z}) = \theta'(\bar{y})\).
Moreover, since \(D \models \nu(\body(q))\), we get \(R(\nu(\bar{z})) \in D\).
By the definition of \(\nu'\) and well-definedness, we have \(\nu'(\bar{y}) = \nu(\bar{z})\), and therefore \(R(\nu'(\bar{y})) \in D\).
Hence, the relational atoms of \(q'\) are satisfied in \(D\) under \(\nu'\).

\looseness=-1 It remains to show that $\nu'$ satisfies all comparison atoms of $q'$. Let $y\in \vars(q')$, and let $x$ be the variable used in defining $\nu'(y)$, so $\theta'(y)=\theta(x)$ and $\nu'(y)=\nu(x)$. We show that $\nu'(y)\in \fullfeasible{q'}{y}$.

If $F_x=\{\nu(x)\}$, then $\nu'(y)=\nu(x)=\theta(x)=\theta'(y)$. Since $\theta'$ satisfies the comparison atoms of $q'$, we have $\theta'(y)\in \fullfeasible{q'}{y}$, and hence $\nu'(y)\in \fullfeasible{q'}{y}$.

Now suppose $F_x$ is infinite, so $\theta(x)$ is the constant chosen for  $F_x$. We first show that $y\in\tox{x}$. Since $\theta'(y)=\theta(x)\in\fullfeasible{q'}{y}$ (as $\theta'$ is a valuation) and $\theta(x)\in F_x\subseteq\feasible{x}$, we have $\feasible{x}\cap\fullfeasible{q'}{y}\neq\emptyset$. For the positional condition, let $(R,j)\in\POS{q'}{y}$. Then $y$ occurs in position $j$ of some atom $R(\bar y)$ of $q'$, so $R(\theta'(\bar y))\in D_\theta$, and the value in position $j$ is $\theta'(y)=\theta(x)$. Since $\theta(x)$ is the canonical constant assigned to the infinite witness $F_x$, it is distinct from every other canonical value. Thus, this tuple can only have been produced from an atom of $q$ in which $x$ occurs in position $j$. Hence $(R,j)\in\POS{q}{x}$.

Assume, towards a contradiction, that $\nu'(y)=\nu(x)\notin\fullfeasible{q'}{y}$. Since $y\in\tox{x}$, the definition of $\varout(\nu(x))$ gives $y\in \varout(\nu(x))$, whence $E_x(\varout(\nu(x)))\subseteq\feasible{x}\setminus\fullfeasible{q'}{y}$. As $F_x\subseteq E_x(\varout(\nu(x)))$, it follows that $F_x\cap\fullfeasible{q'}{y}=\emptyset$. But $\theta(x)\in F_x$ and $\theta(x)=\theta'(y)\in\fullfeasible{q'}{y}$, a contradiction. Hence $\nu'(y)\in\fullfeasible{q'}{y}$.

Therefore $D\models\nu'(\body(q'))$. Finally, $\theta'(\bar{x})=\theta(\bar{x})$ gives $\nu'(\bar{x})=\nu(\bar{x})=\bar a$, so $\bar a\in q'(D)$. Since $D$ and $\bar a\in q(D)$ were arbitrary, $q\subseteq q'$, as required.

\end{proof}
\begin{figure}[h]
\centering
\begin{tikzpicture}[
  font=\small,
  node distance=1.3cm and 2.6cm,
  every node/.style={inner sep=2pt},
  map/.style={-{Latex[length=4pt,width=3.5pt]}, line width=0.5pt, shorten <=1pt, shorten >=1pt},
  lab/.style={font=\small, inner sep=2.5pt},
]
\node (Qp) {$\vars(q')$};
\node (Dt) [right=of Qp] {$\adom(D_\theta)$};
\node (D)  [below=of Qp] {$\adom(D)$};
\node (Q)  [below=of Dt] {$\vars(q)$};

\draw[map]                (Qp) -- node[lab, above] {$\theta'$} (Dt);
\draw[map]                (Q)  -- node[lab, right] {$\theta$}  (Dt);
\draw[map]                (Q)  -- node[lab, below] {$\nu$}     (D);
\draw[map, densely dashed] (Qp) -- node[lab, left]  {$\nu'$}    (D);

\node[font=\footnotesize, align=center] at ($(Qp)!0.5!(Q)$) {$\theta'(y)=\theta(x)$\\$\Longrightarrow\ \nu'(y)=\nu(x)$};
\end{tikzpicture}
\caption{The mappings in the proof of Theorem~\ref{thm:vc-criterion}. Given a valuation
$\theta'$ of $q'$ over the canonical database $D_\theta$, we use $\theta$ and $\nu$ to build
a valuation $\nu'$ of $q'$ over the original database $D$.}
\label{fig:mappings}
\end{figure}

\subsection{Proofs for Section~\ref{sec:vc-components}}\label{app:proofs-vc-components}

\looseness=-1 The proof of \cref{thm:s-exhaustive-fixed} rests on two localization lemmas. We will use
the following observation, immediate from the definition of witness values: a witness
value pulls in its whole cohort.

\begin{observation}\label{obs:cohort}
If $a$ is a witness value for $y$, then $\Vval a\subseteq\cmctox y$.
\end{observation}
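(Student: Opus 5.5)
This is an unfolding of definitions; the plan is to chain the relevant equivalences directly. Fix a value $a$ that is a witness value for $y\in\vars(q')$, and let $x\in\Vval a$ be arbitrary. We must show $x\in\cmctox y$, that is, that $y$ covering-matches $x$.

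By the definition of covering-match, it suffices to find some witness value for $y$ that is a canonical value of $x$. The natural candidate is $a$ itself.

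\begin{itemize}
\item By hypothesis, $a$ is a witness value for $y$. Conditions (1) $a\in\feasible y$ and (2) $\bigcup_{x'\in\Vval a}\POS q{x'}\supseteq\POS{q'}y$ hold as given. Neither condition depends on which member of the cohort we single out.
\item By the definition of the cohort, $\Vval a=\set{x'\in\vars(q)\mid a\in c(x')}$. Since $x\in\Vval a$, we get $a\in c(x)$.
\end{itemize}

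Hence $a$ is a witness value for $y$ and a canonical value of $x$. So $y$ covering-matches $x$, i.e.\ $x\in\cmctox y$. Since $x\in\Vval a$ was arbitrary, $\Vval a\subseteq\cmctox y$.

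There is no real obstacle here. The only point worth checking is that the witness-value conditions refer to the cohort collectively rather than to an individual variable. This is exactly what allows every member of the cohort, including members that cover no position of $y$ on their own, to inherit the covering-match.
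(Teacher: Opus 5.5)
Your proof is correct and is the same argument the paper has in mind; the paper states the observation as immediate from the definition of witness values. Membership $x\in\Vval a$ means exactly $a\in c(x)$, so $a$ itself is a witness value for $y$ that is a canonical value of $x$, which gives $x\in\cmctox y$.
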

 Throughout,
recall that legality means $|\cmctox{y}\cap S|\le 1$ for every $y\in\vars(q')$.

\looseness=-1 For $y\in\vars(q')$ set $I(y)=\set{i\in[m]\mid \cmctox{y}\cap C_i\neq\emptyset}$ and, for a
relational atom $A$ of $q'$, $I_A=\bigcup_{y\in A}I(y)$. Thus $I(y)$ records the components
from which $y$ may take its value, and a match lying only in the separator contributes no
index. When $I_A$ is a single component, the atom $A$ is realized within it, and when $I_A$
has several, $A$ straddles them and only the separator links its images.

\begin{lemma}[Atom locality]\label{lem:atom-locality}
For every relational atom $A$ of $q'$, $|I_A|\le 1$. In particular $|I(y)|\le 1$ for every
$y\in\vars(q')$, and $\cmctox{y}\setminus S$ is contained in a single component.
\end{lemma}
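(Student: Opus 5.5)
The plan is to derive the lemma directly from the construction of $\RG$ and the definition of its components. The co-match edges are exactly what confines the covering-matches of an atom. Fix a relational atom $A=R(y_1,\dots,y_k)$ of $q'$ and recall $M_A=\bigcup_{t}\cmctox{y_t}$. By definition of $\RG$, every two distinct variables of $M_A$ are joined by a co-match edge, so $M_A$ induces a clique in $\RG$. Removing the separator leaves $M_A\setminus S$ as a clique in $\RG-S$.

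A clique is connected, so the whole of $M_A\setminus S$ lies in one connected component $C_i$ of $\RG-S$, or is empty. Now $i\in I_A$ iff some $y_t\in A$ has $\cmctox{y_t}\cap C_i\neq\emptyset$. This holds iff $M_A\cap C_i\neq\emptyset$, since the $C_i$ are disjoint from $S$, and this in turn holds iff $(M_A\setminus S)\cap C_i\neq\emptyset$. Hence $|I_A|\le 1$.

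For the ``in particular'' part, every variable $y$ of $q'$ occurs in some relational atom $A$. Here I would note that a variable occurring only in comparison atoms can be handled separately: it has $\POS{q'}{y}=\emptyset$, and the argument below still goes through by treating $\cmctox{y}$ directly. Since $I(y)\subseteq I_A$, we get $|I(y)|\le 1$. Moreover, $\cmctox{y}\setminus S\subseteq M_A\setminus S\subseteq C_i$.

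The only delicate point is the edge case of $y$ not occurring in any relational atom. If $\POS{q'}{y}=\emptyset$, the covering condition is vacuous, so every canonical value lying in $\feasible y$ is a witness value. Then $\cmctox{y}$ may span several components, and no co-match edge collects them. I expect this to be the main obstacle. I would resolve it by appealing to safety/normalization: every query variable is assumed to occur in a relational atom, as is implicit in evaluation over $\adom(D)$. I would state this assumption explicitly. Legality is not needed for this lemma; it enters only when a match lies in $S$.
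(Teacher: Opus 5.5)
Your argument is correct and is essentially the paper's: the co-match edges make $M_A$ a clique, so $M_A\setminus S$ lies in one component of $\RG-S$, and the per-variable claims follow from $\cmctox{y}\subseteq M_A$ for an atom $A$ containing $y$. The paper's proof also silently assumes that every variable of $q'$ occurs in a relational atom (``applying this to any atom $A$ that contains a fixed $y$''), so stating that assumption explicitly is right. You should, however, delete your earlier claim that the comparison-only case ``still goes through by treating $\cmctox{y}$ directly'', since your own later remark correctly shows it does not.
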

\begin{proof}
\looseness=-1 Fix a relational atom $A=R(y_1,\dots,y_k)$ of $q'$, and recall that
$M_A=\bigcup_{t=1}^{k}\cmctox{y_t}$ is the set of all variables of $q$ that are
covering-matched by some variable occurring in $A$. By the co-match edges, every two
distinct variables of $M_A$ are adjacent in $\RG$, so $M_A$ is a clique.

Forming $\RG-S$ deletes exactly the separator vertices, so the vertices removed from this
clique are those in $M_A\cap S$. Any two surviving variables of $M_A\setminus S$ are still
joined by their co-match edge, because deleting a vertex removes only the edges incident to
it. Hence $M_A\setminus S$ remains a clique and lies in a single connected component of
$\RG-S$.

Since $\bigcup_{y\in A}(\cmctox{y}\setminus S)=M_A\setminus S$ sits in that one component,
the variables of $A$ reach at most one component, so $|I_A|\le 1$. Applying this to any atom
$A$ that contains a fixed $y\in\vars(q')$ gives $\cmctox{y}\subseteq M_A$, so
$\cmctox{y}\setminus S$ lies in a single component and $|I(y)|\le 1$. Note that
$I(y)=\emptyset$ if every variable covering-matched by $y$ lies in $S$.
\end{proof}

\begin{lemma}[Realizer localization]\label{lem:realizer}
Let $\theta\in\Theta_q$ and let $\nu$ be a valuation with
$\theta(D_q)\models\nu(\body(q'))$. Let $A=R(y_1,\dots,y_k)$ be a relational atom of $q'$
and let $R(z_1,\dots,z_k)$ be any relational atom of $q$ with $\theta(z_t)=\nu(y_t)$ for all
$t$ (one exists because $R(\nu(\bar y))\in\theta(D_q)$). Then the value $a_t:=\nu(y_t)$ is a
witness value for $y_t$ and $z_t\in\Vval{a_t}\subseteq\cmctox{y_t}$ for every $t$.
Consequently, if $I_A=\set{i}$ then $z_1,\dots,z_k\in S\cup C_i$, and if $I_A=\emptyset$ then
$z_1,\dots,z_k\in S$.
\end{lemma}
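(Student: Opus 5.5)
The plan is to verify the two defining conditions of a witness value for $a_t=\nu(y_t)$ directly, then read off cohort membership, and finally combine with the atom locality lemma (Lemma~\ref{lem:atom-locality}) for the localization claim.

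\emph{Step 1: $a_t\in\feasible{y_t}$.} Since $\nu$ satisfies $\body(q')$ in $\theta(D_q)$, it satisfies every comparison atom of $q'$ involving $y_t$. In $\vcclass$ these are variable--constant atoms, so $\nu(y_t)\in\fullfeasible{q'}{y_t}$.

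\emph{Step 2: the cohort of $a_t$ covers $\POS{q'}{y_t}$.} First, $z_t\in\Vval{a_t}$, because $\theta(z_t)=a_t$ and $\theta(z_t)\in c(z_t)$ by definition of a canonical assignment. Now fix $(R',j)\in\POS{q'}{y_t}$. Some atom $R'(\bar y')$ of $q'$ has $y_t$ at position $j$. Its image $R'(\nu(\bar y'))$ lies in $\theta(D_q)$, and every tuple of $\theta(D_q)$ is the $\theta$-image of an atom of $q$; here normalization (N2) ensures that all arguments are variables. So there is an atom $R'(\bar z')$ of $q$ whose $j$-th variable $z'$ satisfies $\theta(z')=a_t$. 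Then $z'\in\Vval{a_t}$ and $(R',j)\in\POS q{z'}$. Hence $\bigcup_{x\in\Vval{a_t}}\POS q x\supseteq\POS{q'}{y_t}$, so $a_t$ is a witness value for $y_t$.

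\emph{Step 3: cohort membership.} By Observation~\ref{obs:cohort}, $\Vval{a_t}\subseteq\cmctox{y_t}$. In particular $z_t\in\cmctox{y_t}$: $a_t$ is a canonical value of $z_t$ and a witness value for $y_t$.

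\emph{Step 4: localization.} For each $t$, $z_t\in\cmctox{y_t}$. If $z_t\notin S$, then $z_t$ lies in some component $C_j$, so $j\in I(y_t)\subseteq I_A$. If $I_A=\set i$, every such $z_t$ must lie in $C_i$, hence $z_1,\dots,z_k\in S\cup C_i$. If $I_A=\emptyset$, no $z_t$ can lie outside $S$.

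The only delicate step is Step~2. Covering must be checked position by position through other atoms of $q'$ that contain $y_t$, not only through $A$. The key point is that each such occurrence is realized by some $q$-variable carrying the same value $a_t$, and that variable is automatically in the cohort. This is exactly why the cohort, rather than a single variable, is the right notion when $\theta$ is not injective.
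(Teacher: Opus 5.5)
Your proof is correct and follows the paper's argument step for step: $a_t\in\feasible{y_t}$ comes from the comparison atoms, the cohort covers $\POS{q'}{y_t}$ position by position via the other atoms of $q'$ containing $y_t$, and Observation~\ref{obs:cohort} then gives $\Vval{a_t}\subseteq\cmctox{y_t}$. The only difference is cosmetic: you derive the final localization directly from the definitions of $I(y_t)$ and $I_A$, while the paper cites Lemma~\ref{lem:atom-locality} for it, and both are valid.
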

\begin{proof}
Fix $t$ and write $a=\nu(y_t)=\theta(z_t)$. As $a=\theta(z_t)$ is a canonical value of
$z_t$, we have $z_t\in\Vval a$. We check that $a$ is a witness value for $y_t$.

First, $a=\nu(y_t)\in\feasible{y_t}$, because $\nu$ satisfies the comparison atoms of $y_t$.

\looseness=-1 Second, let $(R',p')\in\POS{q'}{y_t}$ be arbitrary, say $y_t$ occurs at position $p'$ in an
atom $A'=R'(\dots)$ of $q'$. Since $\nu$ satisfies $A'$, the tuple $R'(\nu(\dots))$ lies in
$\theta(D_q)$, so there is an atom $R'(w_1,\dots)$ of $q$ with $\theta(w_{p'})=\nu(y_t)=a$.
Hence $w_{p'}\in\Vval a$ and $(R',p')\in\POS q{w_{p'}}$. As $(R',p')$ was arbitrary,
$\bigcup_{x\in\Vval a}\POS q x\supseteq\POS{q'}{y_t}$.

Thus $a$ is a witness value for $y_t$, and by Observation~\ref{obs:cohort}
$\Vval a\subseteq\cmctox{y_t}$, in particular $z_t\in\cmctox{y_t}$. Finally, by
Lemma~\ref{lem:atom-locality} we have $\cmctox{y_t}\setminus S\subseteq C_i$ when $I_A=\set
i$ (and $\cmctox{y_t}\subseteq S$ when $I_A=\emptyset$, since then $\cmctox{y_t}$ meets no
component). Hence $z_t\in S\cup C_i$ (resp.\ $z_t\in S$). The argument uses no injectivity:
the realizer $z_t$ is localized directly through the witness value $a$, regardless of how
many variables of $q$ also carry $a$.
\end{proof}

\begin{proof}[Proof of \cref{thm:s-exhaustive-fixed}]
\looseness=-1 Let $\theta\in\Theta_q$ be arbitrary. We construct a valuation witnessing
$\theta(\bar x)\in q'(\theta(D_q))$. Let $C_1,\dots,C_m$ be the components of $\RG-S$.
Since $\Psi_q$ is $S$-exhaustive, for each $i\in[m]$ pick $\psi_i\in\Psi_q$ with
$\psi_i|_{S\cup C_i}=\theta|_{S\cup C_i}$. By hypothesis each $\psi_i$ admits a valuation
$\mu_i:\vars(q')\to\adom(\psi_i(D_q))$ with $\psi_i(D_q)\models\mu_i(\body(q'))$ and
$\mu_i(\bar x)=\psi_i(\bar x)$. Lemma~\ref{lem:realizer} applies to each pair
$(\psi_i,\mu_i)$. The plan is to build $\mu$ from the $\mu_i$, taking the value of each
$y\in\vars(q')$ from the index $i$ in $I(y)$.

\looseness=-1 \emph{Step 1: variables with $I(y)=\emptyset$ get the same value from every $\mu_i$.}
Let $y\in\vars(q')$ with $I(y)=\emptyset$, and let $A$ be a relational atom of $q'$
containing $y$. By Lemma~\ref{lem:atom-locality}, $\cmctox{y}\subseteq S$. The set
$\cmctox{y}$ is not empty, since Lemma~\ref{lem:realizer} applied to $(\psi_1,\mu_1)$ and
$A$ places a variable of $\cmctox{y}$ at the position of $y$. Since $S$ is a legal separator, $|\cmctox{y}\cap S|\le 1$. Hence $\cmctox{y}=\set{s}$ for a single $s\in S$. Now fix any $i$.
Lemma~\ref{lem:realizer} applied to $(\psi_i,\mu_i)$ and $A$ places a variable
$z\in\cmctox{y}=\set{s}$ at the position of $y$, so $\mu_i(y)=\psi_i(s)=\theta(s)$, as
$\psi_i$ and $\theta$ agree on $S$. This value does not depend on $i$.

\emph{Step 2: the assembled valuation.} By Lemma~\ref{lem:atom-locality}, $|I(y)|\le 1$
for every $y\in\vars(q')$. Define $\mu(y):=\mu_i(y)$, where $i$ is the unique index with
$I(y)=\set i$, and where $i$ is arbitrary if $I(y)=\emptyset$. By Step~1 the choice of
$i$ does not matter in the second case. In both cases, $\mu(y)=\mu_i(y)$ for every $i$
with $I(y)\subseteq\set i$. We use this property in the remaining steps.

\emph{Step 3: relational atoms.} Let $A=R(y_1,\dots,y_k)$ be a relational atom of $q'$.
By Lemma~\ref{lem:atom-locality}, $I_A\subseteq\set i$ for some $i$. Every $y_t$
satisfies $I(y_t)\subseteq I_A\subseteq\set i$, so $\mu(y_t)=\mu_i(y_t)$ by Step~2.
Hence $R(\mu(\bar y))=R(\mu_i(\bar y))\in\psi_i(D_q)$, so there is an atom
$R(z_1,\dots,z_k)$ of $q$ with $\psi_i(z_t)=\mu_i(y_t)$ for all $t$. By
Lemma~\ref{lem:realizer}, every $z_t$ lies in $S\cup C_i$, where $\psi_i$ agrees with
$\theta$. Therefore $\mu(y_t)=\psi_i(z_t)=\theta(z_t)$ for all $t$, and
$R(\mu(\bar y))=R(\theta(\bar z))\in\theta(D_q)$.

\looseness=-1 \emph{Step 4: comparison atoms.} Since $q'\in\vcclass$, every comparison atom of $q'$
has the form $y\comp c$ with $c$ a constant. By Step~2, $\mu(y)=\mu_i(y)$ for some $i$,
and $\mu_i$ satisfies $y\comp c$. Hence $\mu(y)\comp c$ holds. This is the only place
where $q'\in\vcclass$ is used: a comparison between two variables could involve two
different indices $i$.

\looseness=-1 \emph{Step 5: head.} Let $x_j$ be a head variable, which occurs in both $q$ and $q'$.
Choose $i$ with $I(x_j)\subseteq\set i$. By Step~2 and head preservation,
$\mu(x_j)=\mu_i(x_j)=\psi_i(x_j)$. It remains to show that the variable $x_j$ of $q$
lies in $S\cup C_i$, since then $\psi_i(x_j)=\theta(x_j)$. Let $a=\psi_i(x_j)$. Then
$a\in c(x_j)$, so $x_j\in\Vval a$. Lemma~\ref{lem:realizer} applied to $(\psi_i,\mu_i)$
and an atom of $q'$ containing $x_j$ shows that $a=\mu_i(x_j)$ is a witness value for
$x_j$. By Observation~\ref{obs:cohort}, $\Vval a\subseteq\cmctox{x_j}$, so
$x_j\in\cmctox{x_j}$. By Lemma~\ref{lem:atom-locality} and $I(x_j)\subseteq\set i$,
$\cmctox{x_j}\subseteq S\cup C_i$. Hence $x_j\in S\cup C_i$ and $\mu(x_j)=\theta(x_j)$.

By Steps 3 to 5, $\mu$ satisfies $\body(q')$ over $\theta(D_q)$ and
$\mu(\bar x)=\theta(\bar x)$. Thus $\mu$ witnesses $\theta(\bar x)\in q'(\theta(D_q))$,
as required.
\end{proof}

\begin{proof}[Proof of \cref{lem:s-exhaustive-size}]
\looseness=-1 Recall that a canonical assignment chooses one canonical value for each variable. For a set
of variables $W\subseteq\vars(q)$, write $A(W)=\prod_{u\in W}c(u)$ for the set of all
assignments of canonical values to $W$, so $|A(W)|=\prod_{u\in W}|c(u)|$ and
$\Theta_q=A(\vars(q))$. Since $S,C_1,\dots,C_m$ partition $\vars(q)$, an assignment
$\theta\in\Theta_q$ is the same as the tuple of its restrictions
$(\theta|_S,\theta|_{C_1},\dots,\theta|_{C_m})\in A(S)\times A(C_1)\times\dots\times A(C_m)$,
and every such tuple arises from exactly one $\theta$.

\looseness=-1 Let $n_i=|A(C_i)|$ and $N=\max_{i\in[m]}n_i$, and fix $i^*$ with $n_{i^*}=N$. The claimed
size is $|A(S)|\cdot N$. We show that every $S$-exhaustive set has at least this size, and
then construct one that has at most this size.

\looseness=-1 \emph{Lower bound.} Let $\Psi\subseteq\Theta_q$ be $S$-exhaustive. Applying the definition
with $i=i^*$, for every $\theta\in\Theta_q$ some $\psi\in\Psi$ has
$\psi|_{S\cup C_{i^*}}=\theta|_{S\cup C_{i^*}}$. As $\theta$ ranges over $\Theta_q$, the
restriction $\theta|_{S\cup C_{i^*}}$ ranges over all of $A(S\cup C_{i^*})$. Hence the map
$\psi\mapsto\psi|_{S\cup C_{i^*}}$ from $\Psi$ to $A(S\cup C_{i^*})$ is surjective, and
$|\Psi|\ge|A(S\cup C_{i^*})|=|A(S)|\cdot n_{i^*}=|A(S)|\cdot N$.

\emph{Construction.} For each $i\in[m]$, since $n_i\le N$, we can list the elements of
$A(C_i)$ as a sequence $\gamma_i^1,\dots,\gamma_i^N$ of length $N$ in which every element
of $A(C_i)$ appears at least once. For $\sigma\in A(S)$ and $t\in[N]$, let
$\psi_{\sigma,t}\in\Theta_q$ be the assignment with $\psi_{\sigma,t}|_S=\sigma$ and
$\psi_{\sigma,t}|_{C_i}=\gamma_i^t$ for every $i\in[m]$, and let
$\Psi_q=\{\psi_{\sigma,t}\mid \sigma\in A(S),\ t\in[N]\}$.
There are $|A(S)|\cdot N$ pairs $(\sigma,t)$, so $|\Psi_q|\le|A(S)|\cdot N$.

\looseness=-1 It remains to check that $\Psi_q$ is $S$-exhaustive. Let $\theta\in\Theta_q$ and
$i\in[m]$. Let $\sigma=\theta|_S$, and choose $t\in[N]$ with $\gamma_i^t=\theta|_{C_i}$,
which exists because every element of $A(C_i)$ appears in the sequence. Then
$\psi_{\sigma,t}$ agrees with $\theta$ on $S$ and on $C_i$, that is, on $S\cup C_i$.

\looseness=-1 By the lower bound, $|\Psi_q|\ge|A(S)|\cdot N$ as well, so $|\Psi_q|=|A(S)|\cdot N$ and
this is the size of the smallest $S$-exhaustive set. Since
$|A(S)|\cdot N=\bigl(\prod_{v\in S}|c(v)|\bigr)\cdot\max_{i\in[m]}\prod_{u\in C_i}|c(u)|$,
the lemma follows.
\end{proof}

\begin{proof}[Proof of \cref{thm:separator-nphard}]
\emph{Membership in NP.} Guess $S\subseteq\vars(q)$, check that it is legal, and compare
the size given by Lemma~\ref{lem:s-exhaustive-size} with $t$. This is polynomial, since
the witness sets, and with them the counts $|c(v)|$ and the covering-match relation, can
be computed by a sweep over the endpoints of the effective domains.

\looseness=-1 \emph{Reduction to a graph problem.} When $|c(v)|=2$ for every variable $v$, the size in
Lemma~\ref{lem:s-exhaustive-size} is $2^{|S|}\cdot 2^{\max_i|C_i|}=2^{|S|+\max_i|C_i|}$.
So if every separator of $\RG$ is legal, asking for a legal separator with
$|\Psi_q|\le 2^t$ is the same as asking for a set $S$ with $|S|+\max_i|C_i|\le t$. This
is the following problem. \textsc{Min-Separator}: \emph{given an undirected graph
$G=(V,E)$ and an integer $t$, is there $S\subseteq V$ with $|S|+\max_i|C_i(G-S)|\le t$,
where $C_1,\dots,C_m$ are the connected components of $G-S$?} We show that
\textsc{Min-Separator} is NP-hard, and then that every graph is the relational graph of
a query pair in which every variable has two canonical values and every separator is
legal.

\emph{\textsc{Min-Separator} is NP-hard.} We reduce from \textsc{Vertex Cover}. Let
$(G_0=(V_0,E_0),k)$ be an instance with $n=|V_0|$. We may assume $k\ge 1$, since for
$k=0$ the answer is yes if and only if $G_0$ has no edges. We may assume $k<n$, since
otherwise $V_0$ itself is a cover of size at most $k$. Adding isolated vertices to $G_0$
changes neither its vertex covers nor their sizes, so we may also assume $n\ge 2k+1$. Let
$G$ be $G_0$ with $k$ new \emph{leaf} vertices attached to each $v\in V_0$, each adjacent
only to $v$, and let $t=2k+1$. We claim that $G_0$ has a vertex cover of size at most $k$
if and only if $G$ has a set $S$ with $|S|+\max_i|C_i(G-S)|\le t$.

\looseness=-1 $(\Rightarrow)$ Let $X$ be a vertex cover of $G_0$ with $|X|\le k$, and let $S=X$. Since
$X$ covers every edge of $G_0$, no two original vertices outside $S$ are adjacent. So
every original vertex $v\notin S$ forms a component together with its $k$ leaves, of size
$k+1$, and every leaf of a vertex in $S$ is a component by itself. As $|X|\le k<n$, some
original vertex lies outside $S$, so $\max_i|C_i|=k+1$ and
$|S|+\max_i|C_i|\le k+(k+1)=t$.

\looseness=-1 $(\Leftarrow)$ Let $S$ satisfy $|S|+\max_i|C_i(G-S)|\le t$. We may assume $S\subseteq V_0$:
removing a leaf from $S$ lowers $|S|$ by one, and the leaf either joins a component,
raising its size by one, or forms a new component of size one, so the maximum rises by at
most one and the sum does not increase. We may also assume $S\ne V_0$, since for $S=V_0$
the graph $G-S$ consists of the $nk\ge 1$ isolated leaves and
$|S|+\max_i|C_i|=n+1\ge 2k+2>t$. Now $S$ is a vertex cover of $G_0$: if some edge
$(u,v)\in E_0$ had both endpoints outside $S$, then $u$, $v$ and their $2k$ leaves would
lie in one component of size at least $2k+2>t$. Hence every original vertex outside $S$
forms a component of size exactly $k+1$, so $\max_i|C_i|=k+1$ and $|S|\le t-(k+1)=k$.
Thus $G_0$ has a vertex cover of size at most $k$.

The construction of $G$ is polynomial, so \textsc{Min-Separator} is NP-hard.

\emph{Every graph is a relational graph.} Let $G=(V,E)$ be a graph. Let $q$ have a
variable $x_v$ for each $v\in V$ and no comparison atoms, so $\feasible{x_v}=\Q$. For
each edge $(u,v)\in E$, add to $q$ the atom $R_{uv}(x_u,x_v)$ over a fresh relation
symbol $R_{uv}$. For each $v\in V$, add to $q$ the atom $U_v(x_v)$ over a fresh unary
symbol $U_v$, and add to $q'$ two variables $y_v,y'_v$ with the atoms $U_v(y_v)$,
$U_v(y'_v)$ and the comparisons $y_v<0$ and $y'_v\ge 0$.

\looseness=-1 Each of $y_v,y'_v$ occurs only in the position $(U_v,1)$, which in $q$ is occupied by
$x_v$ alone, so $\tox{x_v}=\set{y_v,y'_v}$. The sets $\set{y_v}$ and $\set{y'_v}$ are
infinite-contradictable by $x_v$, with $E_{x_v}(\set{y_v})=[0,\infty)$ and
$E_{x_v}(\set{y'_v})=(-\infty,0)$, while $E_{x_v}(\set{y_v,y'_v})=\emptyset$. So these
are the two maximally infinite-contradictable sets, no finite witness arises, and
$|c(x_v)|=2$. Both canonical values of $x_v$ are constants chosen for infinite witnesses,
so by the choice of canonical constants they differ from the constant $0$ and from every
canonical value of every other variable of $q$.

\looseness=-1 We now show $\RG=G$. The atoms $R_{uv}$ produce exactly the atom edges of $G$. Let $a$ be
a witness value for $y_v$. Its cohort $\Vval a$ must cover the only position $(U_v,1)$ of
$y_v$, which $x_v$ alone occupies, so $x_v\in\Vval a$ and $a\in c(x_v)$. By the previous
paragraph no other variable of $q$ carries $a$, so $\Vval a=\set{x_v}$. Hence
$\cmctox{y_v}=\set{x_v}$, and likewise $\cmctox{y'_v}=\set{x_v}$. Thus no variable of $q'$
covering-matches two distinct variables of $q$, so every separator is legal. Each
relational atom of $q'$ contains a single variable, whose covering-matches form a
singleton, so no atom of $q'$ covering-matches two distinct variables and there are no
co-match edges. Therefore $\RG=G$.

Combining the two reductions, deciding whether some legal separator achieves
$|\Psi_q|\le 2^t$ is NP-hard, which completes the proof.
\end{proof}

\subsection{Proofs for Section~\ref{sec:cmp}}\label{app:proofs-cmp}

The following example, referenced in Section~\ref{sec:cmp}, shows that for $\cmpclass$ queries the witness sets alone are insufficient, in either direction. In the first pair below the query $q$ contains a variable--variable comparison. In the second pair, only $q'$ does.

\begin{example}\label{exm:witness_set_not_feasible}
\emph{$q$ carries the comparison.} Let $q \in \cmpclass$ and $q' \in \vcclass$ be the Boolean queries
\[
\begin{aligned}
q() \leftarrow  & R(x_1),\ S(x_2),\ 10<x_1\leq 25,\ 10\leq x_2<25,\ x_1 > x_2\\
q'() \leftarrow & R(y_1),\ R(y_2),\ R(y_3),\ S(y_4),\ S(y_5),\ 10\leq y_1\leq 25,\\
                & 10\leq y_2\leq 15,\ 20\leq y_3\leq 25,\ 10\leq y_4\leq 25,\ 10\leq y_5\leq 20
\end{aligned}
\]
\looseness=-1 For $x_1$, the subsets induced by $y_1,y_2,y_3$ are $E_1=\emptyset$, $E_2=(15,25]$ and $E_3=(10,20)$, so $\set{y_2,y_3}$ is the only maximally infinite-contradictable set and no finite witness arises. Symmetrically, only $\set{y_5}$ is maximally infinite-contradictable by $x_2$. Hence $\witset_{x_1}=\set{(15,20)}$ and $\witset_{x_2}=\set{(20,25)}$, so canonical values taken from the witness sets cannot satisfy $x_1 > x_2$. Yet this does not mean that $q \subseteq q'$, as $D=\set{R(21),S(11)}$ gives $q(D)=\text{true}$ and $q'(D)=\text{false}$.

\begin{figure}[h]
\begin{minipage}[h]{0.47\textwidth}
\centering
\begin{tikzpicture}[
  x=0.38cm, y=0.7cm,
  interval/.style={black, line width=0.8pt},
  axis/.style={black, line width=1.0pt},
  vline/.style={blue, dashed, line width=0.8pt},
  settxt/.style={font=\scriptsize, fill=white, inner sep=1.5pt},
  eventtxt/.style={font=\scriptsize, fill=white, inner sep=1.5pt},
  ylab/.style={font=\small, anchor=west, inner sep=2pt},
  hi/.style={draw=red, line width=0.8pt, fill=white, inner sep=1.5pt}
]

\foreach \x in {10,15,20,25} {
  \draw[vline] (\x,-1.75) -- (\x,-0.5);
  \draw[vline] (\x,0.05) -- (\x,2.3);
}

\draw[interval] (10,0.7) -- (25,0.7);
\node[ylab] at (25.05,0.7) {$y_1$};

\draw[interval] (10,1.4) -- (15,1.4);
\node[ylab] at (15.05,1.4) {$y_2$};

\draw[interval] (20,1.4) -- (25,1.4);
\node[ylab] at (25.05,1.4) {$y_3$};

\draw[axis] (10,0) -- (25,0);

\foreach \x/\lab in {10/10,15/15,20/20,25/25} {
  \node[anchor=north, inner sep=2pt, fill=white] at (\x,-0.08) {\scriptsize \lab};
}
\node[ylab] at (25.05,0) {$x_1$};

\node[settxt]    at (12.5,-1.05) {$\left\{y_3\right\}$};

\node[settxt,hi]    at (17.5,-1.05) {$\left\{y_2,y_3\right\}$};

\node[settxt] at (22.5,-1.05) {$\left\{y_2\right\}$};

\node[eventtxt, text=blue] at (15,-2){$-y_2$};
\node[eventtxt, text=blue] at (20,-2){$+y_3$};

\end{tikzpicture}
\end{minipage}
\hfill
\begin{minipage}[h]{0.47\textwidth}
\centering
\begin{tikzpicture}[
  x=0.38cm, y=0.7cm,
  interval/.style={black, line width=0.8pt},
  axis/.style={black, line width=1.0pt},
  vline/.style={blue, dashed, line width=0.8pt},
  settxt/.style={font=\scriptsize, fill=white, inner sep=1.5pt},
  eventtxt/.style={font=\scriptsize, fill=white, inner sep=1.5pt},
  ylab/.style={font=\small, anchor=west, inner sep=2pt},
  hi/.style={draw=red, line width=0.8pt, fill=white, inner sep=1.5pt}
]

\foreach \x in {10,20,25} {
  \draw[vline] (\x,-1.75) -- (\x,-0.5);
  \draw[vline] (\x,0.05) -- (\x,2.3);
}

\draw[interval] (10,0.7) -- (25,0.7);
\node[ylab] at (25.05,0.7) {$y_4$};

\draw[interval] (10,1.4) -- (20,1.4);
\node[ylab] at (20.05,1.4) {$y_5$};

\draw[axis] (10,0) -- (25,0);

\foreach \x/\lab in {10/10,20/20,25/25} {
  \node[anchor=north, inner sep=2pt, fill=white] at (\x,-0.08) {\scriptsize \lab};
}
\node[ylab] at (25.05,0) {$x_2$};

\node[settxt,hi]    at (22.5,-1.05) {$\left\{y_5\right\}$};

\node[settxt]    at (15.0,-1.05) {$\emptyset$};

\node[eventtxt, text=blue] at (20,-2){$-y_5$};

\end{tikzpicture}
\end{minipage}
\caption{Witness set that does not yield feasible canonical values.}\label{fig:witness_interval_not_hold}
\end{figure}
\smallskip\noindent\emph{$q'$ carries the comparison.} Let now $\hat q \in \vcclass$ be
\[
\begin{aligned}
\hat q() \leftarrow & R(x_1),\ R(x_3),\ R(x_4),\ S(x_2),\ S(x_5),\ T(x_1),\ T(x_2),\ 10<x_1\leq 25,\\
                    & 10\leq x_2<25,\ 10\leq x_3\leq 15,\ 20\leq x_4\leq 25,\ 10\leq x_5\leq 20
\end{aligned}
\]
\looseness=-1 and let $\hat q' \in \cmpclass$ be obtained from $q'$ by adding the atoms $T(y_1)$, $T(y_4)$ and the comparison $y_1<y_4$. The variables $x_3,x_4,x_5$ have no contradictable sets, so the witness set of each is its effective domain alone and each has a single canonical value, while $x_1$ and $x_2$ have the same witness sets as above, as illustrated in Figure~\ref{fig:witness_interval_not_hold}. The witness sets thus determine a single canonical assignment $\theta$, and $\hat q(D_\theta) \subseteq \hat q'(D_\theta)$. Still, $\hat q \not\subseteq \hat q'$: since $y_1$ can be matched only to $x_1$ and $y_4$ can be matched only to $x_2$, one can choose values satisfying all atoms of $\hat q$ whose images violate the comparison $y_1 < y_4$.
\end{example}

The proof of \cref{thm:cmp-criterion} rests on the componentwise-relaxation machinery, which we now define formally.

For an assignment $\mu : \vars(q)\to \mathbb{Q}$ and an isolated variable $x\in\vars(q)$, define $\forbidden {\mu}{x} = \set{y \in \tox{x} \mid \mu(x) \notin \fullfeasible{q'}{y}}$. Intuitively, $\forbidden {\mu}{x}$ contains the variables of $q'$ that can match $x$ at the relational level, but cannot be assigned the value $\mu(x)$ because this value violates their effective domains.

Let $\mu_1,\mu_2 : \vars(q)\to \mathbb{Q}$ be two assignments.
We say that $\mu_1$ is a {\em componentwise relaxation\/} of $\mu_2$ and write
$\mu_1 \ceqord \mu_2$, if the following conditions hold.
\begin{enumerate}
   \item The assignments agree on the ordering within each connected component: for every component $C$ of $\CG$ and every pair $z_1,z_2 \in C$, both $\mu_1(z_1)<\mu_1(z_2) \iff \mu_2(z_1)<\mu_2(z_2)$ and $\mu_1(z_1)=\mu_1(z_2) \iff \mu_2(z_1)=\mu_2(z_2)$.
    \item The assignments agree on the intervals: for every nontrivial component $C$ of $\CG$, variable $z \in C$, and interval $I \in \allintervals{C}$, we have $\mu_1(z) \in I \iff \mu_2(z) \in I$.
    \item For every isolated variable $x\in\vars(q)$, every variable of $q'$ forbidden under $\mu_1$ is also forbidden under $\mu_2$, that is, $\forbidden{\mu_1}{x}\subseteq \forbidden{\mu_2}{x}$.
\end{enumerate}

The relation $\ceqord$ captures exactly the information about a valuation that the containment test must preserve: the internal order and equalities within each component, the interval of each component variable, and the forbidden sets of the isolated variables. The next lemma shows that this information can always be preserved by a canonical assignment.

\begin{lemma}\label{lem:component_equality_order_eq}
For every database instance $D$ and every satisfying valuation $\mu$ of $q$ over $D$, there exists a canonical assignment $\theta \in \Theta_q$ such that $\mu \ceqord \theta$.
\end{lemma}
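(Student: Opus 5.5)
We construct $\theta$ separately on the isolated variables of $\CG$ and on each nontrivial component, and then check that the pieces together form a canonical assignment. The components are disjoint, and the canonical values of different components and of isolated variables were chosen pairwise distinct except at shared constants.

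\emph{Isolated variables.} For an isolated $x$ we copy the choice made in the proof of Theorem~\ref{thm:vc-criterion}. The forbidden set $\forbidden{\mu}{x}$ equals $\varout(\mu(x))$, and $E_x(\forbidden{\mu}{x})$ contains $\mu(x)$.
\begin{itemize}
\item If $E_x(\forbidden{\mu}{x})$ is finite, set $\theta(x)=\mu(x)$, which lies in $\finvals x$.
\item Otherwise, extend $\forbidden{\mu}{x}$ to a maximally infinite-contradictable set $S_x$ and set $\theta(x)=c(x,E_x(S_x))$.
\end{itemize}
In both cases $\theta(x)\in E_x(\forbidden{\mu}{x})$, so every $y$ forbidden under $\mu$ is forbidden under $\theta$. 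This gives condition~3.

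\emph{Nontrivial components.} Fix a nontrivial component $C$. For every interval $I\in\allintervals C$, let $Z_I=\set{z\in C\mid \mu(z)\in I}$. Since $\mu(z)\in\feasible z$, and $\feasible z\cap I$ is either empty or all of $I$, we get $Z_I\subseteq\actvars{C}{I}$.
\begin{itemize}
\item If $I=\set k$ is a singleton, set $\theta(z)=k$ on $Z_I$.
\item If $I$ is open, sort the distinct values $\mu(Z_I)$ increasingly as $v_1<\dots<v_r$, where $r\le|\actvars CI|=\ell$. Set $\theta(z)=\alpha^{C,I}_j$ when $\mu(z)=v_j$.
\end{itemize}
This map is strictly monotone within each interval. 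The intervals are ordered consistently, so $\theta$ preserves order and equality on all of $C$ and keeps every variable in its interval. This gives conditions~1 and~2.

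\emph{Rank trimming.} We must show $\theta(z)\in\repvalx CIz$, i.e.\ that the rank $j$ of $\mu(z)$ satisfies $j\le\rankmax zI$. Consider the active $w\ne z$ on $I$ for which $q$ entails $z\le w$. Suppose such a $w$ has $\mu(w)\notin I$. Since $z\le w$ holds, $\mu(w)$ lies in a later interval, and such $w$ only cost us slack. So we argue as follows.

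The distinct values in $I$ that are $\le\mu(z)$ come from variables of $Z_I$ other than the entailed-above ones, except those entailed-above variables that equal $z$. Hence
\[
j\le |Z_I|-|\set{w\in Z_I\setminus\set z : q\models z\le w,\ \mu(w)>\mu(z)}|.
\]
The remaining subtlety is entailed variables that are equal to $z$, or that lie outside $Z_I$. Both are absorbed because $|Z_I|\le\ell$, and each variable outside $Z_I$ also reduces the count of candidate positions. I expect this counting to be the main obstacle. The cleanest route is to count $j\le 1+|\set{w\in Z_I\mid \mu(w)<\mu(z)}|$ and note that this set is disjoint from the entailed-above set and from $z$.

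\emph{Canonicity and the comparison atoms of $q$.} By construction $\theta(x)\in c(x)$ for every variable. Every comparison atom of $q$ is satisfied, for two reasons. First, each variable--variable atom of $q$ is in $\fullrelcomp q{q'}$, so its two sides lie in one component, whose order $\theta$ copies from $\mu$. Second, each variable--constant atom is respected because every variable stays in the same interval of $\allintervals C$, or, for isolated variables, in $E_x(\cdot)\subseteq\feasible x$. Hence $\theta\in\Theta_q$ and $\mu\ceqord\theta$.
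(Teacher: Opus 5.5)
Your proof is correct and follows the paper's argument essentially step for step: isolated variables are treated exactly as in the proof of Theorem~\ref{thm:vc-criterion}, each nontrivial component is mapped interval by interval onto the ordered representatives according to the rank of the $\mu$-value, and your ``cleanest route'' for rank trimming is precisely the paper's count. You can drop the tentative first inequality and simply note that $\{w\in Z_I\mid \mu(w)<\mu(z)\}$, $\{z\}$, and $\{w\in\actvars{C}{I}\setminus\{z\}\mid q\models z\le w\}$ are pairwise disjoint subsets of $\actvars{C}{I}$, which gives $j\le \ell-|\{w\in\actvars{C}{I}\setminus\{z\}\mid q\models z\le w\}|=\rankmax{z}{I}$ directly.
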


\begin{proof}[Proof of \cref{lem:component_equality_order_eq}]
Let $D$ be a database instance, and let $\mu$ be a satisfying valuation of $q$ over $D$.
We construct a canonical assignment $\theta \in \Theta_q$ as follows.

\looseness=-1 First, we handle isolated variables as in the proof of Theorem~\ref{thm:vc-criterion}. Let $x$ be an isolated variable, and recall that $\varout(\mu(x))$ is the set of variables in $\tox x$ whose effective domains do not contain $\mu(x)$, so that $\forbidden{\mu}{x}=\varout(\mu(x))$. Recall also that, for any set $S\subseteq\tox{x}$, $E_x(S)=\bigcap_{y\in S}\bigl(\fullfeasible{q}{x}\setminus\fullfeasible{q'}{y}\bigr)$. Since $\mu(x)\in\fullfeasible{q}{x}$ (as $\mu$ satisfies $q$) and $\mu(x)\notin\fullfeasible{q'}{y}$ for every $y\in \varout(\mu(x))$, we have $\mu(x)\in E_x(\varout(\mu(x)))$. Hence $\varout(\mu(x))$ is contradictable by $x$.

\looseness=-1 If $E_x(\varout(\mu(x)))$ is finite, then $\mu(x)\in \finvals x$, so $\set{\mu(x)}\in\witset_x$. We set $\theta(x)=\mu(x)$, which is a canonical value since $c(x,\set{\mu(x)})=\set{\mu(x)}$, and then $\forbidden{\theta}{x}=\forbidden{\mu}{x}$.

\looseness=-1 If $E_x(\varout(\mu(x)))$ is infinite, then $\varout(\mu(x))$ is infinite-contradictable, so we extend it to a maximally infinite-contradictable set $S_x\supseteq\varout(\mu(x))$ and set $\theta(x)=c(x,E_x(S_x))$, the constant chosen for the infinite witness $E_x(S_x)$. Since $\theta(x)\in E_x(S_x)\subseteq\fullfeasible{q}{x}\setminus\fullfeasible{q'}{y}$ for every $y\in S_x$, we have $S_x\subseteq\forbidden{\theta}{x}$, and therefore $\forbidden{\mu}{x}=\varout(\mu(x))\subseteq\forbidden{\theta}{x}$.
In either case, $\forbidden{\mu}{x}\subseteq\forbidden{\theta}{x}$.

\looseness=-1 For every nontrivial connected component $C \in \compCG$ and every interval
$I \in \allintervals{C}$, define $X_{C,I}=\{x\in \actvars{C}{I}\mid \mu(x)\in I\}$.Since the intervals in $\allintervals{C}$ are pairwise disjoint, for every variable $x$ there is at most one interval $I \in \allintervals{C}$ such that $x \in X_{C,I}$.
We define $\theta$ separately on each set $X_{C,I}$.

\looseness=-1  If $I=\{k\}$ is a singleton interval, set $\theta(x)=k$ for every $x\in X_{C,I}$.
If $I$ is an open interval, let $B_1,\dots,B_t$ be the equivalence classes of $X_{C,I}$
under equality of $\mu$-values, ordered so that $i<j$ implies $\mu(x)<\mu(y)$ for every
$x\in B_i$ and $y\in B_j$. Since $t\le |X_{C,I}| \le |\actvars{C}{I}|$, and $\repval{C}{I}=\{\alpha_1^{C,I},\dots,\alpha_\ell^{C,I}\}$ where $\ell=|\actvars{C}{I}|$,
we define $\theta(x)=\alpha_i^{C,I}$ for every $x\in B_i$.

\looseness=-1 Repeating this construction for every nontrivial connected component $C$ and every interval
$I\in\allintervals{C}$ defines $\theta$ on all non-isolated variables of $q$. Together with
the construction above for isolated variables, this defines $\theta$ on all variables of $q$.

\looseness=-1 We check that $\theta(x)\in c(x)$ for every non-isolated $x$. Suppose $\theta(x)=\alpha_i^{C,I}$,
so $x\in B_i$ and $i$ is the number of equality-classes of $X_{C,I}$ up to and including that of
$x$. The classes $B_1,\dots,B_{i-1}$ consist of variables of $X_{C,I}$ that are strictly smaller
than $x$ under $\mu$. No variable $y$ with $q\models x\le y$ lies in them, since $\mu$ satisfies
$q$ and hence $\mu(x)\le\mu(y)$. Thus the $i-1$ classes below $x$ are formed from variables in
$\actvars{C}{I}\setminus(\set{x}\cup\set{y\mid q\models x\le y})$, so
$i-1\le |\actvars{C}{I}|-1-|\set{y\in\actvars{C}{I}\setminus\set{x}\mid q\models x\le y}|$, that
is, $i\le\rankmax{x}{I}$. Hence $\alpha_i^{C,I}\in\repvalx{C}{I}{x}\subseteq c(x)$. (For a
singleton interval the claim is immediate, as $\repvalx{C}{I}{x}=\set{k}$.)
By construction, $\theta(x)\in c(x)$ for every $x\in\vars(q)$. Moreover, every non-isolated
variable is mapped to a value in the same interval of $\allintervals{C}$ as under $\mu$, and
within each interval both equality and strict order are preserved. Hence $\theta$ satisfies
all comparison atoms of $q$, and therefore $\theta\in\Theta_q$.

\looseness=-1 It remains to prove that $\mu \ceqord \theta$. First, let $C$ be a nontrivial connected component of $\CG$, and let $z_1,z_2\in C$.
If $\mu(z_1)$ and $\mu(z_2)$ lie in the same interval of $\allintervals{C}$, then by
construction $\mu(z_1)=\mu(z_2)\iff \theta(z_1)=\theta(z_2)$ and
$\mu(z_1)<\mu(z_2)\iff \theta(z_1)<\theta(z_2)$.

\looseness=-1 If $\mu(z_1)$ and $\mu(z_2)$ lie in two distinct intervals of $\allintervals{C}$, then,
by the interval-membership property proved above, $\theta(z_1)$ and $\theta(z_2)$ lie in
the same two intervals, respectively. Since $\allintervals{C}$ is an ordered partition of
$\mathbb{Q}$, the relative order between the two values is the same under both $\mu$ and
$\theta$, and since the intervals are disjoint, equality is impossible under both assignments.

\looseness=-1 Finally, let $x$ be an isolated variable. By the construction for isolated variables, we have
~$\forbidden{\mu}{x}\subseteq\forbidden{\theta}{x}$.
Therefore, all three conditions in the definition of $\ceqord$ hold, and hence
$\mu \ceqord \theta$.
\end{proof}

\begin{proof}[Proof of \cref{thm:cmp-criterion}]
As before, the \enquote{only if} direction is immediate, so it remains to prove the
\enquote{if} direction.

\looseness=-1 Assume that $q(\hat{D})\subseteq q'(\hat{D})$ for all canonical databases
$\hat{D} \in \mathcal D(\Theta_q)$, and let $D$ be an arbitrary database instance.
As before, we assume that $q(D)\neq\emptyset$, and fix a tuple $\bar a \in q(D)$
together with a valuation $\nu:\vars(q)\to \adom(D)$ such that
$\nu(\bar x)=\bar a$ and $D\models \nu(\body(q))$.

\looseness=-1 By Lemma~\ref{lem:component_equality_order_eq}, there exists a canonical assignment
$\theta\in\Theta_q$ such that $\nu \ceqord \theta$. We choose such a $\theta$ as constructed in the proof of the lemma.

\looseness=-1 Let $D_\theta=\theta(D_q)$. Since $\theta(\bar x)\in q(D_\theta)$ and, by assumption,
$q(D_\theta)\subseteq q'(D_\theta)$, there exists a valuation
$\theta':\vars(q')\to \adom(D_\theta)$ such that
$D_\theta\models \theta'(\body(q'))$ and $\theta'(\bar x)=\theta(\bar x)$.

\looseness=-1 We define the mapping $\nu' : \vars(q') \to D $ as follows. For $ y \in \vars(q')$, we set $\nu'(y) = \nu(x)$ for some variable $x$ such that $\theta'(y) = \theta(x)$.

\looseness=-1 There may be several variables $x$ that $\theta$ maps to the same value $c=\theta'(y)$. We check that they all have the same value $\nu(x)$, so that $\nu'(y)$ is well defined. If $c$ comes from a finite witness or is a singleton-interval value, then by construction $\theta(x)=\nu(x)=c$ for every such $x$. If $c=\alpha_i^{C,I}$ is a representative of an open interval, then every variable $x$ with $\theta(x)=c$ lies in the same equality class $B_i$ of the construction in the proof of Lemma~\ref{lem:component_equality_order_eq}, and the variables of $B_i$ share a common $\nu$-value, so $\nu(x)$ is the same for all of them. Finally, a constant chosen for an infinite witness is, by construction, distinct from all other canonical values, so it is taken by a unique variable. In every case all possible choices of $x$ give the same value $\nu(x)$, so $\nu'(y)$ is well defined.

\looseness=-1 It remains to show that $D\models \nu'(\body(q'))$.
First consider a relational atom $R(\bar y)$ of $q'$. Since
$D_\theta\models \theta'(R(\bar y))$, we have
$R(\theta'(\bar y))\in D_\theta$. By the definition of $D_\theta$, there exists a
relational atom $R(\bar x)$ in the body of $q$ such that $\theta(\bar x)=\theta'(\bar y)$. Moreover, since $D\models \nu(\body(q))$, we get
$R(\nu(\bar x))\in D$. By the definition of $\nu'$, we have $\nu'(\bar y)=\nu(\bar x)$, and therefore $R(\nu'(\bar y))\in D$. Hence all relational atoms of $q'$ are satisfied in $D$ under $\nu'$.

\looseness=-1 \textcolor{black}{For the comparison atoms we use the following observation: for every $y\in\vars(q')$ we may take the variable $x$ chosen for $\nu'(y)$ to be partially matched by $y$. Indeed, $y$ occurs in some relational atom $R(\bar y)$ of $q'$, and as above there is an atom $R(\bar x)$ of $q$ with $\theta(\bar x)=\theta'(\bar y)$. The variable $x$ of $\bar x$ at the position of $y$ occupies a position of $y$ and satisfies $\theta(x)=\theta'(y)$. Since $\theta'$ satisfies the comparison atoms of $q'$ and $\theta\in\Theta_q$, this common value lies in $\fullfeasible{q'}{y}\cap\fullfeasible{q}{x}$, so $y$ partially matches $x$. By the well-definedness shown above, $\nu'(y)=\nu(x)$ for this $x$ as well, so we assume from now on that $y$ partially matches the chosen $x$.}

\looseness=-1 Consider first a variable $y$ with an atom in $\compvc{q'}$, and let $x$ be the variable chosen for it, so that $\theta(x)=\theta'(y)\in \fullfeasible{q'}{y}$ and $\nu'(y)=\nu(x)$.

\looseness=-1 If $x$ belongs to a nontrivial connected component $C$, then, \textcolor{black}{since $y$ partially matches $x$ and $x\in C$, the constants bounding $\fullfeasible{q'}{y}$ belong to $\compCons{C}$ and hence are among those inducing $\allintervals{C}$.} Let $I_x\in\allintervals{C}$ be the unique interval such that $\nu(x)\in I_x$. Since $\nu\ceqord\theta$, we also have $\theta(x)\in I_x$, and the interval $I_x$ is either contained in $\fullfeasible{q'}{y}$ or disjoint from it. As $\theta(x)\in I_x\cap \fullfeasible{q'}{y}$, it follows that $I_x\subseteq \fullfeasible{q'}{y}$, and therefore $\nu'(y)=\nu(x)\in \fullfeasible{q'}{y}$.

\looseness=-1 \textcolor{black}{If $x$ is isolated, then $\theta(x)$ was chosen in the proof of Lemma~\ref{lem:component_equality_order_eq} either as the constant of an infinite witness or as a finite-witness value. In the second case $\theta(x)=\nu(x)$ by construction, so $\nu'(y)=\theta'(y)\in\fullfeasible{q'}{y}$. In the first case $\theta(x)$ is taken by no other variable of $q$, so, exactly as in the proof of Theorem~\ref{thm:vc-criterion}, every position of $y$ is a position of $x$ and $y\in\tox{x}$. Then $\theta(x)\in \fullfeasible{q'}{y}$ gives
$y\notin \forbidden{\theta}{x}$. Since $\nu\ceqord\theta$, we have $\forbidden{\nu}{x}\subseteq\forbidden{\theta}{x}$. Hence $y\notin \forbidden{\nu}{x}$, and therefore $\nu'(y)=\nu(x)\in \fullfeasible{q'}{y}$.}

\looseness=-1 Now let $y\comp y' \in \compvv{q'}$, and let $x,x'\in\vars(q)$ be the variables chosen
in the definition of $\nu'$ for $y$ and $y'$, so that $\theta'(y)=\theta(x)$, $\theta'(y')=\theta(x')$, $\nu'(y)=\nu(x)$ and $\nu'(y')=\nu(x')$.

\looseness=-1 Since \textcolor{black}{$y$ partially matches $x$, $y'$ partially matches $x'$}, and $y\comp y'\in\compvv{q'}$, we show that $x$ and $x'$ lie in the same connected component of $\CG$ \textcolor{black}{(if $x=x'$ there is nothing to show)}. If the induced inequality $x\comp x'$ is consistent with $\compvv{q}$, then it belongs to $\fullmcomp{q}{q'}$ and hence to $\fullrelcomp{q}{q'}$, giving an edge $\set{x,x'}$. Otherwise it contradicts $\compvv{q}$, so $\compvv{q}$ itself constrains the order of $x$ and $x'$, and they are connected through edges of $\compvv{q}$. In either case $x$ and $x'$ belong to the same nontrivial connected component $C$.
Since $\theta'$ satisfies $q'$, $\theta(x)\comp\theta(x')$ holds, and condition (1) of $\nu\ceqord\theta$ on the component $C$ transfers it to $\nu(x)\comp\nu(x')$, that is, $\nu'(y)\comp\nu'(y')$.

\looseness=-1 We have shown that $D\models \nu'(\body(q'))$. Moreover, since $\theta'(\bar x)=\theta(\bar x)$, the definition of $\nu'$ gives
$\nu'(\bar x)=\nu(\bar x)=\bar a$. Thus $\bar a\in q'(D)$. Since both $D$ and $\bar a\in q(D)$ were arbitrary, $q(D)\subseteq q'(D)$ for every database instance $D$. Therefore, $q\subseteq q'$.
\end{proof}

\looseness=-1 The proof of \cref{thm:cmp-criterion} yields the following stronger fact, used by the proofs of Section~\ref{sec:addable}: when containment fails, it fails on a tuple produced by a canonical assignment itself.

\begin{corollary}\label{cor:canonical-violating}
Let $q(\bar{x}),q'(\bar{x})$ be $\cmpclass$ queries. If $q \not\subseteq q'$, then
there exists a canonical assignment $\theta\in\Theta_q$ such that
$\theta(\bar x)\in q(D_\theta)$ and $\theta(\bar x)\notin q'(D_\theta)$,
where $D_\theta=\theta(D_q)$.
\end{corollary}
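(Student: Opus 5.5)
We prove the corollary by contraposition, reusing the argument of Theorem~\ref{thm:cmp-criterion} without change. Suppose every canonical assignment $\theta\in\Theta_q$ satisfies $\theta(\bar x)\in q'(D_\theta)$. We show that $q\subseteq q'$, which contradicts the hypothesis $q\not\subseteq q'$.

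The point to check is that the \enquote{if} direction of Theorem~\ref{thm:cmp-criterion} uses its hypothesis only at one place. Fix an arbitrary $D$ and $\bar a\in q(D)$, witnessed by $\nu$. Lemma~\ref{lem:component_equality_order_eq} gives $\theta\in\Theta_q$ with $\nu\ceqord\theta$. The only step that appeals to the assumption $q(\hat D)\subseteq q'(\hat D)$ is the one producing a valuation $\theta'$ of $q'$ over $D_\theta$ with $\theta'(\bar x)=\theta(\bar x)$. That step needs only the particular tuple $\theta(\bar x)$ to lie in $q'(D_\theta)$, not the whole set $q(D_\theta)$.

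Our weaker hypothesis supplies exactly this tuple. After $\theta'$ is obtained, the rest of the proof runs verbatim: the definition of $\nu'$ through preimages under $\theta$, its well-definedness, the relational atoms, the variable--constant and variable--variable comparisons, and the head. It yields $\bar a\in q'(D)$. Since $D$ and $\bar a$ were arbitrary, $q\subseteq q'$, as required.

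It remains to note that $\theta(\bar x)\in q(D_\theta)$ holds for every $\theta\in\Theta_q$. Canonical assignments satisfy the comparison atoms of $q$ by definition, and $\theta$ maps every relational atom of $q$ into $\theta(D_q)$. Hence if $q\not\subseteq q'$, the $\theta$ whose existence we derived is a canonical assignment with $\theta(\bar x)\in q(D_\theta)\setminus q'(D_\theta)$, which is the claimed statement.

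The main obstacle is confirming that the hypothesis of Theorem~\ref{thm:cmp-criterion} really enters only through the tuple $\theta(\bar x)$. In particular, no other tuple of $q(D_\theta)$ and no other canonical database may be invoked. Inspecting the proof shows that this is the case.
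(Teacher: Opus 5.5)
Your proposal is correct and is essentially the paper's argument. Both rest on the same observation: the ``if'' direction of Theorem~\ref{thm:cmp-criterion} uses its hypothesis only to get a valuation of $q'$ over $D_\theta$ that produces the single tuple $\theta(\bar x)$, where $\theta$ is the assignment built from $\nu$ via Lemma~\ref{lem:component_equality_order_eq}. The paper states this directly, fixing a counterexample and showing that its associated $\theta$ fails, while you state it as a contraposition over all of $\Theta_q$; the two are the same argument.
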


\begin{proof}[Proof of \cref{cor:canonical-violating}]
Assume $q \not\subseteq q'$, and fix a database $D$ and a tuple
$\bar a \in q(D) \setminus q'(D)$, together with a valuation $\nu$ such that
$\nu(\bar x)=\bar a$ and $D\models \nu(\body(q))$. Let $\theta\in\Theta_q$ be the
canonical assignment with $\nu\ceqord\theta$ constructed in the proof of
Theorem~\ref{thm:cmp-criterion}, and suppose that
$\theta(\bar x)\in q'(D_\theta)$. Then the argument in that proof transports a
witnessing valuation for $q'$ over $D_\theta$ back to $D$ and yields
$\bar a\in q'(D)$, contradicting the choice of $\bar a$. Hence
$\theta(\bar x)\notin q'(D_\theta)$, while $\theta(\bar x)\in q(D_\theta)$ holds
by the definition of $D_\theta$.
\end{proof}

\subsection{Proofs for Section~\ref{sec:addable}}\label{app:proofs-addable}

\looseness=-1 Throughout this subsection we fix a nontrivial component $C$ of $\CG$ and
write $V=\vars(C)\cup\compCons{C}$.  Assignments are extended to
$\compCons{C}$ by $\theta(k)=k$.  An edge $u\overset{\comp}{\to}v$ is
\emph{satisfied} by an assignment $\theta$ if $\theta(u)\comp\theta(v)$, and
is \emph{violated} otherwise.  Thus a reverse edge is violated exactly when
the induced comparison whose violation it encodes is satisfied.  Every
canonical assignment of $q$ satisfies every edge of $E_q$.

A \emph{produced query} is obtained from $q$ by adding zero or more
comparison atoms whose sides lie in $V$, possibly after merging variables of
$C$ that the atoms entail equal.  A merged vertex stands for its members.
Every produced query $p$ is read with the component $C$, the constants
$\compCons{C}$, the intervals $\allintervals{C}$, and the representative
values of $(q,q')$.  When only $t$ merged variables of $p$ are active on an
open interval that has $\ell\ge t$ representatives, $p$ uses the lowest $t$
of them.  The opposite graph $\OG_p=(V,E_p\cup E_{\tox p})$ of $p$ is
defined as for $q$.

This convention is sound.  The proofs of
Lemma~\ref{lem:component_equality_order_eq} and
Theorem~\ref{thm:cmp-criterion} use the component data only through three
facts, and each remains true for $p$.
(F1)~No comparison atom of $p$ joins variables of two different components
of $\CG$, since the added atoms have both sides in $V$.
(F2)~Every constant that bounds the effective domain of a variable of $C$
lies in $\compCons{C}$, since no new constant is introduced.
(F3)~If a variable of $q'$ partially matches a vertex of $p$, then it
partially matches a member of that vertex in $(q,q')$, since adding atoms
and merging only shrink effective domains.  The proof of
Lemma~\ref{lem:component_equality_order_eq} needs only $t$ ordered
representatives on an interval with $t$ active variables, and the proof of
Theorem~\ref{thm:cmp-criterion} places the two realizers of a
variable--variable atom of $q'$ in one component using only that they are
partially matched in $(q,q')$, which (F3) provides.  Hence
Theorem~\ref{thm:cmp-criterion} and Corollary~\ref{cor:canonical-violating}
hold for every produced query, as do all subsequent results in this subsection that rely on them.

To prove Theorem~\ref{thm:addable-edge-containment}, a canonical assignment
witnessing $q\not\subseteq q'$ must be reordered so that it also satisfies
the added comparison.  The next lemma performs the reordering along any graph
extending $E_q$ and shows that the failure of $q'$ survives it.

\begin{lemma}[Reordering and transport]
\label{lem:reordering-transport}
Let $\theta\in\Theta_q$, and let $G=(V,E)$ be a labeled directed graph with
$E_q\subseteq E$.  Assume no directed cycle of $G$ contains a strict edge
and that $\theta$ is constant on every strongly connected component of $G$.
Then some assignment $\rho$ satisfies:
\begin{enumerate}
  \item \emph{(Locality)} $\rho=\theta$ outside $\vars(C)$ and $\rho(k)=k$ for every
  $k\in\compCons{C}$,
  \item \emph{(Order)} $\rho$ satisfies every edge of $G$, and satisfies it strictly whenever
  its endpoints lie in distinct strongly connected components of $G$,
  \item \emph{(Validity)} $\rho$ satisfies the comparison atoms of $q$, and
  \item \emph{(Inj)} for all $u,v\in\vars(q)$,
  $\rho(u)=\rho(v)\Longrightarrow\theta(u)=\theta(v)$.
\end{enumerate}
If, in addition, every edge of $E_{\tox q}$ satisfied by $\theta$
belongs to $E$, then
$\theta(\bar x)\notin q'(\theta(D_q))\Longrightarrow\rho(\bar x)\notin q'(\rho(D_q))$. \hfill(Tr)
\end{lemma}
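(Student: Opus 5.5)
Since no directed cycle of $G$ contains a strict edge, I first contract every strongly connected component of $G$ to a single node. This gives a DAG $\hat G$ whose nodes are the SCCs. Every edge inside an SCC carries the label $\le$, and $\theta$ is constant on each SCC, so each SCC has a well-defined $\theta$-value. Constants of $\compCons{C}$ lying in a common SCC are equal, so each SCC contains at most one constant.

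I then define $\rho$ on $\vars(C)$ by assigning values to the SCCs. I fix a topological order of $\hat G$ that refines the positions of the constant nodes, i.e.\ is compatible with the chain of constant edges $k_1<\dots<k_p$. I place each constant-free SCC strictly between the values of the nearest constant predecessor and the nearest constant successor in that order, using fresh distinct rationals that increase along the topological order. An SCC containing a constant $k$ gets the value $k$. The constant edges form a chain, so the constants occupy increasing positions, and the gaps between consecutive constants are open intervals of $\mathbb{Q}$; hence the fresh values can always be chosen. Outside $\vars(C)$ I set $\rho=\theta$. This gives \emph{(Locality)} at once. \emph{(Order)} holds because every edge either stays inside an SCC, where the values are equal and the edge is labeled $\le$, or goes forward in the topological order, where the values strictly increase.

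\emph{(Validity)}: the comparison atoms of $q$ inside $C$ are atom edges, and the endpoint edges enforce the effective domains; both lie in $E_q\subseteq E$, so \emph{(Order)} gives them. Comparison atoms outside $C$ are unchanged, and by (F1) no atom crosses components. \emph{(Inj)}: two variables of $C$ with equal $\rho$ lie in one SCC, hence have equal $\theta$ (or both equal a constant $k$, and $\theta$ also gives $k$ by constancy on the SCC). Values outside $C$ are untouched, and I choose the fresh values distinct from all other canonical values.

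The main step is (Tr). I argue by contraposition: given a valuation $\rho'$ of $q'$ over $\rho(D_q)$ producing $\rho(\bar x)$, I build $\theta'=\theta\circ\rho^{-1}\circ\rho'$. It is well defined by \emph{(Inj)}, and relational atoms and the head transfer exactly as in the proof of \cref{thm:cmp-criterion}. For a comparison atom $y\comp y'$ of $q'$, I choose partially matched realizers $x,x'$ as in that proof. If these lie outside $C$, or are separated by $\compvv{q}$ and constants, then $\rho$ and $\theta$ agree on them or on the relevant order, and the atom transfers. Otherwise the induced comparison $x\comp x'$ is undecided by $q$, so the reverse edge $x'\to x$ (with the opposite label) lies in $E_{\tox q}$. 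Suppose $\theta$ violates $x\comp x'$. Then $\theta$ satisfies the reverse edge, so by hypothesis it belongs to $E$, and by \emph{(Order)} $\rho$ satisfies it too, i.e.\ $\rho$ also violates $x\comp x'$. This contradicts $\rho'$ satisfying $y\comp y'$. Variable–constant atoms of $q'$ are handled the same way, with the constant as an endpoint. Variables $y$ realized outside $C$ use \emph{(Locality)} and the forbidden-set argument unchanged.

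The delicate point is this case analysis of realizers, in particular showing that decided comparisons really keep their verdict under $\rho$. This requires \emph{(Order)} on the endpoint and constant edges to keep every variable in its interval of $\allintervals{C}$.
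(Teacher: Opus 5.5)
Your proposal is correct and follows the paper's proof essentially step for step. It uses the same condensation and topological-order placement of values between the nearest constants and the same \emph{(Inj)} argument. For (Tr) it uses the same pullback $\theta\circ\rho^{-1}\circ\rho'$, with the case split into realizers outside $C$, comparisons decided by $q$, and undecided comparisons whose reverse edge lies in $E$.

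Two small remarks. For realizers outside $C$ no forbidden-set argument is needed, since $\rho=\theta$ there by \emph{(Locality)}. Decided comparisons keep their verdict simply because both $\theta$ and $\rho$ satisfy the comparison atoms of $q$ and fix the constants; it is not because $\rho$ keeps each variable in its $\theta$-interval, which $\rho$ need not do.
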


\begin{proof}
We contract the strongly connected components of $G$. The resulting condensation is a DAG. We choose a topological order $o$ of its vertices. Since $E_q$ contains the constant edges, the components containing constants appear in $o$ in increasing numerical order. No component contains two distinct constants, since the strict constant edges between them would then lie on a directed cycle.

We now construct $\rho$. Outside $\vars(C)$, set $\rho=\theta$.
For a component $S$ containing a constant $k$, set $\rho(x)=k$ for every
$x\in S$. Since $\theta$ is constant on $S$ and $\theta(k)=k$, we also have
$\theta(x)=k$ for every $x\in S$. In particular, $\rho(k)=k$ for every
$k\in\compCons{C}$. This proves \emph{(Locality)}.

Now let $S$ contain variables but no constant. Let $k$ and $k'$ be,
respectively, the closest constant components before and after $S$ in $o$,
with an infinite endpoint when one does not exist. Then
$I_S=(k,k')\in\allintervals{C}$. The endpoint edges of $E_q$ imply
$I_S\subseteq\feasible{x}$ for every $x\in S$. For all variable components with the same interval $I_S$, choose distinct rational values inside $I_S$ in the order given by $o$. Choose these values also to be distinct from the finitely many values taken by $\theta$ outside $C$. This is possible because every open interval of $\Q$ is infinite.

\looseness=-1 We next verify \emph{(Order)}. The resulting values strictly increase between
distinct components in $o$. Indeed, two variable components with no constant
component between them in $o$ have the same closest constants and hence the
same interval, where we chose the values in the order $o$. If two components
have different intervals, some constant component lies between them. Values
before that constant are smaller than it, unless the component is the constant
itself, and values after it are larger. Distinct constant components appear in
their numerical order. Hence every edge of $G$ is satisfied, with strict inequality between distinct components. This proves \emph{(Order)}.

\looseness=-1 We now verify \emph{(Validity)}. By \emph{(Order)}, all comparison-atom edges of
$q$ are satisfied, and the endpoint edges ensure that every variable remains
in its effective domain. Thus $\rho$ satisfies the comparison atoms of $q$.
This proves \emph{(Validity)}.

It remains to verify \emph{(Inj)}. Two variables of $C$ receive the same value
only when they lie in the same strongly connected component, where $\theta$
is constant. A value shared by a variable of $C$ and a variable outside $C$
can only be a constant $k$, in which case both variables also receive $k$
under $\theta$. Thus
$\rho(u)=\rho(v)\quad\Longrightarrow\quad\theta(u)=\theta(v)$,
which proves \emph{(Inj)}.

\looseness=-1 It remains to prove \emph{(Tr)}. Suppose toward a contradiction that some valuation
$h$ satisfies $q'$ over $\rho(D_q)$ and
$h(\bar x')=\rho(\bar x)$. For every $y\in\vars(q')$, choose a relational atom
of $q'$ containing $y$. Since its image under $h$ belongs to $\rho(D_q)$,
it is the $\rho$-image of a relational atom of $q$. Let $z_y$ be the variable
of that atom at the position of $y$. Then $h(y)=\rho(z_y)$,
and $y$ partially matches $z_y$, so the common value belongs to both relevant
effective domains. Define
$h_0(y)=\theta(z_y)$.
If another variable $z'_y$ is chosen, then
$\rho(z_y)=\rho(z'_y)$, so \emph{(Inj)} gives
$\theta(z_y)=\theta(z'_y)$. Hence $h_0$ is well defined.

\looseness=-1 The same argument applies to relational atoms, position by position. Indeed, for a
relational atom $R(\bar y)$ of $q'$, the tuple $h(\bar y)$ is the $\rho$-image
of some atom $R(\bar z)$ of $q$. At each position, the globally chosen
realizer of the corresponding variable of $q'$ has the same $\rho$-value as
the variable of $\bar z$ at that position, hence the same $\theta$-value by
\emph{(Inj)}. Therefore $h_0(\bar y)=\theta(\bar z)$ and
$R(h_0(\bar y))\in\theta(D_q)$. Similarly, from
$h(x'_i)=\rho(x_i)$ and \emph{(Inj)} we obtain
$h_0(x'_i)=\theta(x_i)$ for every head position, so
$h_0(\bar x')=\theta(\bar x)$.

Since $\theta(\bar x)\notin q'(\theta(D_q))$, some comparison atom of $q'$
must fail under $h_0$. Write this atom as $u_1\comp u_2$, and replace each
variable $u_j$ by its chosen variable $w_j=z_{u_j}$, while a constant stands
for itself. Thus the induced comparison $w_1\comp w_2$ fails under $\theta$.
Since $h(u_j)=\rho(w_j)$ for each $j$ (with constants fixed by $\rho(k)=k$), it suffices to show that the induced comparison fails under $\rho$. Then the original comparison atom also fails under $h$.

If both sides are constants, the comparison has the same fixed truth value
under every assignment, so we may assume that at least one of $w_1,w_2$ is a variable.

If some variable among $w_1,w_2$ lies outside $C$, the induced comparison has
the same truth value under $\theta$ and $\rho$. Constants are fixed, and for
a variable--variable comparison the two corresponding variables lie in the same
component of $\CG$, as shown in the proof of
Theorem~\ref{thm:cmp-criterion}. Hence if one lies outside $C$, so does the
other. We may therefore assume that every variable among $w_1,w_2$ lies in
$C$. Any constant among them then belongs to $\compCons{C}$.

If the comparison atoms of $q$, together with the order of the constants, imply $w_1\comp w_2$, then it cannot fail under $\theta$. If they imply the negation of $w_1\comp w_2$, then $w_1\comp w_2$ also fails under $\rho$, since \emph{(Validity)} says that $\rho$ satisfies the comparison atoms of $q$. Otherwise the induced comparison is undecided by $q$. By the definition of the opposite graph, the reverse edge encoding the negation of $w_1\comp w_2$ lies in $E_{\tox q}$. Since the induced comparison fails under $\theta$, that reverse edge is satisfied by $\theta$, and hence belongs to $E$ by hypothesis. By \emph{(Order)}, it is therefore satisfied by $\rho$, so the induced comparison fails under $\rho$ as well.

Thus some comparison atom of $q'$ fails under $h$, contradicting the assumption that $h$ satisfies $q'$, which proves \emph{(Tr)}.
\end{proof}

\begin{proof}[Proof of Theorem~\ref{thm:addable-edge-containment}.]
Since $q^*$ is obtained from $q$ by adding $x_1<x_2$, we have $q^*\subseteq q$.
Thus $q\subseteq q'$ implies $q^*\subseteq q'$.

For the converse, we prove the contrapositive.  Suppose $q\not\subseteq q'$.
By Corollary~\ref{cor:canonical-violating} there is a canonical assignment
$\theta\in\Theta_q$ such that $\theta(\bar x)\in q(D)\setminus q'(D)$
for $D=\theta(D_q)$.

Let $ F=\{e\in E_{\tox q}:\theta\text{ satisfies }e\}$ 
and let $  G=(V,E_q\cup F\cup\{x_1\overset{<}{\to}x_2\})$.
Every edge of $E_q\cup F$ is satisfied by $\theta$.  Hence, no directed cycle in this subgraph contains a strict edge, and $\theta$ is constant on each of its strongly connected components. Since $x_1<x_2$ is addable, $\OG$
contains no path from $x_2$ to $x_1$.  In particular none
exists in $E_q\cup F$, so the new strict edge creates no directed cycle and the
strongly connected components are unchanged.

Apply Lemma~\ref{lem:reordering-transport}. By \emph{(Order)}, it gives an assignment $\rho$ that satisfies $G$, and hence $\rho(x_1)<\rho(x_2)$. By \emph{(Validity)}, $\rho$ satisfies all comparison atoms of $q$, and by \emph{(Tr)}, it preserves the failure of $q'$. Therefore, $\rho(\bar x)\in q^*(\rho(D_q))\setminus q'(\rho(D_q))$,
so $q^*\not\subseteq q'$. This proves the equivalence.

The proof used only the absence of a directed path from $x_2$ to $x_1$ and
kept constants fixed. Therefore, the same argument applies when one endpoint is a constant of $\compCons{C}$.
\end{proof}

In order to prove Theorem~\ref{thm:trichotomy} we use three lemmas.
Lemma~\ref{lem:path-characterization} characterizes the relations entailed by
a produced query in terms of paths in its opposite graph.
Lemma~\ref{lem:pinned-pairs} shows that once every cycle reverse edge is
decided and the variables that are entailed to be equal are merged, the only directed cycles
left in the opposite graph pin a variable to a constant, so no reverse edge
lies on a cycle any more and every remaining pair can be ordered by an
addable comparison.  Lemma~\ref{lem:cycle-completion} then uses these two lemmas to add such
comparisons to a trichotomy query until only one local assignment remains.
\begin{lemma}[Path characterization]
\label{lem:path-characterization}
Let $r$ be a produced query with satisfiable comparison atoms, and let $s,t$
be vertices of its opposite graph $\OG_r=(V,E_r\cup E_{\tox r})$.  The
comparison atoms of $r$ entail $s\le t$ if and only if $E_r$ contains a
directed path from $s$ to $t$, and entail $s<t$ if and only if at least one such
path contains a strict edge.
\end{lemma}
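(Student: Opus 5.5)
The plan is to prove both directions for each claim, treating $E_r$ as a system of difference constraints over $\Q$ whose constants are fixed. The edges of $E_r$ are the constant edges between consecutive constants, the endpoint edges between each variable and the finite endpoints of its effective domain, and the atom edges. Each edge $u\overset{\comp}{\to}v$ is exactly the constraint $u\comp v$, and every one of these constraints is entailed by the comparison atoms of $r$ together with the order of the constants.

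\emph{Soundness (path $\Rightarrow$ entailment).} Suppose $E_r$ has a directed path from $s$ to $t$. Chaining its edge constraints by transitivity gives $s\le t$ under every assignment satisfying $r$. If some edge on the path is strict, the chain gives $s<t$. This direction is routine.

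\emph{Completeness (entailment $\Rightarrow$ path).} Argue the contrapositive by building a countermodel. Suppose $E_r$ has no path from $s$ to $t$. I will construct an assignment that satisfies every edge of $E_r$, fixes each constant $k$ to itself, and gives $t$ a value strictly below the value of $s$. Since $r$ is satisfiable, no directed cycle of $E_r$ carries a strict edge (its atoms would force $u<u$). So we may take the condensation, in which each strongly connected component is a set of vertices forced equal, and no component holds two constants.

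Let $A$ be the set of vertices that reach $t$ in $E_r$, and let $B$ be the set of vertices reachable from $s$. These sets are disjoint, because a common vertex would give a path from $s$ to $t$. Now add the auxiliary strict edge $t\overset{<}{\to}s$. This edge creates no new cycle: such a cycle would have to use a path from $s$ back to $t$ inside $E_r$, and there is none. We then realize the augmented acyclic-with-respect-to-strict-edges graph by values, as in the construction of $\rho$ in Lemma~\ref{lem:reordering-transport}. Take a topological order of the condensation compatible with the numerical order of the constant components. Give each constant component its constant, and give each variable component a distinct rational inside the open interval between the nearest constant components before and after it. The endpoint edges keep each such interval inside the variable's effective domain. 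The resulting assignment satisfies all atoms of $r$ and has $t<s$, so $r$ does not entail $s\le t$.

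\emph{Strict case.} Suppose every path from $s$ to $t$ is non-strict. Then $s$ and $t$ lie on a common non-strict path structure, and we want an assignment with $s=t$. Add the auxiliary edge $t\overset{\le}{\to}s$. The only cycles this creates run through some path from $s$ to $t$, and all of those are non-strict, so the augmented graph still has no cycle through a strict edge. Now $s$ and $t$ share a strongly connected component. Realizing this graph as above gives them equal values, which refutes $s<t$. (If there is no path at all, the previous case already gives $t<s$.)

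\emph{Main obstacle.} The hard step is checking that the realization respects the constants. A variable component forced between two constants must get a value in the corresponding open interval, and merging $t$ with $s$ must not pin a variable component onto two different constants. I would handle this with the constant edges. Any new cycle that puts two constants in one component would combine the auxiliary edge with a path between distinct constants, and that path contains a strict constant edge. The strict case is exactly where this matters: the non-strict-path hypothesis rules out such a cycle, while the $t<s$ construction never merges components. I would also confirm that the reverse edges $E_{\tox r}$ play no role, since the statement concerns $E_r$ only.
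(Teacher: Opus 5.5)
Your argument follows the paper's proof closely. Soundness is by chaining edges. For the converse you add $t\to s$, strict when there is no path from $s$ to $t$ and non-strict when every such path is non-strict. You then rule out cycles through strict edges and realize the condensation, with constants fixed and variable components placed in increasing order inside the open intervals between the nearest constant components.

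The one missing step is that this realization assigns values only to $V=\vars(C)\cup\compCons{C}$. Refuting an entailment by the comparison atoms of $r$ requires a valuation satisfying \emph{all} of them. So your sentence ``the resulting assignment satisfies all atoms of $r$'' is not yet justified when $r$ has comparison atoms outside $C$. The paper closes the gap in two steps:
\begin{itemize}
\item It observes that no comparison atom of $r$ joins a variable of $C$ to a variable outside $C$. Atoms of $q$ are edges of $\CG$, and the added atoms have both sides in $V$.
\item It extends the assignment outside $C$ by an arbitrary satisfying valuation of $r$, which exists by hypothesis.
\end{itemize}
You use satisfiability only to exclude strict cycles, but it is also needed here.

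A minor point on your ``main obstacle'' paragraph: a path between two distinct constants need not contain a strict edge, for example $k_1\le x\le k_2$ via endpoint edges. The clean argument is the paper's. If $k_1<k_2$ shared a strongly connected component, the strict chain of constant edges from $k_1$ to $k_2$ together with the return path would form a cycle through a strict edge, which is already excluded. Your sets $A$ and $B$ are never used and can be dropped.
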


\begin{proof}
First let $E_r$ contain a directed path from $s$ to $t$.  Every
edge of $E_r$ records a relation that follows from the comparison atoms of
$r$ or from the numerical order of the constants.  An atom edge is given by an atom of
$r$, an endpoint edge records the tightest bound that the atoms of $r$ place
on its variable, and a constant edge is a true inequality between two
constants.  Composing these relations along a path from $s$ to $t$ gives
$s\le t$, and gives $s<t$ when at least one edge on the path is strict.

\looseness=-1 For the other direction, suppose that $E_r$ has no path from $s$ to $t$, or
that no path of $E_r$ from $s$ to $t$ contains a strict edge.  We build a
satisfying valuation of $r$ with $t<s$ in the first case and with $t\le s$ in
the second.  Add to $E_r$ the edge
$t\to s$, labeled $<$ in the first case and $\le$ in the second.  A directed
cycle through the new edge closes a path of $E_r$ from $s$ to $t$. Thus, in the
first case there is no such cycle, and in the second case every such cycle
consists only of non-strict edges.  A directed cycle of $E_r$ that contains a
strict edge would give $u<u$ for any vertex $u$ on it, by the direction just
proved. This is impossible, since the comparison atoms of $r$ are
satisfiable.  Hence the extended graph
has no directed cycle containing a strict edge.

As in the proof of Lemma~\ref{lem:reordering-transport}, contract the strongly
connected components of the extended graph and fix a topological order of
the resulting DAG.  Since the constant edges are strict and belong to $E_r$,
no component contains two distinct constants, and the components containing constants
occur in numerical order.  Assign to each such component its constant, and to
each remaining component a value in the open interval between the nearest
constant components before and after it, choosing these values increasing
along the order.  The resulting assignment $\rho$ fixes every constant and
satisfies every edge of the extended graph, strictly between distinct
components.  In particular $\rho$ satisfies every atom edge of $E_r$, and
hence every comparison atom of $r$ whose variables lie in $C$.
No comparison atom of $r$ joins a variable of $C$ to a variable
outside $C$, because such an atom of $q$ is an edge of $\CG$ and every added
atom has both sides in $V$.  Extending $\rho$ outside $C$ by any satisfying
valuation of $r$, which exists by satisfiability, therefore yields a satisfying
valuation of $r$.  It satisfies the added edge, so it has $t<s$ in the first
case and $t\le s$ in the second, as required.
\end{proof}

\begin{lemma}[Pinned pairs]
\label{lem:pinned-pairs}
Let $p$ be obtained from $q$ by adding for every cycle reverse edge
$e=(u\overset{\comp}{\to}v)\in\cedge$, comparison atoms between $u$ and $v$
that decide the induced comparison encoded by $e$, assume that the comparison
atoms of $p$ are satisfiable, and let $\widetilde p$ be obtained from $p$ by
merging every maximal class of variables of $C$ that its atoms entail to be equal.
Call a variable $x$ \emph{pinned} to a constant $k$ if $\widetilde p$ entails
$x\leq k$ and $k\leq x$.  Then:
\begin{enumerate}
  \item if the comparison atoms of $\widetilde p$ entail $s\leq t$ for vertices
  $s$ and $t$, then $\OG$ contains a directed path from every member of $s$ to
  every member of $t$, and
  \item every strongly connected component of $\OG_{\widetilde p}$ is a
  singleton or a pinned variable--constant pair.
\end{enumerate}
\end{lemma}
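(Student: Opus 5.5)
The plan is to prove both parts through Lemma~\ref{lem:path-characterization}, applied to $\widetilde p$, and to lift paths of $\OG_{\widetilde p}$ back to $\OG$.

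\emph{Part~1.} Assume $\widetilde p$ entails $s\le t$. By Lemma~\ref{lem:path-characterization} (the comparison atoms of $\widetilde p$ are satisfiable because those of $p$ are), $E_{\widetilde p}$ has a directed path from $s$ to $t$. I lift each edge of this path to a path of $\OG$ between members, edge by edge.
\begin{itemize}
\item A constant edge of $E_{\widetilde p}$ is already a constant edge of $E_q$.
\item An endpoint edge of a merged vertex records a bound entailed by the atoms of $p$ on some member. That bound comes either from an endpoint edge of $E_q$ or from a chain of atoms of $p$.
\item An atom edge comes from an atom of $q$, giving an edge of $E_q$, or from an added atom deciding a cycle reverse edge $e=(u\to v)$. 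In the second case I use that $e$ lies on a directed cycle of $\OG$. Whichever order the added atom asserts between $u$ and $v$, one direction is $e$ itself and the other is the rest of that cycle. So $\OG$ has paths $u\to v$ and $v\to u$, and $u,v$ lie in one strongly connected component of $\OG$.
\end{itemize}
Finally, members of one merged class are entailed equal by $p$. Applying the same edge-lifting argument to the $E_p$-paths that witness $x\le x'$ and $x'\le x$, their members are mutually reachable in $\OG$. Concatenating these pieces gives a path in $\OG$ from every member of $s$ to every member of $t$. The bookkeeping of member-to-member connections is the delicate point. I would handle it by first proving, by induction on path length, the invariant that endpoints of lifted edges are mutually reachable within their merged classes.

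\emph{Part~2.} Let $K$ be a strongly connected component of $\OG_{\widetilde p}$ with at least two vertices. I split into two cases according to whether a reverse edge lies inside $K$.

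\textbf{No reverse edge in $K$.} Then all edges of $K$ lie in $E_{\widetilde p}$, so Lemma~\ref{lem:path-characterization} makes all vertices of $K$ entailed $\le$ each other, hence equal. Distinct variables cannot remain, since entailed-equal classes were merged. Two distinct constants cannot both lie in $K$, since the strict constant edges would give a strict cycle, contradicting satisfiability. So $K$ is one merged variable together with one constant $k$ to which it is entailed equal, that is, a pinned pair.

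\textbf{A reverse edge $f\in E_{\tox{\widetilde p}}$ lies in $K$.} This case is the main obstacle, and I would argue by contradiction. The induced comparison of $f$ is undecided by $\widetilde p$, since otherwise no edge would be added. Its members come, via (F3), from a reverse edge of $\OG$. I would lift the cycle through $f$ in $K$ to $\OG$: the $E_{\widetilde p}$-segments via Part~1, and the other reverse edges as themselves or as their members. This shows the underlying reverse edge of $\OG$ lies on a directed cycle, so it is in $\cedge$. But then $p$ decided it, and so does $\widetilde p$, contradicting undecidedness. The delicate step is checking that a reverse edge of $\widetilde p$ corresponds to a reverse edge of $\OG$ between members. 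This holds because $q$ entails less than $p$, so a comparison undecided by $\widetilde p$ is undecided by $q$. It follows that $K$ contains no reverse edge, and the first case applies.
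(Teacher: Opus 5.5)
Your proposal is correct and follows essentially the paper's route. You lift edges of $E_{\widetilde p}$, and reverse edges via their member-level counterparts in $E_{\tox q}$, to paths of $\OG$, using $e$ or the rest of its cycle for added atoms and Lemma~\ref{lem:path-characterization} on $p$ to connect members of a merged class; your Case~2 is the paper's separate claim that no reverse edge of $\OG_{\widetilde p}$ lies on a directed cycle (its counterpart would fall in $\cedge$ and hence be decided), after which every nontrivial strongly connected component consists of entailed-equal vertices, i.e., one merged variable pinned to one constant.
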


\begin{proof}
\looseness=-1 Call a directed path in $\OG$ from a member of $a$ to a member of $b$ an
\emph{expansion} of an edge $a\to b$ of $E_p$ or $\OG_{\widetilde p}$, where
a merged vertex stands for its members. We claim that every such edge has an
expansion.

\looseness=-1 For edges inherited from $q$, this is immediate: every atom, endpoint, or
constant edge is already an edge of $\OG$ between the corresponding members.
Now consider an edge introduced by an added comparison between $u$ and $v$,
where $e=(u\to v)\in\cedge$. If the added edge has the direction of $e$, then
$e$ itself is an expansion. If it has the opposite direction, the rest of a
directed cycle of $\OG$ containing $e$ gives the required path. If an added
variable--constant comparison tightens an endpoint edge, the tightened edge is
simply the edge of that added atom.

It remains to consider reverse edges of $\OG_{\widetilde p}$. Such an edge has
a member-level counterpart in $E_{\tox q}\setminus\cedge$. Indeed, partial
matching can only shrink when passing to $\widetilde p$, so the corresponding
induced comparison already exists for $(q,q')$. Since this comparison is not
decided by $\widetilde p$, it is not decided by $q$, and its reverse edge
therefore belongs to $E_{\tox q}$. It cannot belong to $\cedge$, because $p$
decides the comparison encoded by every edge in $\cedge$.

\looseness=-1 Expansions of consecutive edges of $E_p$ can be concatenated directly, since
$p$ has no merged vertices. For $\OG_{\widetilde p}$, two consecutive expansions
may meet at different members of the same merged vertex. Let $x$ and $y$ be
such members. Since they were merged, $p$ entails both $x\leq y$ and $y\leq x$.
By Lemma~\ref{lem:path-characterization}, $E_p$ therefore contains paths from
$x$ to $y$ and from $y$ to $x$. Expanding these paths gives directed paths in
$\OG$ in both directions. Thus we can connect any two expansions that meet at
the same merged vertex.

It follows that the expansions of all edges on a path in
$\OG_{\widetilde p}$ from $s$ to $t$ can be concatenated into a directed path
in $\OG$, and by the same connections it can start at any member of $s$ and
end at any member of $t$. Claim (1) now follows from
Lemma~\ref{lem:path-characterization} applied to $\widetilde p$.

No reverse edge of $\OG_{\widetilde p}$ can lie on a directed cycle. Suppose
that a reverse edge $a\to b$ did lie on such a cycle. Then the rest of the
cycle gives a directed path from $b$ back to $a$. By the previous argument,
this path expands to a directed path in $\OG$ from a member of $b$ to a member
of $a$. The reverse edge itself has a member-level counterpart in
$E_{\tox q}\setminus\cedge$, so together they form a directed cycle in $\OG$
containing that counterpart. This is impossible, since any such edge lying on
a directed cycle would belong to $\cedge$.

\looseness=-1 For claim (2), suppose that a strongly connected component of
$\OG_{\widetilde p}$ contains two distinct vertices $s$ and $t$. There are
directed paths from $s$ to $t$ and from $t$ to $s$. Every edge on these paths
lies on a directed cycle, so by the previous paragraph none of them is a
reverse edge. Thus all of them are atom, endpoint, or constant edges, and
therefore record relations entailed by $\widetilde p$ or by the order of the
constants.

\looseness=-1 Neither path can contain a strict edge. Otherwise, one path would imply $s<t$
while the other implies $t\leq s$, or vice versa, contradicting
satisfiability. Hence $\widetilde p$ entails both $s\leq t$ and $t\leq s$, so
$s$ and $t$ are entailed equal.

By construction of $\widetilde p$, two distinct merged variables cannot be
entailed equal, and neither can two distinct constants. Therefore a
non-singleton strongly connected component contains exactly one variable $x$
and one constant $k$. Since they are entailed equal, $\widetilde p$ entails
both $x\leq k$ and $k\leq x$, so $x$ is pinned to $k$.
\end{proof}

\begin{lemma}[Completion after the cycle decisions]
\label{lem:cycle-completion}
Let $p,\widetilde p$ be as in Lemma~\ref{lem:pinned-pairs}. One can, in polynomial time, add $O(|V|^2)$ comparison atoms on $V$ to
$\widetilde p$ and obtain $\widehat p$ with:
\begin{enumerate}
  \item $p\subseteq q'$ if and only if $\widehat p\subseteq q'$,
  \item the canonical assignments of $\widehat p$ induce a single local
  assignment on $C$, and
  \item this local assignment satisfies every edge of
  $E_{\tox q}\setminus\cedge$.
\end{enumerate}
\end{lemma}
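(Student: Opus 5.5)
We first pass from $p$ to $\widetilde p$. Merging the variables that $p$ entails equal does not change the satisfying valuations of $p$, so $p\subseteq q'$ if and only if $\widetilde p\subseteq q'$. From then on we add addable comparisons one at a time, using Theorem~\ref{thm:addable-edge-containment}, which holds for every produced query by the convention fixed above.

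The iteration runs as follows. Let $r$ be the current produced query, starting from $r=\widetilde p$. By Lemma~\ref{lem:pinned-pairs}(2), every strongly connected component of $\OG_{\widetilde p}$ is a singleton or a pinned variable--constant pair. Moreover, no reverse edge of $\OG_{\widetilde p}$ lies on a cycle. While some pair $s,t\in V$ with at least one variable is not strictly ordered by $r$, and is not a pinned pair, we do the following.

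If $\OG_r$ has no path from $t$ to $s$, then $s<t$ is addable, and we add it. If there is such a path, then there is none from $s$ to $t$, because otherwise $s$ and $t$ would lie in a common strongly connected component, which must be a pinned pair. In that case we add $t<s$ instead.

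To pick the pairs, we fix in advance a topological order of the condensation of $\OG_{\widetilde p}$ and always add edges consistent with it. Every added edge then points forward in this order. Hence the strongly connected components never change, and no new reverse edge ever lands on a cycle: adding atoms only decides more induced comparisons, so $E_{\tox r}$ only shrinks. After $O(|V|^2)$ steps, each of which is a polynomial path check, every pair in $V$ is strictly ordered except the pinned pairs. Repeated application of Theorem~\ref{thm:addable-edge-containment} then gives claim~(1).

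For claim~(2), consider the resulting query $\widehat p$. It fixes the interval of $\allintervals{C}$ containing each variable, since the variable is strictly ordered against every constant or pinned to one. It also fixes a strict total order on the variables lying in each open interval $I$. If $t$ merged variables are active on $I$, the $j$-th smallest of them is entailed below exactly $t-j$ others. Rank trimming therefore leaves it only representatives of index at most $j$. Because the order is strict, a canonical assignment of $\widehat p$ must use distinct representatives, so it is forced to use exactly $\alpha_1,\dots,\alpha_t$ in order. Pinned variables take their constant. Hence the local assignment on $C$ is unique.

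Claim~(3) is the delicate step, and the main obstacle is making sure the chosen topological order is oriented correctly. Take an edge $e\in E_{\tox q}\setminus\cedge$ from $u$ to $v$. By construction, $e$ does not lie on a cycle of $\OG$.

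If the comparison encoded by $e$ remains undecided in $\widetilde p$, its counterpart is an edge of $\OG_{\widetilde p}$. It goes forward in the topological order, and therefore in the final total order. This yields $u<v$, or $u\le v$ via a pinned pair, and so $e$ is satisfied.

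If instead $\widetilde p$ already decides the comparison, we must rule out the case where it decides it against $e$, that is, where it entails the negation $v\comp u$. In that case Lemma~\ref{lem:pinned-pairs}(1) supplies a path in $\OG$ from $v$ to $u$, which together with $e$ would put $e$ on a cycle, contradicting $e\notin\cedge$. The strictness labels need a short case check here, since a non-strict path together with $e$ may or may not form a forbidden cycle. I expect to handle this by checking that the opposite strictness labeling on reverse edges matches the strict and non-strict entailment in Lemma~\ref{lem:path-characterization}.
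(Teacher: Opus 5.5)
Your fixed-topological-order completion is a valid variant of the paper's saturation, and claim (1) and the decided case of claim (3) go through. The gap is in the undecided case of (3). You assert that if the comparison encoded by $e\in E_{\tox q}\setminus\cedge$ is undecided in $\widetilde p$, then its counterpart is an edge of $\OG_{\widetilde p}$. This is false in general. Reverse edges of $\OG_{\widetilde p}$ are built from partial matches relative to $\widetilde p$, and the atoms added in $p$ (and merging) shrink effective domains. So a partial match of $(q,q')$ can disappear while the comparison stays undecided. Your topological order then does not see $e$ and may orient its endpoints the wrong way. For example, take
\[
\begin{aligned}
q() &\leftarrow R(x),\ S(z),\ 0<x<10,\ 0<z<10,\\
q'() &\leftarrow R(y_1),\ R(y_2),\ R(y_3),\ S(y_4),\ y_1\le 5,\ 5\le y_2,\ 6<y_3<20,\ y_3\le y_4.
\end{aligned}
\]
Here $C=\{x,z\}$, $\compCons{C}=\{0,5,6,10,20\}$, and $\cedge=\{x\to 5,\ 5\to x\}$. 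The edge $e=(z\overset{<}{\to}x)$, induced by $y_3\le y_4$, lies in $E_{\tox q}\setminus\cedge$. For $p=q\wedge x<5$, the variable $y_3$ no longer partially matches $x$, so $\OG_{\widetilde p}$ has no reverse edges at all, while $x\le z$ remains undecided. The topological order $0,x,z,5,6,10,20$ is valid, and with it you add $x<z$, which violates $e$.

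The missing idea is the paper's protection step. Before saturating, add $u<v$ for every $e=(u\to v)\in E_{\tox q}\setminus\cedge$ that is not yet strictly ordered in its direction. Each such comparison is addable: a path from $v$ to $u$ in the current opposite graph would expand, as in the proof of Lemma~\ref{lem:pinned-pairs}, to a path in $\OG$ and put $e$ on a cycle. Equivalently, you can insert the counterparts of all these edges into $\OG_{\widetilde p}$ before choosing the topological order; the same expansion argument shows that no strongly connected component changes.

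Your worry about strictness labels in the decided case is unfounded. $\cedge$ consists of the reverse edges on any directed cycle of $\OG$, whatever the labels, and Lemma~\ref{lem:pinned-pairs}(1) needs only $\widetilde p\models v\le u$, which both $v<u$ and $v\le u$ give.

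Two smaller points in claim (2) do need attention.
\begin{enumerate}
\item Your rank count assumes that the variables active on $I$ are exactly those lying in $I$. Activity, however, is read off effective domains, which the atoms determine. Your loop adds no atom for a variable--constant pair already strictly ordered by entailment alone, say atoms $x<w$ and $w<5$ but no atom $x<5$. Such an $x$ stays active above $5$, inflates the count there, and can leave another variable several representatives. The paper avoids this by also materializing every entailed variable--constant bound as an atom.
\item A local assignment is by definition a restriction of a canonical assignment of $q$. So you must also check that the forced value $\alpha^{C,I}_j$ lies in $\repvalx{C}{I}{x}$ for $q$, that is, $j\le\rankmax{x}{I}$ computed in $q$. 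The paper verifies this by counting the lower classes.
\end{enumerate}
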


\begin{proof}
  \looseness=-1 Merging identifies only variables that $p$ entails to be equal, so
$p\subseteq q'$ if and only if $\widetilde p\subseteq q'$. Throughout the
construction, the current query is produced and satisfies (F1)--(F3). Hence
Theorem~\ref{thm:addable-edge-containment} applies at every stage, and by
Lemma~\ref{lem:path-characterization}, every entailed relation is witnessed
by a path.

\emph{Protection.}
Let $e=(u\overset{\comp}{\to}v)\in E_{\tox q}\setminus\cedge$. If the current
atoms already imply $u\comp v$, no modification is needed. They cannot imply
its negation, since the corresponding reverse path would expand, as
in Lemma~\ref{lem:pinned-pairs}, to a path from $v$ to $u$ in $\OG$, placing
$e$ on a directed cycle. Thus, whenever $u$ and $v$ are not yet strictly
ordered in the direction of $e$, the comparison $u<v$ is a candidate. It is
addable for the same reason: a reverse path in the current opposite graph
would expand to one in $\OG$. We therefore add such comparisons for all edges
in $E_{\tox q}\setminus\cedge$. Since added edges are parallel to the
corresponding protected edges and reverse edges can only disappear, the
argument remains valid throughout. Afterwards, every valuation of the current
query satisfies every edge of $E_{\tox q}\setminus\cedge$.

\emph{Completion.}
We maintain the invariant that every strongly connected component of the
current opposite graph is either a singleton or a pinned variable--constant
pair. This holds initially by Lemma~\ref{lem:pinned-pairs}(2) and is preserved
by protection, since addable edges create no directed cycle.

We repeatedly apply the following two operations. First
(\emph{materialization}), if an entailed
variable--constant bound is not yet an atom, we add it. By
Lemma~\ref{lem:path-characterization}, the new edge is parallel to an existing
path, so any cycle created by it stays within an existing pinned component.
Second (\emph{saturation}), if two vertices are neither strictly ordered nor
entailed to be equal,
at least one strict orientation between them is addable. Otherwise, they would
lie in the same strongly connected component and hence be entailed to be
equal. We add such an orientation and repeat.

\looseness=-1 These operations preserve the invariant and satisfiability. They also
introduce no new equality. Any new equality using an added strict edge would
create a strict directed cycle. Each operation adds a new atom, so after
$O(|V|^2)$ additions the process terminates with a query $\widehat p$. By
Theorem~\ref{thm:addable-edge-containment}, containment equivalence with $q'$
is preserved at every strict addition, while materialization adds only
entailed atoms. Hence $p\subseteq q' \Longleftrightarrow \widehat p\subseteq q'$,
which proves (1). All required tests are reachability tests in graphs of
polynomial size, so $\widehat p$ is computable in polynomial time.

\emph{One local assignment.}
In $\widehat p$, every two vertices are either strictly ordered or entailed to
be equal, and every entailed variable--constant bound is an atom. Thus every
pinned variable is fixed to its constant, while every unpinned variable lies
in a unique interval $I\in\allintervals{C}$. The merged variables active on
$I$ form a strict total order $x_1<\cdots<x_t$.

For $x_j$, exactly $x_{j+1},\ldots,x_t$ are forced above it, so its rank bound
is $j$. Hence its canonical set is $\{\alpha^{C,I}_1,\ldots,\alpha^{C,I}_j\}$, and strict increase forces
$x_j\mapsto\alpha^{C,I}_j$ by induction on $j$.

It remains only to check that these values are canonical for $q$. Let
$\ell=|\actvars{C}{I}|$, and let $x$ belong to the class of $x_j$. No lower
class contains a variable $y$ with $q\models x\le y$, since $\widehat p$ is
satisfiable and extends $q$, and every lower class remains active on $I$ for
$q$. Choosing one member from each lower class gives $j-1
\le
\ell-1-
\left|
\{y\in\actvars{C}{I}\setminus\{x\}:q\models x\le y\}
\right|$,
so $j\le\rankmax{x}{I}$ and
$\alpha^{C,I}_j\in\repvalx{C}{I}{x}$.
Pinned variables similarly take the canonical value of their singleton
interval. Hence $\widehat p$ induces a single local assignment on $C$,
proving (2).

Finally, (3) follows from the protection step: every valuation of
$\widehat p$ satisfies every edge of $E_{\tox q}\setminus\cedge$.
\end{proof}

\begin{proof}[Proof of Theorem~\ref{thm:trichotomy}]
  Every trichotomy query $q_\sigma$ is obtained from $q$ by adding comparison
atoms, so $q_\sigma\subseteq q$.  Hence $q\subseteq q'$ implies
$q_\sigma\subseteq q'$ for every $\sigma$.

\looseness=-1 Conversely, suppose $q\not\subseteq q'$.  Choose a database $D$, a tuple
$\bar a\in q(D)\setminus q'(D)$, and a satisfying valuation $\nu$ of $q$ with
$\nu(\bar x)=\bar a$, extended to constants by $\nu(k)=k$.  For every
$e=(u\to v)\in\cedge$, exactly one of $\nu(u)<\nu(v)$, $\nu(u)=\nu(v)$, or $\nu(v)<\nu(u)$
holds.  Let $\sigma$ record these relations.  Then $\nu$ satisfies every atom
added to $q_\sigma$, and therefore
$ \bar a\in q_\sigma(D)\setminus q'(D)$. Thus $q_\sigma\not\subseteq q'$, proving $q\subseteq q'$ if and only if $q_\sigma\subseteq q'$ for every trichotomy query $q_\sigma$.

For the local-assignment claim, fix $\sigma$.  If $q_\sigma$ is unsatisfiable,
its canonical test is vacuous.  Otherwise $q_\sigma$ decides the induced
comparison encoded by every edge of $\cedge$, so
Lemma~\ref{lem:cycle-completion} applies.  It yields a completed query
$\widehat q_\sigma$ with $q_\sigma\subseteq q' \Longleftrightarrow \widehat q_\sigma\subseteq q'$, whose canonical assignments induce one local assignment on $C$.  Thus each
trichotomy case is tested over at most one local assignment, and the component
contributes at most $3^{|\cedge|}$ local assignments.
\end{proof}

\begin{proof}[Proof of Theorem~\ref{thm:feedback-query-containment}]
  \looseness=-1 Assume $\OG$ has no non-strict cycle.  Every feedback query $q_i$
adds comparisons to $q$, so $q_i\subseteq q$.  Hence
$q\subseteq q'$ implies $q_i\subseteq q'$ for every minimal $\mfac$ $E_i$.

For the converse, suppose $q\not\subseteq q'$.  By Corollary~\ref{cor:canonical-violating} there is a
canonical assignment $\theta\in\Theta_q$ such that $  \theta(\bar x)\in q(\theta(D_q))\setminus q'(\theta(D_q))$.

\looseness=-1 Let $U=\{e\in\cedge:\theta\text{ violates }e\}$.
We claim that $U$ is an $\mfac$.  Otherwise $ (V,E_q\cup(E_{\tox q}\setminus U))$
contains a directed cycle $\gamma$.  Every reverse edge on $\gamma$ lies on a
directed cycle of the original $\OG$, and hence belongs to
$\cedge\setminus U$.  It is therefore satisfied by $\theta$.  The edges of
$E_q$ are satisfied as well.  Thus $\theta$ is nondecreasing around $\gamma$,
which forces all vertices of $\gamma$ to receive the same value. However, by the assumption of the theorem, $\gamma$ contains a strict edge, a contradiction.

\looseness=-1 Choose a minimal $\mfac$ $E\subseteq U$, and set $ H=(V,E_q\cup(E_{\tox q}\setminus E))$.
Then $H$ is a DAG.  Every reverse edge satisfied by $\theta$ belongs to $H$. Indeed,
a satisfied cycle reverse edge is not in $U$ and hence not in $E$, while a
reverse edge outside $\cedge$ is never removed.  Apply
Lemma~\ref{lem:reordering-transport} to $H$.  Since $H$ is acyclic, the
resulting assignment $\rho$ is strictly increasing along every edge of $H$ by
(Order), and $\rho(\bar x)\notin q'(\rho(D_q))$ by (Tr).

Let $q_E$ be the feedback query associated with $E$.  Its original comparison
atoms hold under $\rho$.  If
$e=(u\overset{\comp}{\to}v)\in\cedge\setminus E$, then $e\in H$, so $\rho$
satisfies the comparison $u\comp v$ asserted by $q_E$.  If
$e=(u\overset{\comp}{\to}v)\in E$, the minimality of $E$ implies that adding
$e$ back to $H$ creates a directed cycle.  Hence $H$ contains a directed path
from $v$ to $u$.  Since $\rho$ is strict along $H$,
$\rho(v)<\rho(u)$, and therefore $\rho$ satisfies the negation
$v\altcomp u$ added by $q_E$.  Thus
$\rho(\bar x)\in q_E(\rho(D_q))$.

Together with $\rho(\bar x)\notin q'(\rho(D_q))$, this gives $q_E\not\subseteq q'$. This proves the characterization.

\emph{One local assignment.}
If the comparison atoms of $q$ are unsatisfiable, which can be checked in
polynomial time, then $q$ and every $q_E$ are contained in $q'$ and have no
canonical assignment, so the claims hold trivially.  Assume therefore that
they are satisfiable.  Then every feedback query $q_E$ is satisfiable.  Fix
any $\theta\in\Theta_q$, which exists by
Lemma~\ref{lem:component_equality_order_eq}.  For a minimal $\mfac$ $E$, the graph $H$ above is a DAG, so
$\theta$ is trivially constant on each of its strongly connected components.
Lemma~\ref{lem:reordering-transport}~(Order) therefore
realizes every edge of $H$ strictly, and the reverse path supplied by minimality
satisfies the negation added for every edge of $E$.  The query $q_E$ decides the induced comparison
encoded by every edge of $\cedge$, so Lemma~\ref{lem:cycle-completion} applies
and yields one local canonical assignment on $C$.

\emph{Polynomial-delay generation.}
We may assume that $E_q$ is
acyclic, since otherwise every directed cycle of $E_q$ contains a strict edge
by the assumption of the theorem, and no $\mfac$ exists.
\looseness=-1 Define a
digraph
$T$ whose vertices are the edges of $\cedge$.  Put an edge $e\to f$ whenever the
head of $e$ reaches the tail of $f$ by a path consisting only of $E_q$-edges,
where the path may be empty when the two vertices coincide.

For $E\subseteq\cedge$, write $  \OG_E=(V,E_q\cup(E_{\tox q}\setminus E))$.
We show that $E$ is an $\mfac$ of $\OG$ if and only if $E$ is a feedback
vertex set of $T$.
If $\OG_E$ has a directed cycle, every reverse edge on that cycle belongs to
$\cedge\setminus E$, and because $E_q$ is acyclic the cycle contains at least
one such edge.  These reverse edges are pairwise distinct, and reading them in
cyclic order, the $E_q$-paths between consecutive ones witness the edges of a
directed cycle of $T-E$, a self-loop when there is only one.

Conversely, let $  e_1\to e_2\to\cdots\to e_m\to e_1$
be a directed cycle in $T-E$.  For each $i$, follow the reverse edge $e_i$ and
then the $E_q$-path witnessing the edge from $e_i$ to $e_{i+1}$ (indices modulo
$m$).  Following these pieces in order leads from the head of $e_1$ back to
the tail of $e_1$, so $\OG_E$ contains a directed path between them.  Together
with $e_1$ it forms a directed cycle.  This proves the equivalence.  The
correspondence preserves inclusion, so the minimal $\mfac$s are exactly the
minimal feedback vertex sets of $T$.

The graph $T$ is computable in polynomial time by reachability in $E_q$.  A
vertex of $T$ carrying a self-loop belongs to every feedback vertex set.  Remove
all such vertices and enumerate the minimal feedback vertex sets of the
remaining loop-free digraph with polynomial delay using the algorithm of
Schwikowski and Speckenmeyer~\cite{schwikowski2002enumerating}.  Adding the forced loop vertices back gives
exactly the minimal $\mfac$s, once each.

For each output $E$, the feedback query $q_E$ can be constructed in polynomial
time. By the satisfiability argument above, Lemma~\ref{lem:cycle-completion}
computes its unique local assignment in polynomial time. Indeed, each materialization or saturation step requires only a reachability computation in a graph of polynomial size, and at most $O(|V|^2)$ comparison atoms are added.  Hence the feedback
queries, together with their local assignments, are generated with polynomial
delay.
\end{proof}

\subsection{Proofs for Section~\ref{sec:combined}}\label{app:proofs-combined}
\looseness=-1 A value is \emph{private} under a combined assignment $\theta$ if exactly one
variable of $q$ is assigned that value.

\begin{observation}
\label{obs:private-combined}
\looseness=-1 Let $\theta$ be a combined assignment, let $h$ be a satisfying valuation of
$q'$ over $\theta(D_q)$, and let $y\in\vars(q')$ satisfy
$h(y)=\theta(z)$, where $\theta(z)$ is private. Then $ \POS{q'}{y}\subseteq \POS{q}{z}$.

\end{observation}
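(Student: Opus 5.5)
The argument is a direct positional one, mirroring the step in the proof of Theorem~\ref{thm:vc-criterion} where an infinite-witness constant forces $y\in\tox x$. Fix an arbitrary position $(R,j)\in\POS{q'}{y}$. Then $y$ occurs at position $j$ of some relational atom $R(\bar y)$ of $q'$. Since $h$ is a satisfying valuation of $q'$ over $\theta(D_q)$, the tuple $R(h(\bar y))$ belongs to $\theta(D_q)$.

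Every tuple of $\theta(D_q)$ is, by construction, the $\theta$-image of some relational atom of $q$. This holds because queries are normalized by (N2), so every argument of a relational atom is a variable. Hence there is an atom $R(\bar z)$ of $q$ with $\theta(\bar z)=h(\bar y)$. Let $w$ be the variable of $\bar z$ at position $j$. Then
\[
\theta(w)=h(y)=\theta(z).
\]

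Next we use privacy. By definition, a private value is assigned by $\theta$ to exactly one variable of $q$. The only point to check is that $\theta(z)$ is a genuine value and not $\perp$ taken by several nulled variables. The statement presupposes this, since ``exactly one variable'' rules out a shared $\perp$. If $\theta(z)=\perp$ and $z$ is the only nulled variable, the argument is unchanged. Therefore $w=z$, so $z$ occurs at position $j$ of an $R$-atom of $q$, that is, $(R,j)\in\POS{q}{z}$.

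Since $(R,j)$ was arbitrary, $\POS{q'}{y}\subseteq\POS{q}{z}$. No step here is a real obstacle. The one point needing care is that a value could appear in $\theta(D_q)$ at position $(R,j)$ from some atom of $q$ other than the one we chose. This does not matter: whichever atom realizes the tuple, the variable at position $j$ carries the value $\theta(z)$, and privacy identifies it as $z$. Finally, $h$ satisfying comparison atoms or the non-null requirement on join variables plays no role; only the relational atoms are used.
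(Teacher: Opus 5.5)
Your proposal is correct and follows the paper's own argument essentially step for step. You fix a position of $y$, note that the $h$-image of the corresponding atom of $q'$ is the $\theta$-image of some $R$-atom of $q$, and use privacy of $\theta(z)$ to identify the variable at that position as $z$. Your remarks on normalization (N2) and on the case $\theta(z)=\perp$ are harmless additions to the same route.
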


\begin{proof}
Let $(R,j)\in\POS{q'}{y}$. Choose an $R$-atom of $q'$ in which $y$ occurs
at position $j$. Since $h$ satisfies $q'$ over $\theta(D_q)$, the image of
this atom under $h$ belongs to $\theta(D_q)$, and hence is the $\theta$-image
of some $R$-atom of $q$. The variable occurring at position $j$ in that atom
is mapped by $\theta$ to $h(y)=\theta(z)$. Since $\theta(z)$ is private, that
variable must be $z$. Thus $(R,j)\in\POS{q}{z}$, which proves the claim.
\end{proof}

\begin{proof}[Proof of Theorem~\ref{thm:comb-criterion}.]
Every $D\in\Dcomb(q,q')$ is a database, so the only-if direction is immediate.
For the converse, assume that $q(D)\subseteq q'(D)$ for every $D\in\Dcomb(q,q')$.
Let $D_0$ be a database, let $\nu$ satisfy $q$ over $D_0$, and set
$\bar a=\nu(\bar x)$. We show that $\bar a\in q'(D_0)$.

\emph{Head positions.}
Let $x_i$ be a head variable in no comparison atom. We claim that
$\POS{q'}{x'_i}\subseteq\POS{q}{x_i}$.
We choose a combined assignment $\theta_1$ that assigns a private value to $x_i$. If $x_i$ is isolated, assign it the fresh constant from an infinite witness in $\witset_{x_i}$, which exists since $\feasible{x_i}=\Q$.
If $x_i$
lies in a nontrivial component $C$, then by normalization the comparison atoms
of $q$ admit a satisfying assignment that is injective on $C$ and places $x_i$
in an open interval. The construction of
Lemma~\ref{lem:component_equality_order_eq} turns it into a canonical
assignment that is still injective on $C$, and we null $V_n$. Since
$x_i$ is a head variable, $x_i\notin V_n$, so its value is a private
open-interval representative of $C$. Since $\theta_1(\bar x)\in q(\theta_1(D_q))$, the assumption gives a satisfying valuation $h_1$ of $q'$ over
$\theta_1(D_q)$ with $h_1(x'_i)=\theta_1(x_i)$, and
Observation~\ref{obs:private-combined} gives the claim.

\emph{The assignment $\theta$.}
Let $\theta(x)=\bot$ for $x\in V_n$ and for $x\in V_t$ with $\nu(x)=\bot$.
Every other $x$ receives a canonical value as follows.
\begin{enumerate}
  \item[(a)] If $\nu(x)\neq\bot$, take the value that mimics $\nu(x)$ in the
  proof of Theorem~\ref{thm:vc-criterion} when $x$ is isolated. If $x$
  lies in a nontrivial component, use the representative produced by
  Lemma~\ref{lem:component_equality_order_eq} from the non-null part of $\nu$.
  \item[(b)] If $\nu(x)=\bot$, then $x\in V_f$ and $x$ is uncompared. If $x$
  is isolated, choose the fresh constant of an infinite witness in $\witset_x$.
  If $x$ lies in a nontrivial component $C$, then $x$ is active on every
  interval and every representative is canonical for $x$. On each open
  interval $I$, rule (a) uses one representative for each $\nu$-equality class
  whose value lies in $I$. Hence enough representatives remain to assign the
  variables of rule (b) in $C$ distinct unused representatives.
\end{enumerate}
\looseness=-1 Compared variables lie in $\nonnull(q)$, so they are non-null under $\nu$ and
fall under rule (a). Hence $\nu\ceqord\theta$ on the variables that both
assignments leave non-null, so $\theta$ satisfies the variable--variable
comparison atoms of $q$, and $\theta(x)\in\feasible{x}$ gives the
variable--constant ones. Thus $\theta$ is a combined assignment. Moreover, a
value chosen by rule (b) is taken by no other variable, so it is private.

\emph{The valuation $\nu'$.}
Since $\theta(\bar x)\in q(\theta(D_q))$, the assumption gives a satisfying valuation $h$ of $q'$ over $\theta(D_q)$ with
$h(\bar x')=\theta(\bar x)$. For each relational atom $A=R(\bar y)$ of $q'$,
fix an atom $R(\bar z_A)$ of $q$ with $\theta(\bar z_A)=h(\bar y)$. We define
$\nu'$ by two rules: (1) If $h(y)\neq\bot$, set $\nu'(y)=\nu(z)$ for any $z$ with
  $\theta(z)=h(y)$. (2) If $h(y)=\bot$, then $y\notin\nonnull(q')$ occurs once, in some
  atom $A$, and we set $\nu'(y)=\nu(z)$ for the variable $z$ of $\bar z_A$ at
  the position of $y$.

\looseness=-1 Rule (1) is well defined. If $\theta(z_1)=\theta(z_2)\neq\bot$, then the
value is private and $z_1=z_2$, or it is an open-interval representative of
rule (a) and $z_1,z_2$ share their $\nu$-value, or it is a finite-witness or
singleton-interval value and $\theta(z_j)=\nu(z_j)$ for both $j$. Hence $\nu(z_1)=\nu(z_2)$ in all cases.

\emph{Relational atoms.}
For $A=R(\bar y)$, rules (1) and (2) give $\nu'(\bar y)=\nu(\bar z_A)$, and
$R(\nu(\bar z_A))\in D_0$ because $\nu$ satisfies $q$.

\emph{Non-null variables of $q'$.}
\looseness=-1 Let $w\in\nonnull(q')$, so $h(w)\neq\bot$ and $\nu'(w)=\nu(z)$ for some $z$
with $\theta(z)=h(w)$. Suppose that $\nu(z)=\bot$. Then $z$ falls under
rule (b), so $\theta(z)$ is private and Observation~\ref{obs:private-combined}
gives $\POS{q'}{w}\subseteq\POS{q}{z}$. As $z$ is uncompared and non-join,
$z\in V_f\setminus\nonnull(q)=V_{CN}\cup(\head(q)\setminus V_{CJ})$ and
$\POS{q}{z}$ is a singleton, so $\POS{q'}{w}=\POS{q}{z}$. Now $z$ is covered.
This is immediate for $z\in V_{CN}$, and for $z=x_i\in\head(q)$ it follows
from the head-position claim, which gives $\POS{q'}{x'_i}=\POS{q}{z}$. Hence $w$ puts $z$ into
$V_{CJ}$, contradicting $z\notin V_{CJ}$. So $\nu'(w)\neq\bot$.

\emph{Comparison atoms.}
\looseness=-1 For a compared variable $y$ of $q'$, pick an atom $A$ containing $y$ and let
$z_y$ be the variable of $\bar z_A$ at the position of $y$. Then
$\theta(z_y)=h(y)$ lies in both effective domains, so $y$ partially matches
$z_y$, and $\nu'(y)=\nu(z_y)$ by rule (1). By the previous paragraph
$\nu(z_y)\neq\bot$, so $z_y$ falls under rule (a).

\looseness=-1 Let $y\comp k$ be an atom of $q'$. If $z_y$ is isolated, either
$\theta(z_y)=\nu(z_y)$, so
$\nu'(y)=h(y)\in\fullfeasible{q'}{y}$, or $\theta(z_y)$ is the private
constant of an infinite witness. Then
Observation~\ref{obs:private-combined} gives $y\in\tox{z_y}$. Since
$\theta(z_y)\in\fullfeasible{q'}{y}$, we have
$y\notin\forbidden{\theta}{z_y}
\supseteq
\forbidden{\nu}{z_y}$,
and hence $\nu(z_y)\in\fullfeasible{q'}{y}$.

If $z_y$ lies in a nontrivial component $C$, then the bounds of
$\fullfeasible{q'}{y}$ lie in $\compCons{C}$, while $\nu(z_y)$ and
$\theta(z_y)$ lie in the same interval of $\allintervals{C}$. This interval
is either contained in or disjoint from $\fullfeasible{q'}{y}$. Since
$\theta(z_y)\in\fullfeasible{q'}{y}$, also
$\nu(z_y)\in\fullfeasible{q'}{y}$.

Let $y\comp y'$ be an atom of $q'$. As in the proof of
Theorem~\ref{thm:cmp-criterion}, $z_y$ and $z_{y'}$ lie in one nontrivial
component, and $\nu\ceqord\theta$ on them transfers
$\theta(z_y)\comp\theta(z_{y'})$ to $\nu(z_y)\comp\nu(z_{y'})$.

\emph{Head.}
\looseness=-1 Let $i$ be a head position, so $h(x'_i)=\theta(x_i)$. If
$\theta(x_i)\neq\bot$, then $h(x'_i)\neq\bot$, and rule (1) with
$z=x_i$ gives $\nu'(x'_i)=\nu(x_i)$.
Otherwise $x_i\in V_t$ with $\nu(x_i)=\bot$, so $x_i$ is uncompared, and $h$
nulls $x'_i$, so $\nu'(x'_i)=\nu(z)$ for the variable $z$ at the position of
$x'_i$ in the fixed atom, where $\theta(z)=\bot$. If $\nu(z)\neq\bot$, then
$z\in V_n$, so $z$ is non-join and $\POS{q}{z}$ is that single position. By
the head-position claim, and since $x_i\in V_t\subseteq V_C$ is non-join,
$\POS{q}{z}=\POS{q'}{x'_i}=\POS{q}{x_i}$. Coveredness depends only on the
position set and $x_i$ is covered, so $z$ is covered, contradicting
$z\in V_n$. Hence $\nu(z)=\bot=\nu(x_i)$.

\looseness=-1 Thus $\nu'$ satisfies $q'$ over $D_0$ with $\nu'(\bar x')=\bar a$, so
$\bar a\in q'(D_0)$. Since $D_0$ and $\nu$ were arbitrary,
$q\subseteq_\bot q'$.
\end{proof}

As in Corollary~\ref{cor:canonical-violating}, the proof of
Theorem~\ref{thm:comb-criterion} yields the following stronger
fact: when containment fails, it fails at the tuple of a combined assignment
itself.

\begin{corollary}
\label{cor:combined-counterexample}
Let $q,q'\in\cmpcombclass$.  If $q\not\subseteq_\bot q'$, then
$ \theta(\bar x)\notin q'(\theta(D_q))$
for some combined assignment $\theta$.
\end{corollary}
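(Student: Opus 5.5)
The plan is to argue by contraposition, following the template of Corollary~\ref{cor:canonical-violating}. Suppose $q\not\subseteq_\bot q'$. Then there is a database $D_0$, a satisfying valuation $\nu$ of $q$ over $D_0$, and a tuple $\bar a=\nu(\bar x)$ with $\bar a\in q(D_0)\setminus q'(D_0)$. I would take the combined assignment $\theta$ built from $\nu$ in the proof of Theorem~\ref{thm:comb-criterion}. Nulls go to $V_n$ and to the members of $V_t$ nulled by $\nu$. Values mimicking $\nu$ are assigned by rule (a), and private fresh values by rule (b). The aim is to show $\theta(\bar x)\notin q'(\theta(D_q))$. We already have $\theta(\bar x)\in q(\theta(D_q))$, since $\theta$ is combined.

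Suppose instead that some satisfying valuation $h$ of $q'$ over $\theta(D_q)$ has $h(\bar x')=\theta(\bar x)$. The proof of Theorem~\ref{thm:comb-criterion} turns such an $h$ into $\nu'$, defined by rules (1) and (2). It then checks, in order, that $\nu'$ satisfies the relational atoms, keeps non-null variables non-null, satisfies the comparison atoms, and returns $\bar a$ on the head. Each of these checks uses only $\theta$, $\nu$, $h$, and facts about $\theta$. The global hypothesis that $q(D)\subseteq q'(D)$ for every $D\in\Dcomb(q,q')$ is used only to \emph{obtain} $h$. With $h$ given, $\nu'$ witnesses $\bar a\in q'(D_0)$, which is a contradiction.

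The main obstacle is one place where that proof uses the global hypothesis a second time: the head-position claim $\POS{q'}{x'_i}\subseteq\POS{q}{x_i}$. That claim comes from a different combined assignment $\theta_1$, and it feeds the non-null and head steps. I would first handle it with a case split. If $q'$ returns the $\theta_1$-tuple on $\theta_1(D_q)$, the claim holds as in that proof. If it does not, then $\theta_1$ itself is a combined assignment on which containment fails, and the corollary holds with $\theta:=\theta_1$. So either some combined assignment already gives the corollary, or all the head-position claims hold, and the transport argument above goes through for $\theta$. Finally, I would check that rule (b) and the well-definedness of rule (1) never appeal to containment, so that no other hidden use remains.
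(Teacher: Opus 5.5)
Your proposal is correct and is essentially the paper's argument. The paper notes that the converse direction of Theorem~\ref{thm:comb-criterion} uses its hypothesis only at tuples $\theta(\bar x)$ of combined assignments, namely the auxiliary $\theta_1$ of the head-position claim and the $\theta$ mimicking $\nu$, so if all combined assignments pass, containment follows; your explicit case split on $\theta_1$ is simply that observation unfolded.
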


\begin{proof}
The converse direction of Theorem~\ref{thm:comb-criterion} uses its hypothesis only at tuples
$\theta(\bar x)$.  Hence, if
$\theta(\bar x)\in q'(\theta(D_q))$ for every combined assignment $\theta$,
that proof turns an arbitrary satisfying valuation of $q$ over an arbitrary
database into a satisfying valuation of $q'$ with the same output.  Thus
$q\subseteq_\bot q'$.
\end{proof}

\begin{lemma}[Component pinning under nulls]
\label{lem:component-pinning-nulls}
Let $q,q'\in\cmpcombclass$, let $C$ be a nontrivial component of $\CG$, and
let $P$ be a set with $V_n\subseteq P\subseteq V_n\cup V_t$, that is, a
possible null pattern of a combined assignment.
Let $\OG_P$ be the opposite graph of $C$ restricted to the vertices outside
$P$, let $\cedge_P(C)$ be its cycle reverse edges, and let
$\sigma\colon\cedge_P(C)\to\{<,=,>\}$ be a choice whose trichotomy query
$q_\sigma$ is satisfiable. Then there is a single local assignment
$\theta_{P,\sigma}$ of the variables of $C$ outside $P$ with the following
property. Let $\theta$ be a combined assignment with null pattern $P$ that
gives the endpoints of every edge of $\cedge_P(C)$ the relation recorded by
$\sigma$, and let $\theta^*$ be obtained from $\theta$ by replacing its values
on the variables of $C$ outside $P$ with those of $\theta_{P,\sigma}$. If
$\theta(\bar x)\notin q'(\theta(D_q))$, then
$\theta^*(\bar x)\notin q'(\theta^*(D_q))$.
\end{lemma}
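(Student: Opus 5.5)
The plan is to reduce to the null-free machinery of Section~\ref{sec:addable}, applied to the query obtained by deleting the nulled variables, and then to check that the transport argument survives the nulls. Fix $P$ and $\sigma$. A variable of $P\subseteq V_n\cup V_t$ lies outside $\nonnull(q)$, so it occurs in no comparison atom of $q$. Consequently $E_q$ restricted to the vertices outside $P$ records exactly the order constraints that the non-$\perp$ part of a combined assignment with pattern $P$ must satisfy. The reverse edges of $\OG_P$ come only from induced comparisons whose realizers are non-null, since a comparison of $q'$ is never satisfied at $\perp$. I would therefore form the produced query $p_P$: it keeps the atoms of $q$, with the variables of $P$ regarded as frozen to $\perp$ and removed from $C$, and it adds the cases of $\sigma$. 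I would then check that (F1)--(F3) hold for $p_P$. (F1) and (F2) are immediate. (F3) holds because partial matching only shrinks when vertices are removed.

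Next I apply Lemma~\ref{lem:pinned-pairs} and Lemma~\ref{lem:cycle-completion} to $p_P$ with $\cedge_P(C)$ in place of $\cedge$. This is legitimate because $q_\sigma$ is satisfiable and decides every edge of $\cedge_P(C)$. The result is a completed query $\widehat p$ whose canonical assignments induce one local assignment on $C\setminus P$, and I take this assignment as $\theta_{P,\sigma}$. It depends only on $P$ and $\sigma$, as required. By Lemma~\ref{lem:cycle-completion}(3), it also satisfies every edge of $E_{\tox q}\setminus\cedge$ restricted to $\OG_P$.

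For the transport, let $\theta$ be as in the statement with $\theta(\bar x)\notin q'(\theta(D_q))$. Let $F$ be the set of reverse edges of $\OG_P$ that $\theta$ satisfies, and consider the graph $G$ on the vertices outside $P$ whose edges are $E_q$, the edges of $F$, and the atoms added in the construction of $\widehat p$. The intended argument has three parts.
\begin{enumerate}
\item Every added atom is either forced by $\sigma$, which $\theta$ respects by hypothesis, or addable. Hence $G$ has no strict cycle, and $\theta$ is constant on its strongly connected components. If $\theta$ conflicts with some addable orientation, I would instead iterate the argument of Theorem~\ref{thm:addable-edge-containment} step by step. At each step $\theta$ is replaced by the reordered assignment, and that assignment still fails.
\item Lemma~\ref{lem:reordering-transport}, run on the non-$\perp$ part of $\theta$, gives an assignment $\rho$ that agrees with $\theta$ outside $C\setminus P$ and still fails (Tr).
\item Since $\rho$ totally orders $C\setminus P$ consistently with $\widehat p$, with the same intervals, replacing its values on each interval by the representatives $\alpha^{C,I}_j$ in order preserves (Inj) and order. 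This replacement yields exactly $\theta^*$, so $\theta^*$ fails as well.
\end{enumerate}

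The main obstacle is re-verifying (Tr) in the presence of $\perp$. In the proof of Lemma~\ref{lem:reordering-transport}, a valuation $h$ of $q'$ over $\rho(D_q)$ is pulled back position by position through (Inj). With nulls, $h$ may send a non-join, uncompared variable of $q'$ to $\perp$. I would handle this by defining $h_0(y)=\perp$ whenever $h(y)=\perp$. Because $\rho$ and $\theta$ have the same null pattern $P$, the $\perp$ entries of $\rho(D_q)$ and $\theta(D_q)$ sit in the same positions, so relational atoms, head positions and the non-$\perp$ requirement on $\nonnull(q')$ all transfer unchanged. Comparison atoms of $q'$ only ever see non-$\perp$ realizers, and these lie in $C\setminus P$ or outside $C$. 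For them the case analysis of Lemma~\ref{lem:reordering-transport} applies verbatim, using the reverse edges of $\OG_P$. The point I would check most carefully is that an induced comparison undecided in $\OG$ but decided once $P$ is deleted is still handled correctly. Such a comparison is decided by the restricted atoms of $q$, which $\rho$ satisfies by (Validity).
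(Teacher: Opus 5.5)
Your argument works, but it runs in the opposite direction from the paper's.

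\emph{Your route.} You transport the failure forward. A null-aware version of Lemma~\ref{lem:reordering-transport} is applied to $E_q$, the satisfied reverse edges $F$ and the completion atoms of $\widehat p$. It turns the failing $\theta$ into a failing $\rho$ that realizes the total preorder of $\widehat p$ on $C\setminus P$. An order-isomorphic relabeling of $\rho$ onto the representatives then produces $\theta^*$. That relabeling needs a second transport, using that the constants of the relevant atoms of $q'$ lie in $\compCons{C}$ and that both realizers of a variable--variable atom lie in one component.

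\emph{The paper's route.} The paper argues by contraposition in a single step. It extracts the extremal property (P): $\theta_{P,\sigma}$ satisfies every reverse edge of $\OG_P$ lying on no directed cycle, and it realizes $\sigma$ on $\cedge_P(C)$. It then pulls a witness $h$ over $\theta^*(D_q)$ back to $\theta(D_q)$. An induced comparison true under $\theta^*$ is decided by $q$, fixed by $\sigma$, or excluded by (P). The only delicate point is that the pullback is well defined on merged classes and pinned constants.

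\emph{What each buys.} The paper's route avoids both verifying the hypotheses of the reordering lemma and the relabeling step. Yours reuses the template of Theorem~\ref{thm:addable-edge-containment}, at the cost of both.

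\emph{Fixes to step~1 and to your use of the completion lemma.} In step~1, addability is not the right justification, and the fallback iteration is unnecessary. Protection already entails every reverse edge of $\OG_P$ outside $\cedge_P(C)$. Since $\theta$ respects $\sigma$, the atoms of $\sigma$ entail every edge of $F\cap\cedge_P(C)$. So every edge of $G$ is entailed by the satisfiable query $\widehat p$, and $F$ is in fact redundant. The completion introduces no equality beyond those entailed by $q_\sigma$, and $\theta$ satisfies $q_\sigma$. Hence $G$ has no strict cycle and $\theta$ is constant on its strongly connected components.

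The property you quote from Lemma~\ref{lem:cycle-completion}(3) should be stated for the reverse edges of $\OG_P$ outside $\cedge_P(C)$. These include edges of $\cedge$ whose cycles all pass through $P$. You should also check, as the paper does, that the values of $\theta_{P,\sigma}$ pass the rank test of $q$ itself, so that $\theta^*$ is again a combined assignment.
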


\begin{proof}
\looseness=-1 A nulled variable is uncompared, so it has effective domain $\Q$ and carries
only reverse edges. Hence $\OG_P$ keeps the forced edges $E_q$ of $C$ on the
surviving vertices, with the original constants, intervals, and
representatives, and the atoms of $q_\sigma$ relate surviving vertices and
decide the comparison encoded by every edge of $\cedge_P(C)$. Lemmas~\ref{lem:pinned-pairs}
and~\ref{lem:cycle-completion} therefore apply to $q_\sigma$ read on $\OG_P$.
Merging the variables that $q_\sigma$ entails equal and completing the order
yields a single local assignment $\theta_{P,\sigma}$ on the surviving
variables, which depends only on $C$, $P\cap\vars(C)$, and $\sigma$. Its
values are canonical values of $q$: the rank check in the proof of
Lemma~\ref{lem:cycle-completion} counts $\ell=|\actvars{C}{I}|$ for $q$ itself
and uses only that no lower class contains a variable $y$ with
$q\models x\le y$, which holds here as well, and a variable pinned to a
constant $k$ takes $k\in\feasible{x}$. By Lemma~\ref{lem:cycle-completion}(3)
and the atoms of $q_\sigma$, the assignment $\theta_{P,\sigma}$ satisfies every
reverse edge of $\OG_P$ that lies on no directed cycle and realizes $\sigma$
on $\cedge_P(C)$. \hfill(P)

\looseness=-1 Now let $\theta$ be as stated. Its non-null part on $C$ satisfies the
comparison atoms of $q$ and realizes $\sigma$, so it satisfies every atom of
$q_\sigma$. Suppose that a satisfying valuation $h$ of $q'$ over
$\theta^*(D_q)$ has $h(\bar x')=\theta^*(\bar x)$. Set $h'(y)=\bot$ if
$h(y)=\bot$, and otherwise $h'(y)=\theta(z)$ for any $z$ with
$\theta^*(z)=h(y)$.

\looseness=-1 This is well defined. Suppose that $\theta^*(z_1)=\theta^*(z_2)\neq\bot$. If neither
$z_j$ is a surviving variable of $C$, then $\theta$ and $\theta^*$ agree on
both. If both are, they lie in one merged class, since $\theta_{P,\sigma}$
separates distinct classes, so $q_\sigma$ entails $z_1=z_2$ and $\theta$
equates them. If exactly one is, the shared value is a constant $k$ to which
that variable is pinned, because representatives of $C$ are taken by no
variable outside $C$, and then $\theta$ gives it $k$ as well. Thus
$\theta(z_1)=\theta(z_2)$ in every case.

\looseness=-1 Consequently $h'$ sends every relational atom of $q'$ into $\theta(D_q)$,
position by position. It nulls exactly the variables that $h$ nulls, hence no
variable of $\nonnull(q')$, and $h'(\bar x')=\theta(\bar x)$: at a head
position with $h(x'_i)\neq\bot$ we have $\theta^*(z)=\theta^*(x_i)$ for the
chosen $z$, so $\theta(z)=\theta(x_i)$, and otherwise both assignments null
$x_i$.

Let $\alpha$ be a comparison atom of $q'$, satisfied by $h$. Choose its
realizers positionally as in the proof of
Lemma~\ref{lem:reordering-transport}, and write the induced condition as
$w_1\comp w_2$, a constant standing for itself. Realizers are non-null, so a
realizer in $C$ survives, and for a variable--variable atom both realizers
lie in one component, as in the proof of Theorem~\ref{thm:cmp-criterion}. If
no realizer lies in $C$, then $\theta$ and $\theta^*$ agree on $w_1,w_2$.
Otherwise $w_1,w_2$ are vertices of $\OG_P$, and there are three cases. If
the comparison atoms of $q$ and the order of the constants decide
$w_1\comp w_2$, its truth value is the same under $\theta$ and $\theta^*$. If
its reverse edge lies in $\cedge_P(C)$, the same holds, since $\sigma$ records
the relation under $\theta$ and $\theta_{P,\sigma}$ realizes $\sigma$.
Otherwise its reverse edge lies on no directed cycle of $\OG_P$, so by (P) it
is satisfied by $\theta^*$, and $w_1\comp w_2$ fails under $\theta^*$,
contradicting that $h$ satisfies $\alpha$. Hence $w_1\comp w_2$ holds under
$\theta$, and $h'$ satisfies $\alpha$.

Thus $h'$ witnesses $\theta(\bar x)\in q'(\theta(D_q))$, contradicting the
assumption on $\theta$.
\end{proof}

\begin{proof}[Proof of Theorem~\ref{thm:comb-full-mult}.]
  \looseness=-1 By Theorem~\ref{thm:comb-criterion} containment is decided by the combined canonical family, and by
Corollary~\ref{cor:combined-counterexample} a failure appears at the tuple
$\theta(\bar x)$ of some combined assignment.  Fix a null pattern $P$ and
apply Lemma~\ref{lem:component-pinning-nulls} to the nontrivial components one
at a time.  A violating assignment is thereby replaced by one that agrees with
$\theta_{P,\sigma}$ on every nontrivial component and still violates, so only
such assignments are needed. Since
$ \cedge_P(C)\subseteq\cedge(C)$
for every $P$, indexing the cases by maps on the original cycle-edge sets
leaves at most $3^{|\cedge(q)|}$ cases.

Within a fixed case of the trichotomy, a variable in a nontrivial component
takes the value assigned by $\theta_{P,\sigma}$, unless it is mapped to
$\bot$. Thus, a variable in $V_t$ has two possible choices, while every other
variable in such a component has only one. Variables outside nontrivial
components retain the choices defined in Section~\ref{sec:combined}.

\looseness=-1 Now consider a legal separator $S$. Since $\RG^+$ contains a clique on every
nontrivial component, each such component is contained in $S\cup C_i$ for
some component $C_i$ of $\RG^+-S$. The choices made for the variables of a
nontrivial component must be considered together, since they determine a
single assignment for that component. In particular, if the component meets
both $S$ and $C_i$, the choices on its two parts are not independent. We
therefore use one choice per variable and apply the construction of
Lemma~\ref{lem:s-exhaustive-size} with these choices in place of the
canonical values.

\looseness=-1 This construction pairs every choice vector on $S$ with a choice vector on
each $C_i$. Not every such pair need be realizable in the fixed case. Indeed,
if a nontrivial component meets both $S$ and some $C_j$, the combined choices
may make its trichotomy query unsatisfiable. Whenever a member is
unrealizable on some $C_j$, we replace
the choices on $C_j$ by those of any realizable assignment that agrees with
the given choices on $S$. If no such assignment exists, we simply omit that
member.

\looseness=-1 For the pair $(\theta,i)$ used to establish $S$-exhaustiveness in
Lemma~\ref{lem:s-exhaustive-size}, the member is never omitted, since
$\theta$ itself is a realizable assignment with the same choices on $S$, and
this replacement can affect only
components outside $S\cup C_i$. It therefore does not change the required
agreement on $S\cup C_i$. Hence the resulting family is still $S$-exhaustive,
and its size remains bounded by $\left(\prod_{v\in S}\valcnt(v)\right)
\cdot
\max_i\prod_{u\in C_i}\valcnt(u)$.

\looseness=-1 The stitching of Theorem~\ref{thm:s-exhaustive-fixed} applies to this subfamily with two additions.
First, covering-matches are read with $\bot$ as a canonical value of every
variable of $V_n\cup V_t$.  The cohort of $\bot$ is
$V_\bot=V_n\cup V_t$, and $\bot$ is a witness value for $y$ whenever this
cohort covers $\POS{q'}{y}$.  The effective-domain condition is vacuous because
a satisfying valuation nulls only variables of $q'$ occurring in no comparison
atom.  Lemmas~\ref{lem:atom-locality} and~\ref{lem:realizer} are unaffected, since they use only that the value
placed at a variable of $q'$ is a witness value for it.  The head step is
unaffected for the same reason: a head variable sent to $\bot$ lies in
$V_\bot$.

Second, the stitching step that in Section~\ref{sec:vc-components} was restricted to
variable--constant atoms of $q'$ is repaired by the cohort edges of $\RG^+$.
For a variable--variable atom $y\comp y'$ of $q'$, these edges place $\cmctox{y}\cup\cmctox{y'}$
outside the separator in one component $C_i$.  Hence the stitched valuation
takes both values from the witness $\mu_i$ of that component: directly when a
variable is served from $C_i$, through agreement of the witnesses on $S$ when
one is served from the separator, and from any witness when both cohorts lie in
$S$.  Since $\mu_i$ satisfies the comparison, so does the stitched valuation.

Containment therefore holds if and only if every member of every case
passes the test $\psi(\bar x)\in q'(\psi(D_q))$.  A violating member witnesses
the failure of containment directly, and a failure of containment yields a
violating combined assignment, then a violating assignment of some case, and
then, by the stitching, a violating member of that case.  Each case
contributes at most
$\Bigl(\prod_{v\in S}\valcnt(v)\Bigr)
  \cdot
  \max_i\prod_{u\in C_i}\valcnt(u)$
databases.  With at most $3^{|\cedge(q)|}$ cases this gives the stated bound.
The numeric estimate follows from
$\valcnt(u)\leq 2|\tox{u}|+2\leq 2\Delta+2$ outside nontrivial components, by Proposition~\ref{prop:witness_set_size}, and from
$\valcnt(u)\leq2$ inside them.
\end{proof}

\begin{lemma}[Feedback pinning under nulls]
\label{lem:feedback-pinning-nulls}
\looseness=-1 Let $q,q'\in\cmpcombclass$, let $C$ be a nontrivial component of $\CG$, and
assume that the opposite graph of $C$ has no non-strict cycle.  Let $P$ be a
set with $V_n\subseteq P\subseteq V_n\cup V_t$, that is, a possible null
pattern of a combined assignment, let $\OG_P$ be as in
Lemma~\ref{lem:component-pinning-nulls}, and let
$\theta$ be a combined assignment with null pattern $P$ such that $\theta(\bar x)\notin q'(\theta(D_q))$. Then:
\begin{enumerate}
  \item The set $U$ of cycle reverse edges of $\OG_P$ violated by $\theta$ is
  an $\mfac$ of $\OG_P$.

  \item Every minimal $\mfac$ of $\OG_P$ is the set of surviving edges of a
  minimal $\mfac$ of the opposite graph of $C$.  In particular, if $\ell_C$
  is the number of minimal $\mfac$s of the original opposite graph, then
  $\OG_P$ has at most $\ell_C$ minimal $\mfac$s.

  \item For every minimal $\mfac$ $E$ of $\OG_P$ there is a single local
  assignment $\theta_{P,E}$ of the variables of $C$ outside $P$ such that, if
  $E\subseteq U$, the combined assignment $\theta^*$ that agrees with
  $\theta_{P,E}$ on the variables of $C$ outside $P$ and with $\theta$
  everywhere else satisfies $  \theta^*(\bar x)\notin q'(\theta^*(D_q))$.
  
\end{enumerate}
\end{lemma}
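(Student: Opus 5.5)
The plan is to transfer the null-free arguments of Theorem~\ref{thm:feedback-query-containment} to the restricted graph $\OG_P$, using the observation from the proof of Lemma~\ref{lem:component-pinning-nulls}. A nulled variable is uncompared, so its effective domain is $\Q$ and it carries only reverse edges. Hence $\OG_P$ is the induced subgraph of the opposite graph $\OG$ of $C$ on the surviving vertices. It keeps all edges of $E_q$ among them, together with the original constants, intervals and representatives. In particular every directed cycle of $\OG_P$ is a cycle of $\OG$, so $\cedge_P(C)\subseteq\cedge(C)$, and $\OG_P$ inherits the absence of non-strict cycles.

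For claim (1), I would argue by contradiction, as in the converse direction of Theorem~\ref{thm:feedback-query-containment}. Suppose $(V_P,E_q\cup(E_{\tox q}\setminus U))$, restricted to surviving vertices, contains a directed cycle $\gamma$. Every reverse edge on $\gamma$ is a cycle reverse edge of $\OG_P$ not in $U$, so $\theta$ satisfies it. The $E_q$ edges on $\gamma$ are satisfied because the non-null part of $\theta$ satisfies the comparison atoms of $q$ and fixes the constants. Thus $\theta$ is nondecreasing around $\gamma$, hence constant on it. But $\gamma$ contains a strict edge, since $\OG_P$ has no non-strict cycle, which is a contradiction. Note that this step does not use the violation hypothesis on $\theta$.

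For claim (2), I would use the feedback-vertex-set digraph $T$ from the polynomial-delay part of Theorem~\ref{thm:feedback-query-containment}. The cycles of $\OG_P$ are exactly the cycles of $\OG$ that avoid $P$, so $T_P$, the analogous digraph for $\OG_P$, is obtained from $T$ as follows. Edges of $\cedge(C)\setminus\cedge_P(C)$ are deleted, and $T$-edges whose witnessing $E_q$-path passes through $P$ are dropped. Since $P$ consists of uncompared variables, which carry no $E_q$ edges, no $E_q$-path passes through $P$, and $T_P$ is simply the induced subdigraph of $T$ on the surviving reverse edges whose endpoints lie outside $P$. For an induced subdigraph, every minimal feedback vertex set $F$ of $T_P$ extends to a minimal feedback vertex set $F'$ of $T$ with $F'\cap V(T_P)=F$. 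To see this, take $F\cup(V(T)\setminus V(T_P))$ and shrink it to a minimal set. Shrinking never removes a vertex of $F$, because every cycle of $T_P$ is a cycle of $T$ and $F$ is minimal for those cycles. The map $F\mapsto F'$ is injective, so there are at most $\ell_C$ minimal $\mfac$s. I expect the delicate point here to be checking that the correspondence $\mfac\leftrightarrow$ feedback vertex set still holds when $E_q$ restricted to $\OG_P$ is acyclic. The proof of Theorem~\ref{thm:feedback-query-containment} already handles the cyclic case.

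For claim (3), I would follow the proof of Lemma~\ref{lem:component-pinning-nulls}, replacing $q_\sigma$ by the feedback query $q_E$ formed on $\OG_P$. The first step is to show that $q_E$ is satisfiable. Given $E\subseteq U$, apply Lemma~\ref{lem:reordering-transport} to the DAG $H=(V_P,E_q\cup(E_{\tox q}\setminus E))$ and the non-null part of $\theta$. Then $q_E$ decides every edge of $\cedge_P(C)$, so Lemmas~\ref{lem:pinned-pairs} and~\ref{lem:cycle-completion} yield a single local assignment $\theta_{P,E}$. This assignment uses canonical values, satisfies every non-cycle reverse edge of $\OG_P$, and realizes the relations of $q_E$ on $\cedge_P(C)$.

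The main obstacle is the transport back to $\theta$. Unlike in Lemma~\ref{lem:component-pinning-nulls}, $\theta$ need not satisfy $q_E$ on the edges of $U\setminus E$. Even on $E$, the inclusion $E\subseteq U$ gives $\theta$ the violated relation, which agrees with $q_E$, but on $\cedge_P(C)\setminus E$ the relations under $\theta$ and $\theta_{P,E}$ may differ. I would therefore first pass from $\theta$ to the reordered assignment $\rho$ of Lemma~\ref{lem:reordering-transport} applied to $H$. By (Tr), $\rho$ is still violating, and it satisfies every reverse edge that $\theta$ satisfies, since all of these lie in $H$. As in the proof of Theorem~\ref{thm:feedback-query-containment}, $\rho$ satisfies $q_E$, and it agrees with $\theta$ off $C$ and on the null pattern. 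Now $\rho$ satisfies the atoms of $q_E$, so the well-definedness and comparison-transport argument of Lemma~\ref{lem:component-pinning-nulls} applies verbatim with $\rho$ in place of $\theta$ and $E$ in place of $\sigma$. This gives $\theta^*(\bar x)\notin q'(\theta^*(D_q))$.

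One point remains to verify: that Lemma~\ref{lem:reordering-transport} tolerates $\perp$ values outside $C$. The only place this matters is (Inj), whose proof treats variables outside $C$ as fixed. Since $\perp$ is never shared with a variable of $C$, I expect (Inj) to go through unchanged.
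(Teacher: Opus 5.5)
Your proof is correct, and part (1) is exactly the paper's argument. Parts (2) and (3) take detours that the paper avoids.

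\textbf{Part (2).} The paper never builds $T$. Let $F$ be the set of cycle reverse edges of $\OG$ that have an endpoint in $P$. Then $E\cup F$ is an $\mfac$ of $\OG$: a cycle through a nulled vertex uses reverse edges lying in $F$, and every other cycle lies in $\OG_P$ and is broken by $E$. The paper shrinks $E\cup F$ to a minimal $E''$. The surviving part of $E''$ still breaks every cycle of $\OG_P$, so by minimality it equals $E$. This is your extend-and-shrink idea stated directly on edge sets. It dispenses with the $\mfac$/feedback-vertex-set correspondence, and so also with the acyclicity of $E_q$ that you flag. That acyclicity does hold here: $\theta$ satisfies $q$, there is no non-strict cycle, and $\OG$ and $\OG_P$ share $E_q$.

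\textbf{Part (3).} The obstacle you identify is not real, because the transport from $\theta^*$ to $\theta$ is one-sided, and the paper does it directly. Take an induced condition with a realizer in $C$ that $h$ uses; it holds under $\theta^*$.
\begin{itemize}
\item If $q$ decides it, $\theta$ gives it the same truth value.
\item If its reverse edge lies in $E$, then $E\subseteq U$ says $\theta$ violates that edge, i.e.\ satisfies the condition.
\item Its reverse edge cannot lie in $H$: $\theta_{P,E}$ satisfies every edge of $H$, so the condition would fail under $\theta^*$.
\end{itemize}
Hence the behaviour of $\theta$ on $U\setminus E$ never matters, and no intermediate $\rho$ is needed. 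The paper also gets well-definedness of $h'$ cheaply. The opposite graph of $q_E$ is acyclic, so $\theta_{P,E}$ is injective with private open-interval values.

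Your route through $\rho$ is sound, but it obliges you to re-prove (Tr) of Lemma~\ref{lem:reordering-transport} for combined assignments. That needs more than (Inj): $h_0$ must null exactly the variables $h$ nulls, and the realizers of compared variables of $q'$ must lie outside $P$.

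Finally, satisfiability of $q_E$ does not need $E\subseteq U$, since any assignment reordered along the DAG $H$ works. You should say so, because $\theta_{P,E}$ must depend only on $P$ and $E$, not on $\theta$.
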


\begin{proof}
\looseness=-1 Every directed cycle of $\OG_P$ is a directed cycle of the opposite graph of
$C$ and therefore contains a strict edge. Write $V^P$ for the vertices of
$\OG_P$ and $E^P_{\tox q}$ for its reverse edges, and call an edge of the
opposite graph of $C$ \emph{surviving} if both its endpoints lie outside $P$.

\emph{Part (1).} Suppose that $(V^P,E_q\cup(E^P_{\tox q}\setminus U))$ has a
directed cycle $\gamma$. Every reverse edge on $\gamma$ lies on a directed
cycle of $\OG_P$ and is not in $U$, so $\theta$ satisfies it, and $\theta$
satisfies every edge of $E_q$. Hence $\theta$ is nondecreasing around
$\gamma$ and constant on it, contradicting the strict edge of $\gamma$. So
$U$ is an $\mfac$ of $\OG_P$, and it contains a minimal one.

\emph{Part (2).} Let $E$ be a minimal $\mfac$ of $\OG_P$, and let $F$ be the
set of cycle reverse edges of the opposite graph of $C$ with an endpoint in
$P$. A vertex in $P$ carries only reverse edges, so every directed cycle
through such a vertex uses an edge of $F$, and every other directed cycle
lies in $\OG_P$ and uses an edge of $E$. Hence $E\cup F$ is an $\mfac$ of the
opposite graph of $C$ and contains a minimal one, say $E''$. The surviving
edges of $E''$ form a subset of $E$ that still breaks every cycle of $\OG_P$,
since a cycle of $\OG_P$ avoiding them avoids all of $E''$. By minimality of
$E$ this subset is $E$ itself. Thus $E$ is the surviving part of $E''$, and
distinct $E$ have distinct $E''$, so $\OG_P$ has at most $\ell_C$ minimal
$\mfac$s.

\looseness=-1 \emph{Part (3), the assignment.} Let $E$ be a minimal $\mfac$ of $\OG_P$ and
let $H=(V^P,E_q\cup(E^P_{\tox q}\setminus E))$, a DAG. The feedback query
$q_E$ adds the relation asserted by every cycle reverse edge of $\OG_P$
outside $E$ and the negation of every edge of $E$. Its atoms are satisfiable:
Lemma~\ref{lem:reordering-transport} applied to $H$ gives an assignment that
is strict along every edge of $H$, and the path from $v$ to $u$ that $H$
contains for each $e=(u\overset{\comp}{\to}v)\in E$, by minimality of $E$,
makes it satisfy the added negation $v\altcomp u$.

\looseness=-1 The opposite graph of $q_E$, read on $\OG_P$, is acyclic, because each of its
edges is realized by a directed path of $H$. An edge of $E_q$ or a kept cycle
reverse edge is an edge of $H$. The negation added for $e\in E$ is realized
by the path from $v$ to $u$ just mentioned. A tightened endpoint edge
coincides with the added variable--constant atom. A reverse edge of $q_E$
comes from a reverse edge of $\OG_P$ that $q_E$ leaves undecided, since
adding atoms only shrinks partial matches, so it lies outside
$\cedge_P(C)$ and belongs to $H$. A directed cycle would therefore expand to
a directed cycle of the DAG $H$.

\looseness=-1 Lemma~\ref{lem:cycle-completion} applied to $q_E$ on $\OG_P$ now yields a
single local assignment $\theta_{P,E}$ of the surviving variables, with
values canonical for $q$ by the rank argument in the proof of
Lemma~\ref{lem:component-pinning-nulls}. By Lemma~\ref{lem:cycle-completion}(3)
and the atoms of $q_E$, it satisfies every edge of $H$ and violates every
edge of $E$. Since the opposite graph of $q_E$ is acyclic, no two surviving
variables are entailed equal and no variable is pinned to a constant, so all
values of $\theta_{P,E}$ are open-interval representatives, distinct for
distinct variables and taken by no variable outside $C$.

\emph{Part (3), the transport.} Assume $E\subseteq U$ and let $\theta^*$ be
as stated. Suppose that a satisfying valuation $h$ of $q'$ over
$\theta^*(D_q)$ has $h(\bar x')=\theta^*(\bar x)$, and define $h'$ from $h$
as in the proof of Lemma~\ref{lem:component-pinning-nulls}. Well-definedness
follows from the injectivity just noted, and relational atoms, the null
pattern, and the head transport as there. Let $\alpha$ be a comparison atom
of $q'$ satisfied by $h$, with realizers chosen positionally and induced
condition $w_1\comp w_2$. If no realizer lies in $C$, then $\theta$ and
$\theta^*$ agree on $w_1,w_2$. Otherwise $w_1,w_2$ are vertices of $\OG_P$
and there are three cases. If the atoms of $q$ and the order of the constants
decide the condition, its truth value is the same under $\theta$ and
$\theta^*$. If its reverse edge lies in $E$, then $E\subseteq U$ says that
$\theta$ violates it, and $\theta_{P,E}$ violates it too, so the condition
holds under both assignments. Otherwise its reverse edge lies in $H$ and is
satisfied by $\theta^*$, so the condition fails under $\theta^*$,
contradicting that $h$ satisfies $\alpha$. Hence $h'$ satisfies every
comparison atom of $q'$ and witnesses $\theta(\bar x)\in q'(\theta(D_q))$,
a contradiction.
\end{proof}

\begin{proof}[Proof of the complexity claims of Corollary~\ref{cor:fpt-full}.]
 \looseness=-1 By Corollary~\ref{cor:combined-counterexample} and the proof of
Theorem~\ref{thm:comb-full-mult}, containment holds if and only if
$\psi(\bar x)\in q'(\psi(D_q))$ for every combined assignment $\psi$ in the
families of the at most $3^{|\cedge(q)|}$ cases.  Assume that $\Delta$, $w$,
and $|\cedge(q)|$ are constant.  Some legal separator $S$ of $\RG^{+}$ has
$|S|+\max_i|C_i|\le w$, and in particular $|S|\le w$, so such a separator is
found by enumerating the polynomially many subsets of size at most $w$ and
checking legality and width directly.  The witness sets, the canonical values,
the graph $\RG^{+}$, and the local assignments of
Lemma~\ref{lem:cycle-completion} are all computable in polynomial time, and
each case contributes at most $(2\Delta+2)^{w}$ assignments, so the whole
family consists of constantly many databases, each of size polynomial in the
queries.  Testing $\psi(\bar x)\in q'(\psi(D_q))$ amounts to guessing a
valuation of $q'$ and verifying its atoms, so the conjunction of constantly
many such tests lies in \textnormal{NP}.  When $q'$ admits polynomial
evaluation, each test runs in polynomial time, and containment is in
\textnormal{P}.
\end{proof}

\begin{proof}[Proof of the feedback claim of Corollary~\ref{cor:fpt-full}.]
 \looseness=-1 Replace Lemma~\ref{lem:component-pinning-nulls} by
Lemma~\ref{lem:feedback-pinning-nulls} in the proof of Theorem~\ref{thm:comb-full-mult}.  On each
nontrivial component, a violating assignment with null pattern $P$ violates a
set of cycle reverse edges that is an $\mfac$ of $\OG_P$ and contains a
minimal one.  Forcing the components one at a time preserves the violation.  A
case is now a minimal $\mfac$ of the original opposite graph of $C$, and under
$P$ it contributes the local assignment of its surviving part exactly when
that surviving part is a minimal $\mfac$ of $\OG_P$, and nothing otherwise.
By part~(2) of Lemma~\ref{lem:feedback-pinning-nulls}, every local assignment
that is needed arises this way.  Thus a component contributes at most $\ell_C$
cases, and the count within a case is unchanged.  The factor
$3^{|\cedge(q)|}$ is therefore replaced by $ \prod_C \ell_C$.
\end{proof}

\begin{example}[The combined test, worked]
\label{exm:combined-worked}
A genealogy records each child's birth year in
$\mathsf{Person}(\mathit{parent},\mathit{child},\mathit{birthYr})$, where the
parent may be unknown, and maintenance events are timestamped by the hour in
$\mathsf{Backup}$, $\mathsf{Audit}$, $\mathsf{Archive}$, and $\mathsf{Purge}$.
Consider
\[
\begin{aligned}
q(p_1,p_2)\leftarrow{}&
 \mathsf{Person}(p_1,p_2,b_2),\ \mathsf{Person}(p_2,p_3,b_3),\
 \mathsf{Backup}(x_1),\ \mathsf{Audit}(x_2),\\
&\mathsf{Archive}(x_3),\ \mathsf{Purge}(x_4),\quad 0<x_1,x_2,x_3,x_4<10,\\
q'(u_1,u_2)\leftarrow{}&
 \mathsf{Person}(u_1,u_2,c_2),\ \mathsf{Person}(u_1,u_3,c_3),\
 \mathsf{Backup}(s_1),\ \mathsf{Audit}(t_1),\\
&\mathsf{Archive}(g_1),\ \mathsf{Purge}(h_1),\
 \mathsf{Backup}(s_2),\ \mathsf{Audit}(t_2),\\
& s_1\le t_1,\ t_1\le g_1,\ g_1\le h_1,\ t_2\le s_2.
\end{aligned}
\]
\looseness=-1 Query $q$ asks for a grandparent--parent pair and four  events in
the first ten hours. Query $q'$ asks for a parent with two children, and for
events in which a backup precedes an audit, an audit an archive, and an
archive a purge, while a second audit follows a second backup.

\looseness=-1 The join variables are
$\join(q)=\{p_2\}$ and $\join(q')=\{u_1\}$, and the compared variables are
$\compvars(q)=\{x_1,\dots,x_4\}$ and $\compvars(q')=\{s_1,s_2,t_1,t_2,g_1,h_1\}$.
The non-join variables
$p_1,p_3$ are covered, by $u_1$ and $u_2$, and only $p_1$ shares its position
set with a variable of $\nonnull(q')$, namely $u_1$. Hence $V_t=\{p_1\}$,
$V_n=\{b_2,b_3\}$, and $V_f$ holds the other six variables. The birth years
are nulled outright, and only the unknown grandparent is toggled.

\looseness=-1 
 We compute the three parameters of Theorem~\ref{thm:comb-full-mult}.
The match sets are $\tox{p_1}=\{u_1\}$, $\tox{p_2}=\{u_1,u_2,u_3\}$,
$\tox{p_3}=\{u_2,u_3\}$, $\tox{b_2}=\tox{b_3}=\{c_2,c_3\}$,
$\tox{x_1}=\{s_1,s_2\}$, $\tox{x_2}=\{t_1,t_2\}$, $\tox{x_3}=\{g_1\}$, and
$\tox{x_4}=\{h_1\}$, so $\Delta=3$. The variable--variable atoms of $q'$
induce $x_1\le x_2$, $x_2\le x_3$, $x_3\le x_4$, and $x_2\le x_1$, none
decided by $q$, so $\CG$ has the single nontrivial component
$C=\{x_1,\dots,x_4\}$, with $\compCons C=\{0,10\}$. These comparisons
contribute the strict reverse edges $x_2\to x_1$, $x_3\to x_2$,
$x_4\to x_3$, and $x_1\to x_2$, whose only directed cycle is
$x_1\to x_2\to x_1$. Hence $|\cedge(q)|=2$, and the two singleton edge sets
are the minimal $\mfac$s, so $\ell_C=2$. For the width, note that each
isolated variable has one canonical value, because the variables of $q'$
matching it carry no comparison. Hence $\valcnt(p_1)=2$, as $p_1$ may also be nulled, while
$\valcnt(p_2)=\valcnt(p_3)=\valcnt(b_2)=\valcnt(b_3)=1$, and
$\valcnt(x_i)=1$ because a case pins $C$ to one value and $x_i\in V_f$.
Moreover, $\RG^{+}$ is the disjoint union of a clique on
$\{p_1,p_2,p_3,b_2,b_3\}$, from the two $\mathsf{Person}$ atoms and their
covering-matches, and a clique on $C$. So the empty separator is legal and
$w=5$.

\looseness=-1 All four variables of $C$
are active on $(0,10)$ and $q$ entails no order
among them, so each keeps all four representatives and $C$ has $4^4=256$
local assignments, which the two toggles of $p_1$ double to
$|\Dcomb(q,q')|=512$. Theorem~\ref{thm:comb-full-mult} tests instead
$3^{|\cedge(q)|}\cdot\max\{2,1\}=18$ canonical databases, and the feedback
improvement of Corollary~\ref{cor:fpt-full} only $\ell_C\cdot 2=4$. Adding
further maintenance events lengthens the induced chain while leaving
$\cedge(q)$ and the value counts unchanged.
\end{example}

\end{document}